\newcommand{\supp}{\mathop{support}}
\newcommand{\suppfind}[1]{support-finding$_{{#1}}$}
\newtheorem{theorem}{Theorem}
\newtheorem{lemma}{Lemma}
\newtheorem{corollary}{Corollary}
\newtheorem{claim}{Claim}
\newtheorem{remark}{Remark}
\DeclareMathOperator*{\E}{\mathbb{E}}
\let\Pr\relax
\DeclareMathOperator*{\Pr}{\mathbb{P}}
\newcommand{\success}{\textsf{SUCC}\xspace}
\newcommand{\diane}{\mathsf{Diane}}
\newcommand{\enc}{\textsf{ENC}\xspace}
\newcommand{\dec}{\textsf{DEC}\xspace}
\newcommand{\aug}{\mathbf{AugIndex}\xspace}
\newcommand{\s}{\textsf{s}\xspace}
\newcommand{\R}{\mathbb{R}}
\newcommand{\F}{\mathbb{F}}
\newcommand{\sketch}{\mathsf{Alice}}
\newcommand{\query}{\mathsf{Bob}}
\newcommand{\eps}{\varepsilon}
\newcommand{\ur}{\mathbf{UR}\xspace}
\newcommand{\randcom}{\mathbf{R}}
\newcommand{\findup}[1]{\textsf{FindDuplicate}$({#1})$\xspace}
\newcommand{\poly}{{\mathrm{poly}}}
\title{Optimal lower bounds for universal relation, and for samplers and finding duplicates in streams\footnote{This paper is a merger of \cite{NelsonPW17},
    and of work of Kapralov, Woodruff, and Yahyazadeh.}}
\author{
  Michael Kapralov\thanks{EPFL. \texttt{michael.kapralov@epfl.ch}.}
  \and Jelani Nelson\thanks{Harvard University. \texttt{minilek@seas.harvard.edu}. Supported by NSF grant IIS-1447471 and
   CAREER award CCF-1350670, ONR Young Investigator award N00014-15-1-2388, and a Google Faculty Research Award.}
  \and Jakub Pachocki\thanks{OpenAI. \texttt{jakub@openai.com}. Work done while affiliated with Harvard University, under the support of ONR grant N00014-15-1-2388.}
  \and Zhengyu Wang\thanks{Harvard University. \texttt{zhengyuwang@g.harvard.edu}. Supported by NSF grant CCF-1350670.}
  \and David P. Woodruff\thanks{IBM Research Almaden. \texttt{dpwoodru@us.ibm.com.}}
  \and Mobin Yahyazadeh\thanks{Sharif University of Technology. \texttt{mn.yahyazadeh@gmail.com}. Work done while an intern at EPFL.}}
\begin{document}

\setcounter{page}{0}

\maketitle

\thispagestyle{empty}

\begin{abstract}
In the communication problem $\ur$ (universal relation) \cite{KarchmerRW95}, Alice and Bob respectively receive $x, y \in\{0,1\}^n$ with the promise that $x\neq y$. The last player to receive a message must output an index $i$ such that $x_i\neq y_i$. We prove that the randomized one-way communication complexity of this problem in the public coin model is exactly $\Theta(\min\{n,\log(1/\delta)\log^2(\frac n{\log(1/\delta)})\})$ for failure probability $\delta$. Our lower bound holds even if promised $\mathop{support}(y)\subset \mathop{support}(x)$. As a corollary, we obtain optimal lower bounds for $\ell_p$-sampling in strict turnstile streams for $0\le p < 2$, as well as for the problem of finding duplicates in a stream. Our lower bounds do not need to use large weights, and hold even if promised $x\in\{0,1\}^n$ at all points in the stream. 

We give two different proofs of our main result. The first proof demonstrates that any algorithm $\mathcal A$ solving sampling problems in turnstile streams in low memory can be used to encode subsets of $[n]$ of certain sizes into a number of bits below the information theoretic minimum. Our encoder makes adaptive queries to $\mathcal A$ throughout its execution, but done carefully so as to not violate correctness. This is accomplished by injecting random noise into the encoder's interactions with $\mathcal A$, which is loosely motivated by techniques in differential privacy. Our correctness analysis involves understanding the ability of $\mathcal A$ to correctly answer adaptive queries which have positive but bounded mutual information with $\mathcal A$'s internal randomness, and may be of independent interest in the newly emerging area of adaptive data analysis with a theoretical computer science lens. Our second proof is via a novel randomized reduction from Augmented Indexing \cite{MiltersenNSW98} which needs to interact with $\mathcal A$ adaptively. To handle the adaptivity we identify certain likely interaction patterns and union bound over them to guarantee correct interaction on all of them. To guarantee correctness, it is important that the interaction hides some of its randomness from $\mathcal A$ in the reduction.
\end{abstract}

\newpage

\section{Introduction}\label{sec:intro}
In turnstile $\ell_0$-sampling, a vector $z\in\R^n$ starts as the zero vector and receives coordinate-wise updates of the form ``$z_i \leftarrow z_i + \Delta$'' for $\Delta\in\{-M,-M+1,\ldots,M\}$. During a query, one must return a uniformly random element from $\supp(x) = \{i : z_i\neq 0\}$. The problem was first defined in \cite{FrahlingIS08}, where a data structure (or ``sketch'') for solving it was used to estimate the Euclidean minimum spanning tree, and to provide $\eps$-approximations of a point set $P$ in a geometric space (that is, one wants to maintain a subset $S\subset P$ such that for any set $R$ in a family of bounded VC-dimension, such as the set of all axis-parallel rectangles, $||R\cap S|/|S| - |R\cap P|/|P|| < \eps$). Sketches for $\ell_0$-sampling were also used to solve various dynamic graph streaming problems in \cite{AhnGM12a} and since then have been crucially used in almost all known dynamic graph streaming algorithms\footnote{\label{specfootnote}The spectral sparsification algorithm of \cite{KapralovLMMS14} is a notable exception.}, such as for: connectivity, $k$-connectivity, bipartiteness, and minimum spanning tree \cite{AhnGM12a}, subgraph counting, minimum cut, and cut-sparsifier and spanner computation \cite{AhnGM12b}, spectral sparsifiers \cite{AhnGM13}, maximal matching \cite{ChitnisCHM15}, maximum matching \cite{AhnGM12a,BuryS15,Konrad15,AssadiKLY16,ChitnisCEHMMV16,AssadiKL17}, vertex cover \cite{ChitnisCHM15,ChitnisCEHMMV16}, hitting set, $b$-matching, disjoint paths, $k$-colorable subgraph, and several other maximum subgraph problems \cite{ChitnisCEHMMV16}, densest subgraph \cite{BhattacharyaHNT15,McGregorTVV15,EsfandiariHW16}, vertex and hyperedge connectivity \cite{GuhaMT15}, and graph degeneracy \cite{FarachColtonT16}. For an introduction to the power of $\ell_0$-sketches in designing dynamic graph stream algorithms, see the recent survey of McGregor \cite[Section 3]{McGregor14}. Such sketches have also been used outside streaming, such as in distributed algorithms \cite{HegemanPPSS15,Pandurangan0S16} and data structures for dynamic connectivity \cite{KapronKM13,Wang15,GibbKKT15}.

Given the rising importance of $\ell_0$-sampling in algorithm design, a clear task is to understand the exact complexity of this problem. The work \cite{JowhariST11} gave an $\Omega(\log^2 n)$-bit space lower bound for data structures solving even the case $M=1$ which fail with constant probability, and otherwise whose query responses are $(1/3)$-close to uniform in statistical distance. They also gave an upper bound for $M \le \poly(n)$ with failure probability $\delta$, which in fact gave $\min\{\|z\|_0, \Theta(\log(1/\delta))\}$ uniform samples from the support of $z$, using space $O(\log^2 n \log(1/\delta))$ (here $\|z\|_0$ denotes $|\supp(z)|$). Thus we say their data structure actually solves the harder problem of $\ell_0$-sampling$_k$ for $k =\Theta(\log(1/\delta))$ with failure probability $\delta$, where in $\ell_0$-sampling$_k$ the goal is to recover $\min\{\|z\|_0, k\}$ uniformly random elements, without replacement, from $\supp(z)$.  The upper and lower bounds in \cite{JowhariST11} thus match up to a constant factor for $k = 1$ and $\delta$ a constant. We note though in many settings, even if the final application desires constant failure probability, $\ell_0$-sampling$_k$ with either failure probability $o(1)$ or $k>1$ (or both) is needed as a subroutine (see Figure~\ref{fig:table}).

\begin{figure}
\begin{center}
\begin{tabular}{|c|c|c|c|c|}
\hline
reference & problem & distribution & $k>1$? & $\delta = o(1)$?\\
\hline
\cite{FrahlingIS08} & Euclidean minimum spanning tree & $\ell_0$ & yes &\\
\hline
\cite{AhnGM12a} & connectivity\footnotemark & any & & yes\\
\hline
\cite{AhnGM12a} & $k$-connectivity\footnotemark[\getrefnumber{notejst}] & any & & yes\\
\hline
\cite{AhnGM12a} & bipartiteness\footnotemark[\getrefnumber{notejst}] & any & & yes\\
\hline
\cite{AhnGM12a} & minimum spanning tree & any & & yes\\
\hline
\cite{AhnGM12b} & subgraph counting & $\ell_0$ & yes & \\
\hline
\cite{AhnGM12b} & minimum cut & any &  & yes\\
\hline
\cite{AhnGM12b} & cut sparsifiers & any &  & yes\\
\hline
\cite{AhnGM12b} & spanners & any & yes & yes\\
\hline
\cite{AhnGM12b} & spectral sparsifiers & any &  & yes\\
\hline
\cite{ChitnisCHM15} & maximal matching & $\ell_0$ & yes & yes\\
\hline
\cite{BuryS15} & maximum matching (unweighted) & $\ell_0$ & yes & \\
& maximum matching (weighted) & $\ell_0$ & yes & yes \\
\hline
\cite{Konrad15} & maximum matching & any & yes & yes \\
\hline
\cite{AssadiKLY16} & maximum matching & $\ell_0$ & & yes \\
\hline
\cite{AssadiKL17} & maximum matching & $\ell_0$ & & yes \\
\hline
\cite{ChitnisCEHMMV16} & maximum matching & $\ell_0$ & & yes \\
& vertex cover &  & &  \\
& hitting set &  & &  \\
& $b$-matching &  & &  \\
& disjoint paths &  & &  \\
& $k$-colorable subgraph & & & \\
\hline
\cite{BhattacharyaHNT15} & densest subgraph & $\ell_0$ & & yes \\
\hline
\cite{McGregorTVV15} & densest subgraph & $\ell_0$ & yes & yes \\
\hline
\cite{EsfandiariHW16} & densest subgraph & $\ell_0$ & yes & \\
\hline
\cite{GuhaMT15} & vertex connectivity & any & & yes\\
 & hyperedge connectivity &  & & \\
\hline
\cite{FarachColtonT16} & graph degeneracy & $\ell_0$ & yes & \\
\hline
\end{tabular}
\caption{Guarantees needed by various works using samplers as subroutines. The last two columns indicate whether the work needs to use a sampler that returns $k$ samples at a time when queried for some $k>1$, or for some subconstant failure probability $\delta$ even to achieve failure probability $1/3$ in the main application. The ``distribution'' column indicates the output distribution needed from the sampler for the application (``any'' means a \suppfind{} subroutine is sufficient, i.e.\ it suffices for a query to return any index $i$ for which $z_i\neq 0$).}\label{fig:table}
\end{center}
\end{figure}

\paragraph{Universal relation.} The work of \cite{JowhariST11} obtains its lower bound for $\ell_0$-sampling (and some other problems) via reductions from {\em universal relation} ($\ur$). The problem $\ur$ was first defined in \cite{KarchmerRW95} and arose in connection with work of Karchmer and Wigderson on circuit depth lower bounds \cite{KarchmerW90}. For $f:\{0,1\}^n\rightarrow\{0,1\}$, $D(f)$ is the minimum depth of a fan-in $2$ circuit over the basis $\{\neg, \vee, \wedge\}$ computing $f$. Meanwhile, the (deterministic) communication complexity $C(f)$ is defined as the minimum number of bits that need to be communicated in a correct protocol for Alice and Bob to solve the following communication problem: Alice receives $x\in f^{-1}(0)$ and Bob receives $y\in f^{-1}(1)$ (and hence in particular $x\neq y$), and they must both agree on an index $i\in[n]$ such that $x_i\neq y_i$. It is shown in \cite{KarchmerW90} that $D(f) = C(f)$, where they then used this correspondence to show a tight $\Omega(\log^2 n)$ depth lower bound for monotone circuits solving undirected $s$-$t$ connectivity. The work of \cite{KarchmerRW95} then proposed a strategy to separate the complexity classes $\mathbf{NC}^1$ and $\mathbf{P}$: start with a function $f$ on $\log n$ bits requiring depth $\Omega(\log n)$, then ``compose'' it with itself $k = \log n / \log\log n$ times (see \cite{KarchmerW90} for a precise definition of composition). If one could prove a strong enough direct sum theorem for communication complexity after composition, even for a random $f$, such a $k$-fold composition would yield a function that is provably in $\mathbf{P}$ (and in fact, even in $\mathbf{NC}^2$), but not in $\mathbf{NC}^1$. Proving such a direct sum theorem is still wide open, and the statement that it is true is known as the ``KRW conjecture''; see for example the recent works \cite{GavinskyMWW14,DinurM16} toward resolving this conjecture. As a toy problem en route to resolving it, \cite{KarchmerRW95} suggested proving a direct sum theorem for $k$-fold composition of a particular function $\ur$ that they defined. That task was positively resolved in \cite{EIRS91} (see also \cite{HastadW90}).

\footnotetext{\label{notejst}\cite{AhnGM12a} writes their algorithm as only needing $\delta$ a constant, but for a different definition of \suppfind{}: when the data structure fails, it should output \textsf{Fail} instead of behaving arbitrarily. They then cite \cite{JowhariST11} as providing the sampler they use, but unfortunately \cite{JowhariST11} does not solve this variant of this problem. This issue can be avoided by using \cite{JowhariST11} with $\delta < 1/\mathop{poly}(n)$ so that whp no failures occur throughout their algorithm.}

The problem $\ur$ abstracts away the function $f$, and Alice and Bob are simply given $x,y\in\{0,1\}^n$ with the promise that $x\neq y$. The players must then agree on any index $i$ with $x_i\neq y_i$. The deterministic communication complexity of $\ur$ is nearly completely understood, with upper and lower bounds that match up to an additive $3$ bits, even if one imposes an upper bound on the number of rounds of communication \cite{TardosZ97}. Henceforth we also consider a generalized problem $\ur_k$, where the output must be $\min\{k, \|x-y\|_0\}$ distinct indices on which $x, y$ differ. We also use $\ur^{\subset}, \ur_k^{\subset}$ to denote the variants when promised $\supp(y)\subset \supp(x)$, and also Bob knows $\|x\|_0$. Clearly $\ur, \ur_k$ can only be harder than $\ur^\subset, \ur_k^\subset$, respectively.

More than twenty years after its initial introduction in connection with circuit depth lower bounds, Jowhari et al.\ in \cite{JowhariST11} demonstrated the relevance of $\ur$ in the randomized one-way communication model for obtaining space lower bounds for certain streaming problems, such as various sampling problems and finding duplicates in streams. In the one-way version, Bob simply needs to find such an index $i$ after a single message from Alice, and we only charge Alice's single message's length as the communication cost. If $\randcom^{\rightarrow,pub}_\delta(f)$ denotes the randomized one-way communication complexity of $f$ in the public coin model with failure probability $\delta$, \cite{JowhariST11} showed that the space complexity of \findup{n} with failure probability $\delta$ is at least $\randcom^{\rightarrow,pub}_{\frac 78 + \frac{\delta}8}(\ur)$. In \findup{n}, one is given a length-$(n+1)$ stream of integers in $[n]$, and the algorithm must output some element $i\in[n]$ which appeared at least twice in the stream (note that at least one such element must exist, by the pigeonhole principle). The work \cite{JowhariST11} then showed a reduction demonstrating that any solution to $\ell_0$-sampling with failure probability $\delta$ in turnstile streams immediately implies a solution to \findup{n} with failure probability at most $(1+\delta)/2$ in the same space (and thus the space must be at least $\randcom^{\rightarrow,pub}_{\frac{15}{16} + \frac{\delta}{16}}(\ur)$). The same result is shown for $\ell_p$-sampling for any $p>0$, in which the output index should equal $i$ with probability $|x_i|^p/(\sum_j |x_j|^p)$, and a similar result is shown even if the distribution on $i$ only has to be close to this $\ell_p$-distribution in variational distance (namely, the distance should be bounded away from $1$). It is then shown in \cite{JowhariST11} that $\randcom^{\rightarrow,pub}_\delta(\ur) = \Omega(\log^2 n)$ for any $\delta$ bounded away from $1$. The approach used though unfortunately does not provide an improved lower bound for $\delta\downarrow 0$.

Seemingly unnoticed in \cite{JowhariST11}, we first point out here that the lower bound proof for $\ur$ in that work actually proves the same lower bound for the promise problem $\ur^\subset$. This observation has several advantages. First, it makes the reductions to the streaming problems trivial (they were already quite simple when reducing from $\ur$, but now they are even simpler). Second, a simple reduction from $\ur^\subset$ to sampling problems provides space lower bounds even in the strict turnstile model, and even for the simpler {\em \suppfind{}} streaming problem for which when queried is allowed to return {\em any} element of $\supp(z)$, without any requirement on the distribution of the index output. Both of these differences are important for the meaningfulness of the lower bound. This is because in dynamic graph streaming applications, typically $z$ is indexed by $\binom{n}{2}$ for some graph on $n$ vertices, and $z_e$ is the number of copies of edge $e$ in some underlying multigraph. Edges then are not deleted unless they had previously been inserted, thus only requiring correctness for strict turnstile streams. Also, for every single application mentioned in the first paragraph of Section~\ref{sec:intro} (except for the two applications in \cite{FrahlingIS08}), the known algorithmic solutions which we cited as using $\ell_0$-sampling as a subroutine actually only need a subroutine for the easier \suppfind{} problem. Finally, third and most relevant to our current work's main focus, the straightforward reductions from $\ur^\subset$ to the streaming problems we are considering here do not suffer any increase in failure probability, allowing us to transfer lower bounds on $\randcom^{\rightarrow,pub}_{\delta}(\ur^\subset)$ for small $\delta$ to lower bounds on various streaming problems for small $\delta$. The work \cite{JowhariST11} could not provide lower bounds for the streaming problems considered there in terms of $\delta$ for small $\delta$.

We now show simple reductions from $\ur^\subset$ to \findup{n} and from $\ur_k^\subset$ to \suppfind{k}. In \suppfind{k} we must report $\min\{k,\|z\|_0\}$ elements in $\supp(z)$. In the claims below, $\delta$ is the failure probability for the considered streaming problem.

\begin{claim}
Any one-pass streaming algorithm for \findup{n} must use $\randcom^{\rightarrow,pub}_{\delta}(\ur^\subset)$ space.
\end{claim}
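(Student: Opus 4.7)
The plan is a direct reduction: from any one-pass streaming algorithm $\mathcal A$ for \findup{n} using $s$ bits of space with failure probability $\delta$, I will build a public-coin one-way protocol for $\ur^\subset$ that sends only $s$ bits and fails with probability at most $\delta$, which immediately yields the claim.

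The protocol is the obvious ``stream stitching.'' On inputs $x$ (Alice) and $y$ (Bob) with $\supp(y)\subsetneq\supp(x)$ and $\|x\|_0$ revealed to Bob, Alice uses the public coins to seed $\mathcal A$, feeds it the elements of $\supp(x)$ one at a time in increasing order, and sends the resulting memory state (of size at most $s$) to Bob. Bob then resumes the simulation by inserting the smallest $n+1-\|x\|_0$ elements of $[n]\setminus\supp(y)$ (which he can identify deterministically since he knows $y$ and $\|x\|_0$), queries $\mathcal A$ for a duplicate, and outputs the returned index.

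Correctness reduces to two counting observations. Set $A=\supp(x)$ and let $C$ be the set Bob inserts. The strict containment $\supp(y)\subsetneq\supp(x)$ gives $\|y\|_0\le\|x\|_0-1$, so $|[n]\setminus\supp(y)|\ge n+1-\|x\|_0$ and $C$ is well-defined with $|A|+|C|=n+1$. Thus the combined stream is a valid \findup{n} instance over $[n]$. Moreover $A$ and $C$ each contain no internal repeats and $C\cap\supp(y)=\emptyset$ by construction, so every duplicate in the stream lies in $A\cap C\subseteq A\setminus\supp(y)$, i.e., at an index $i$ where $x_i=1$ and $y_i=0$. Pigeonhole ($n+1$ items in $[n]$) guarantees that at least one such duplicate exists, and any index returned by $\mathcal A$ is automatically a valid $\ur^\subset$ answer.

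There is essentially no real obstacle here: Bob's side is deterministic given $y$, $\|x\|_0$, and the public coins, so no additional coordination is required; the only message passed is the $\mathcal A$-state of size $\le s$; and the protocol errs only when $\mathcal A$ fails to report a duplicate, an event of probability at most $\delta$. We conclude that $\randcom^{\rightarrow,pub}_\delta(\ur^\subset)\le s$, as desired.
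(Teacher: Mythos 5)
Your reduction is exactly the paper's: Alice streams $\supp(x)$ and sends the memory state, Bob appends $n+1-\|x\|_0$ elements of $[n]\setminus\supp(y)$ and outputs the reported duplicate, with correctness from the pigeonhole principle and the fact that every duplicate lies in $\supp(x)\setminus\supp(y)$. The argument is correct; you merely spell out the counting details (and fix a canonical choice of Bob's elements) that the paper leaves implicit.
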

\begin{proof}
  We reduce from $\ur^\subset$. Suppose there were a space-$S$ algorithm $\mathcal{A}$ for \findup{n}. Alice creates a stream consisting of all elements of $\supp(x)$ and runs $\mathcal{A}$ on those elements, then sends the memory contents of $\mathcal{A}$ to Bob. Bob then continues running $\mathcal{A}$ on $n+1-\|x\|_0$ arbitrarily chosen elements of $[n]\backslash\supp(y)$. Then there must be a duplicate in the resulting concatenated stream, $i$ satisfies $x_i\neq y_i$ iff $i$ is a duplicate.
\end{proof}

\begin{claim}
Any one-pass streaming algorithm for \suppfind{k} in the strict turnstile model must use $\randcom^{\rightarrow,pub}_{\delta}(\ur_k^\subset)$ bits of space, even if promised that $z\in\{0,1\}^n$ at all points in the stream.
\end{claim}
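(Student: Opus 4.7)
The plan is to mimic the proof of the previous claim, but reducing from $\ur_k^\subset$ rather than $\ur^\subset$, and exploiting the promise $\supp(y)\subset\supp(x)$ to ensure both the strict turnstile condition and the $z\in\{0,1\}^n$ guarantee are preserved throughout the simulated stream.

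Concretely, suppose $\mathcal{A}$ is a one-pass streaming algorithm for \suppfind{k} in the strict turnstile model, using $S$ bits of space and failing with probability at most $\delta$, correct even under the promise $z\in\{0,1\}^n$. Given an instance $(x,y)$ of $\ur_k^\subset$, Alice will first feed $\mathcal{A}$ the stream consisting of one $+1$ update to coordinate $i$ for each $i\in\supp(x)$, in any order. After these updates the maintained vector is exactly $z=x\in\{0,1\}^n$. Alice then transmits the memory state of $\mathcal{A}$ (at most $S$ bits) to Bob using the public-coin protocol's shared randomness for $\mathcal{A}$.

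Bob continues the simulation by issuing one $-1$ update to coordinate $i$ for each $i\in\supp(y)$. The key observations are: (i) since $\supp(y)\subset\supp(x)$, every such decrement is applied to a coordinate currently equal to $1$, so $z$ never leaves $\{0,1\}^n$ and the resulting stream is strict turnstile; and (ii) at the end of the simulation, $z$ is the indicator vector of $\supp(x)\setminus\supp(y)=\{i:x_i\neq y_i\}$, where the equality uses the promise. Bob then invokes $\mathcal{A}$'s query operation, obtaining $\min\{k,\|z\|_0\}=\min\{k,\|x-y\|_0\}$ distinct indices in $\supp(z)$, each of which is a valid coordinate on which $x$ and $y$ disagree. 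With probability at least $1-\delta$ this is a correct output for $\ur_k^\subset$.

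This gives a one-way public-coin protocol for $\ur_k^\subset$ with failure probability $\delta$ and communication $S$, so $S\ge\randcom^{\rightarrow,pub}_{\delta}(\ur_k^\subset)$. There is no real obstacle here; the only point worth verifying carefully is the invariant that $z\in\{0,1\}^n$ at every intermediate time step and that no coordinate ever goes negative, both of which follow immediately from $\supp(y)\subset\supp(x)$ together with the fact that each coordinate is incremented and later decremented at most once.
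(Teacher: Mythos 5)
Your proposal is correct and follows exactly the same reduction as the paper: Alice inserts $+1$ on $\supp(x)$, ships the memory state, Bob inserts $-1$ on $\supp(y)$, and the promise $\supp(y)\subset\supp(x)$ guarantees $z$ stays in $\{0,1\}^n$ and ends as the indicator of $\{i:x_i\neq y_i\}$. Your explicit verification of the intermediate invariant is a detail the paper leaves implicit, but the argument is identical.
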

\begin{proof}
This is again via reduction from $\ur_k^\subset$. Let $\mathcal{A}$ be a space-$S$ algorithm for \suppfind{k} in the strict turnstile model. For each $i\in\supp(x)$, Alice sends the update $z_i \leftarrow z_i + 1$ to $\mathcal{A}$. Alice then sends the memory contents of $\mathcal{A}$ to Bob. Bob then for each $i\in\supp(y)$ sends the update $z_i\leftarrow z_i - 1$ to $\mathcal{A}$. Now note that $z$ is exactly the indicator vector of the set $\{i : x_i\neq y_i\}$.
\end{proof}

\begin{claim}
Any one-pass streaming algorithm for $\ell_p$-sampling for any $p\ge 0$ in the strict turnstile model must use $\randcom^{\rightarrow,pub}_{\delta}(\ur_k^\subset)$ bits of space, even if promised $z\in\{0,1\}^n$ at all points in the stream.
\end{claim}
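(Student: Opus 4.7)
The plan is to mimic the reduction in the preceding claim for \suppfind{k} almost verbatim, exploiting the fact that when $z\in\{0,1\}^n$ the $\ell_p$-sampling distribution over $\supp(z)$ coincides with the uniform distribution for every $p\ge 0$, so in particular any valid $\ell_p$-sample is an element of $\supp(z)$ and thus certifies a coordinate on which $x$ and $y$ differ.

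First, given a space-$S$ one-pass streaming algorithm $\mathcal{A}$ solving $\ell_p$-sampling in the strict turnstile model with failure probability $\delta$ (returning $\min\{k,\|z\|_0\}$ indices in the $k$-sample version), I will have Alice feed the updates $z_i\leftarrow z_i+1$ for each $i\in\supp(x)$ into $\mathcal{A}$, transmit the $S$-bit memory state to Bob, and then have Bob apply the updates $z_i\leftarrow z_i-1$ for each $i\in\supp(y)$. Using the promise $\supp(y)\subset\supp(x)$, at every intermediate point of the stream the coordinates of $z$ lie in $\{0,1\}$, so the strict turnstile promise (and the even stronger promise that $z\in\{0,1\}^n$ throughout the stream) is maintained. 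At the end of the stream $z$ is the indicator vector of $\{i:x_i\neq y_i\}$.

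Next, Bob queries $\mathcal{A}$ to obtain its $\ell_p$-samples. Since $z$ is $\{0,1\}$-valued, for any $p\ge 0$ the distribution $|z_i|^p/\sum_j |z_j|^p$ is simply the uniform distribution on $\supp(z)=\{i:x_i\neq y_i\}$, so conditioned on $\mathcal{A}$ not failing every returned index is a valid differing coordinate; in the $k$-sample variant Bob thereby obtains $\min\{k,\|x-y\|_0\}$ distinct such indices, which is exactly the output required by $\ur_k^\subset$. Hence Alice and Bob jointly solve $\ur_k^\subset$ with failure probability $\delta$ using $S$ bits of one-way public-coin communication, yielding $S\ge \randcom^{\rightarrow,pub}_{\delta}(\ur_k^\subset)$.

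I do not expect any real obstacle: the reduction is structurally identical to the preceding \suppfind{k} claim, and the only new ingredient is the trivial observation that on $\{0,1\}$-valued vectors the $\ell_p$-sampling distribution is just the uniform distribution on the support, so the weaker \suppfind{k} guarantee is automatically implied regardless of $p$.
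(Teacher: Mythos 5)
Your proposal is correct and takes essentially the same route as the paper: the paper simply observes that $\ell_p$-sampling is at least as hard as \suppfind{k} (any returned index lies in $\supp(z)$), and then invokes the preceding reduction from $\ur_k^\subset$ to \suppfind{k}, which is exactly the reduction you spell out. Your extra remark that the $\ell_p$-distribution is uniform on $\{0,1\}$-valued vectors is true but not needed — all that matters is that the samples land in $\supp(z)$.
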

\begin{proof}
This is via straightforward reduction from \suppfind{k}, since reporting $\min\{k,\|z\|_0\}$ elements of $\supp(z)$ satisfying some distributional requirements is only a harder problem than finding {\em any} $\min\{k,\|z\|_0\}$ elements of $\supp(z)$.
\end{proof}

The reductions above thus raise the question: what is the asymptotic behavior of $\randcom^{\rightarrow,pub}_\delta(\ur_k^\subset)$?

\paragraph{Our main contribution:} We prove for any $\delta$ bounded away from $1$ and $k\in[n]$, $\randcom^{\rightarrow,pub}_\delta(\ur_k^\subset) = \Theta(\min\{n, t\log^2(n/t)\})$ where $t = \max\{k,\log(1/\delta)\}$. Given known upper bounds in \cite{JowhariST11}, our lower bounds are optimal for \findup{n}, \suppfind{}, and $\ell_p$-sampling for any $0\le p<2$ for nearly the full range of $n, \delta$ (namely, for $\delta > 2^{-n^{.99}}$). Also given an upper bound of \cite{JowhariST11}, our lower bound is optimal for $\ell_0$-sampling$_k$ for nearly the full range of parameters $n, k, \delta$ (namely, for $t < n^{.99}$). Previously no lower bounds were known in terms of $\delta$ (or $k$). Our main theorem:

\begin{theorem}\label{thm:main}
For any $\delta$ bounded away from $1$ and $1\le k\le n$, $\randcom^{\rightarrow,pub}_\delta(\ur_k^\subset) = \Theta(\min\{n, t\log^2(n/t)\})$.
\end{theorem}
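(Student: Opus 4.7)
The upper bound $O(\min\{n,t\log^2(n/t)\})$ is obtained by running $\Theta(t)$ copies of the Jowhari-Sun-Tirthapura $\ell_0$-sampler \cite{JowhariST11} at geometrically decreasing subsampling rates; the $O(n)$ bound is trivial. So I focus on the matching lower bound in the regime $t\log^2(n/t)\le n$. Fix a one-way sketching algorithm $\mathcal{A}$ for $\ur_k^\subset$ of space $s$ with failure probability $\delta$. My plan is to use $\mathcal{A}$ as the core of a compression scheme for a carefully chosen family $\mathcal{F}$ of subsets of $[n]$ with $\log|\mathcal{F}| = \Omega(t\log^2(n/t))$; a standard entropy argument then forces $s \ge \log|\mathcal{F}| - O(1) = \Omega(t\log^2(n/t))$.

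Set $L = \Theta(\log(n/t))$ and partition $[n]$ into disjoint ``scale rings'' $V_1,\dots,V_L$ with $|V_\ell| \asymp n/2^\ell$ (so $|V_L| \asymp t$). Let $\mathcal{F}$ consist of all tuples $(T_1,\dots,T_L)$ with $T_\ell\subseteq V_\ell$ and $|T_\ell|=t$; a direct count gives $\log|\mathcal{F}| = \sum_\ell \log\binom{|V_\ell|}{t} = \sum_\ell \Theta(t\log(|V_\ell|/t)) = \Theta(t\log^2(n/t))$. Encoding: Alice forms $x = \mathbf{1}_{\bigcup_\ell T_\ell}$ and transmits the sketch $\sk = \mathcal{A}(x)$. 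Decoding: the decoder, sharing public randomness with the encoder, iteratively peels off coordinates of $\mathop{support}(x)$ by invoking Bob's side of the $\ur_k^\subset$ protocol on $(\sk, y)$ for successive queries $y$. It starts from $y=0$; each invocation returns at most $k$ new coordinates in $\mathop{support}(x)\setminus \mathop{support}(y)$, and the decoder classifies each returned coordinate by which $V_\ell$ it lies in (the $V_\ell$ being disjoint), adds it both to the running reconstruction and to $\mathop{support}(y)$, and repeats. After $O(tL/k)$ queries all of $\bigcup_\ell T_\ell$ has been recovered, and intersecting with each $V_\ell$ yields the full tuple $(T_1,\dots,T_L)$. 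Because every coordinate ever placed in $\mathop{support}(y)$ has been confirmed to lie in $\mathop{support}(x)$, the $\ur_k^\subset$ promise is preserved throughout.

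The principal obstacle is that these queries are \emph{adaptive}: each $y$ depends on $\mathcal{A}$'s previous answers, which in turn depend on the internal randomness frozen into $\sk$. A generic one-way sketch need not remain correct against such adaptive queries. My plan, following the abstract's description, is to inject additional random noise into each query---e.g., by independently flipping coordinates of $\mathop{support}(y)$ within the already-recovered portion of $\mathop{support}(x)$ at a calibrated rate---so that each individual query $y$ carries only $O(1)$ bits of mutual information with the internal randomness encoded in $\sk$. A transfer theorem of adaptive-data-analysis flavor (loosely motivated by differential privacy) then guarantees that each query is answered correctly with probability $\approx 1-\delta$ despite the adaptivity, and a union bound over the $O(tL/k)$ queries keeps the total failure probability bounded below $1/2$. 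Tuning the noise magnitude so that it simultaneously (i) preserves the subset promise, (ii) keeps the per-query mutual information with $\sk$ small, and (iii) does not erase too many unrecovered coordinates so that progress continues is the key technical balance. Once decoding succeeds with constant probability, Fano's inequality yields $s = \Omega(\log|\mathcal{F}|) = \Omega(t\log^2(n/t))$, which is the claimed lower bound.
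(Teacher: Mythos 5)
There is a genuine gap, and it sits exactly at the point your plan treats as a tuning issue. Your scheme requires the decoder to recover the \emph{entire} support $\bigcup_\ell T_\ell$ from the sketch alone, so every one of the $\Theta(tL/k)$ adaptive queries must succeed. But the best per-query guarantee available after noise injection is the one in the paper's adaptivity lemma (Lemma~\ref{lem:information}): a query whose mutual information with the sketch's randomness is $O(1)$ bits still fails with probability $\Theta(1/\log\frac1\delta)$, not $\delta$. Over $\Omega(\log\frac1\delta\log\frac nt)$ queries (the $k=1$ case) the expected number of failures is $\Omega(\log\frac nt)\gg 1$, so no union bound or first-moment argument can give constant probability that \emph{all} queries succeed. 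Worse, for the $k$-part of the bound ($t=k$) the relevant failure probability $\delta$ is a constant, so even a single query fails with constant probability and full recovery is hopeless. This is precisely why the paper abandons full recovery: its encoding is Las Vegas --- the encoder simulates the decoder, records in a bit-vector $b$ which rounds succeed, and explicitly transmits the set $B$ of elements \emph{not} recovered via queries. The lower bound then comes from the \emph{expected savings} $\E(\log\binom nm - \log\binom n{|B|}) = \Omega(R\log\frac nm)$ over the trivial encoding (Lemmas~\ref{lemma:lb-meta} and~\ref{lemma:bits-saving}), which tolerates a constant fraction of failed rounds. Relatedly, your noise mechanism (randomly flipping already-recovered coordinates of $y$) is not the paper's and does not obviously bound the mutual information: the paper masks by adding, each round, a fresh $\approx 1/K$ fraction of the \emph{unrecovered} elements of $S$ (drawn from the explicitly transmitted $B$) into the query, so that the one adaptively learned element is hidden among $\Theta(n_{r-1}/K)$ random ones; this is what yields $I(X;S_r)\le O(K)$ in Lemma~\ref{lemma:mutual-entropy-bound}. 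Without transmitting $B$ you have no pool of known-but-unqueried support elements to mask with.

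A secondary structural difference: the paper does not use a single family of tuples across all scales. It proves the two halves $\Omega(\log\frac1\delta\log^2\frac n{\log(1/\delta)})$ (via the adaptivity lemma, with $m=\lfloor\sqrt{n\log\frac1\delta}\rfloor$) and $\Omega(k\log^2\frac nk)$ (via non-adaptive random subsampling, with $m=\lfloor\sqrt{nk}\rfloor$, needing no adaptivity lemma at all) separately and takes the max. Your multi-scale family $\mathcal F$ with $\log|\mathcal F|=\Theta(t\log^2(n/t))$ is a reasonable-looking alternative skeleton (and resembles the paper's second proof via $\aug_N$), but as written the decoding step does not go through for the reasons above.
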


We give two different proofs of Theorem~\ref{thm:main} (in Sections \ref{sec:adaptive-proof} and \ref{sec:aug-proof}). Our upper bound is also new, though follows by minor modifications of the upper bound in \cite{JowhariST11} and thus we describe it in the appendix. The previous upper bound was $O(\min\{n, t\log^2 n\})$. We also mention here that it is known that the upper bound for both $\ur_k$ and $\ell_0$-sampling$_k$ in two rounds (respectively, two passes) is only $O(t\log n)$ \cite{JowhariST11}. Thus, one cannot hope to extend our new lower bound to two or more passes, since it simply is not true.

\subsection{Related work}
The question of whether $\ell_0$-sampling is possible in low memory in turnstile streams was first asked in \cite{CormodeMR05,FrahlingIS08}. The work \cite{FrahlingIS08} applied $\ell_0$-sampling as a subroutine in approximating the cost of the Euclidean minimum spanning tree of a subset $S$ of a discrete geometric space subject to insertions and deletions. The algorithm given there used space $O(\log^3 n)$ bits to achieve failure probability $1/\poly(n)$ (though it is likely that the space could be improved to $O(\log^2 n\log\log n)$ with a worse failure probability, by replacing a subroutine used there with a more recent $\ell_0$-estimation algorithm of \cite{KaneNW10}). As mentioned, the currently best known upper bound solves $\ell_0$-sampling$_k$ using $O(t\log^2 n)$ bits \cite{JowhariST11}, which Theorem~\ref{thm:main} shows is tight.

For $\ell_p$-sampling, conditioned on not failing, the data structure should output $i$ with probability $(1\pm\eps)|x_i|^p/\|x\|_p^p$. The first work to realize its importance came even earlier than for $\ell_0$-sampling: \cite{CoppersmithK04} showed that an $\ell_2$-sampler using small memory would lead to a nearly space-optimal streaming algorithm for multiplicatively estimating $\|x\|_3$ in the turnstile model, but did not know how to implement such a data structure. The first implementation was given in \cite{MonemizadehW10}, achieving space $\poly(\eps^{-1}\log n)$ with $\delta = 1/\poly(n)$. . For $1\le p\le 2$ the space was improved to $O(\eps^{-p}\log^3 n)$ bits for constant $\delta$ \cite{AndoniKO11}. In \cite{JowhariST11} this bound was improved to $O(\eps^{-\max\{1,p\}}\log(1/\delta)\log^2 n)$ bits for failure probability $\delta$ when $0<p<2$ and $p\neq 1$. For $p=1$ the space bound achieved by \cite{JowhariST11} was a $\log(1/\eps)$ factor worse: $O(\eps^{-1}\log(1/\eps)\log(1/\delta)\log^2 n)$ bits.

For finding a duplicate item in a stream, the question of whether a space-efficient randomized algorithm exists was asked in \cite{Muthukrishnan05,Tarui07}. The question was positively resolved in \cite{GopalanR09}, which gave an $O(\log^3 n)$-space algorithm with constant failure probability. An improved algorithm was given in \cite{JowhariST11}, using $O(\log(1/\delta) \log^2 n)$ bits of space for failure probability $\delta$.

\section{Overview of techniques}\label{sec:overview}
We now describe our two proofs of Theorem~\ref{thm:main}. For the upper bound, \cite{JowhariST11} achieved $O(t\log^2n)$, but in the appendix we show that slight modifications to their approach yield $O(\min\{n,t\log^2(n/t)\})$. Our main contribution is in proving an improved lower bound. Assume $t < cn$ for some sufficiently small constant $c$ (since otherwise we already obtain an $\Omega(n)$ lower bound). In both our lower bound proofs in this regime, the proof is split into two parts: we show $\randcom^{\rightarrow,pub}_\delta(\ur^\subset) = \Omega(\log \frac 1{\delta}\log^2 \frac n{\log\frac 1{\delta}})$ and $\randcom^{\rightarrow,pub}_{.99}(\ur_k^\subset)=\Omega(k\log^2\frac nk)$ separately. We give an overview the former here, which is the more technically challenging half. Our two proofs of the latter are in Sections~\ref{sec:k-samples-lb} and \ref{sec:aug-k}.

\subsection{Lower bound proof via encoding subsets and an adaptivity lemma}\label{sec:adaptivity-intro}

Our first proof of the lower bound on $\randcom^{\rightarrow,pub}_\delta(\ur^\subset)$ is via an encoding argument. Fix $m$. A randomized encoder is given a set $S\subset[n]$ with $|S| = m$ and must output an encoding $\enc(S)$, and a decoder sharing public randomness with the encoder must be able to recover $S$ given only $\enc(S)$. We consider such schemes in which the decoder must succeed with probability $1$, and the encoding length is a random variable. Any such encoding must use $\Omega(\log(^n_m)) = \Omega(m\log \frac nm)$ bits in expectation for some $S$.

There is a natural, but sub-optimal approach to using a public-coin one-way protocol $\mathcal{P}$ for $\ur^\subset$ to devise such an encoding/decoding scheme.  The encoder pretends to be Alice with input $x$ being the indicator set of $S$, then lets $\enc(S)$ be the message $M$ Alice would have sent to Bob. The decoder attempts to recover $S$ by iteratively pretending to be Bob $m$ times, initially pretending to have input $y=0\in\{0,1\}^n$, then iteratively adding elements found in $S$ to $y$'s support. Henceforth let $\mathbf{1}_T\in\{0,1\}^n$ denote the indicator vector of a set $T\subset[n]$.

\begin{algorithm}[H] 
  \caption{Simple Decoder.} \label{algo:wrong}
  \begin{algorithmic}[1]
    \Procedure{$\dec$}{$M$}
    \State $T\leftarrow \emptyset$
    \For {$r=1,\ldots,m$} 
      \State Let $i$ be Bob's output upon receiving message $M$ from Alice when Bob's input is $\mathbf{1}_T$
      \State $T \leftarrow T \cup\{i\}$
    \EndFor
    \State \Return $T$
    \EndProcedure
  \end{algorithmic}
\end{algorithm}

One might hope to say that if the original failure probability were $\delta < 1/m$, then by a union bound, with constant probability every iteration succeeds in finding a new element of $S$ (or one could even first apply some error-correction to $x$ so that the decoder could recover $S$ even if only a constant fraction of iterations succeeded). The problem with such thinking though is that this decoder chooses $y$'s adaptively! To be specific, $\mathcal{P}$ being a correct protocol means
\begin{equation}
\forall x,y\in\{0,1\}^n,\ \Pr_s(\mathcal{P}\text{ is correct on inputs }x,y) \ge 1-\delta , \label{eqn:correct}
\end{equation}
where $s$ is the public random string that both Alice and Bob have access to. The issue is that even in the second iteration (when $r=2$), Bob's ``input'' $\mathbf{1}_T$ {\em depends on $s$}, since $T$ depends on the outcome of the first iteration! Thus the guarantee of \eqref{eqn:correct} does not apply.

One way around the above issue is to realize that as long as every iteration succeeds, $T$ is always a subset of $S$. Thus it suffices for the following event $\mathcal{E}$ to occur: $\forall T\subset S,\ \mathcal{P}\text{ is correct on inputs }\mathbf{1}_S, \mathbf{1}_T$. Then $\Pr_s(\neg \mathcal{E}) \le 2^m\delta$ by a union bound, which is at most $1/2$ for $m = \lfloor \log_2(1/\delta)\rfloor - 1$. We have thus just shown that $\randcom^{\rightarrow,pub}_\delta(\ur^\subset) = \Omega(\min\{n, \log(^n_m)\}) = \Omega(\min\{n, \log\frac 1{\delta}\log \frac n{\log(1/\delta)}\})$.

Our improvement is as follows. Our new decoder again iteratively tries to recover elements of $S$ as before. We will give up though on having $m$ iterations and hoping for all (or even most) of them to succeed. Instead, we will only have $R = \Theta(\log \frac 1{\delta}\log \frac n{\log \frac 1{\delta}})$ iterations, and our aim is for the decoder to succeed in finding a new element in $S$ for at least a constant fraction of these $R$ iterations. Simplifying things for a moment, let us pretend for now that all $R$ iterations do succeed in finding a new element. $\enc(S)$ will then be Alice's message $M$, together with the set $B\subset S$ of size $m-R$ not recovered during the $R$ rounds, explicitly written using $\lceil\log{n \choose |B|}\rceil$ bits. If the decoder can then recover these $R$ remaining elements, this then implies the decoder has recovered $S$, and thus we must have $|M| = \Omega(\log{n\choose m} - \log{n \choose |B|}) = \Omega(R\log \frac nm)$. The decoder proceeds as follows. Just as before, initially the decoder starts with $T = \emptyset$ and lets $i$ be the output of Bob on $\mathbf{1}_T$ and adds it to $T$. Then in iteration $r$, before proceeding to the next iteration, the decoder randomly picks some elements from $B$ and adds them into $T$, so that the number of elements left to be uncovered is some fixed number $n_r$. These extra elements being added to $T$ should be viewed as ``random noise'' to mask information about the random string $s$ used by $\mathcal{P}$, an idea very loosely inspired by ideas in differential privacy. For intuition, as an example suppose the iteration $r=1$ succeeds in finding some $i\in S$. If the decoder were then to add $i$ to $T$, as well as $\approx m/2$ random elements from $B$ to $T$, then the resulting $T$ reveals only $\approx 1$ bit of information about $i$ (and hence about $s$). This is as opposed to the $\log m$ bits $T$ could have revealed if the masking were not performed. Thus the next query in round $r=2$, although correlated with $s$, has very weak correlation after masking and we thus might hope for it to succeed. This intuition is captured in the following lemma, which we prove in Section~\ref{sec:optimal-lb}:
\begin{lemma}\label{lem:information}
  Consider $f$: $\{0,1\}^b\times \{0,1\}^q\rightarrow \{0,1\}$ and $X\in\{0,1\}^b$ uniformly random. If $\forall y\in \{0,1\}^q,\ \Pr(f(X,y)=1)\le \delta$ where $0<\delta<1$, then for any random variable $Y$ supported on $\{0,1\}^q$,
  \begin{align}
    \Pr(f(X,Y)=1)\le \frac{I(X;Y)+H_2(\delta)}{\log \frac{1}{\delta}}, \label{eqn:adaptivity}
  \end{align}
  where $I(X;Y)$ is the mutual information between $X$ and $Y$, and $H_2$ is the binary entropy function.
\end{lemma}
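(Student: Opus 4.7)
The plan is to bound $P := \Pr(f(X,Y)=1)$ by an entropy chain-rule argument that exploits the assumption on the ``bad sets'' $A_y := \{x : f(x,y)=1\}$, each of which has $|A_y| \le \delta\cdot 2^b$. Set $Z := f(X,Y) \in \{0,1\}$, so that $P = \Pr(Z=1)$ and $Z$ is a deterministic function of $(X,Y)$. Since $X$ is uniform on $\{0,1\}^b$ we have $H(X) = b$, so $I(X;Y) = b - H(X|Y)$ and it suffices to upper bound $H(X|Y)$.

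First I would decompose $H(X|Y) = H(X,Z|Y) = H(Z|Y) + H(X|Y,Z)$. The term $H(Z|Y) \le H(Z) \le H_2(P)$ since $Z$ is Bernoulli$(P)$. For the second term, split on the value of $Z$: conditional on $Y=y$ and $Z=1$ the support of $X$ lies in $A_y$, giving $H(X|Y=y,Z=1) \le \log|A_y| \le b - \log(1/\delta)$, while $H(X|Y=y,Z=0) \le b$ trivially. Averaging with weights $P$ and $1-P$ yields $H(X|Y,Z) \le b - P\log(1/\delta)$. Assembling,
\begin{equation*}
I(X;Y) \;=\; b - H(X|Y) \;\ge\; P\log(1/\delta) - H_2(P),
\end{equation*}
which rearranges to $P \le (I(X;Y) + H_2(P))/\log(1/\delta)$. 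This is the target inequality with $H_2(P)$ in place of $H_2(\delta)$.

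The main (and only delicate) obstacle is replacing $H_2(P)$ by $H_2(\delta)$. When $P \le \delta \le 1/2$ this is immediate from monotonicity of $H_2$ on $[0,1/2]$. In the complementary regime one inverts the implicit inequality $P\log(1/\delta) - H_2(P) \le I(X;Y)$ by analyzing the convex function $p \mapsto p\log(1/\delta) - H_2(p)$: one solves for the largest $P$ consistent with the constraint and checks that it is dominated by $(I(X;Y) + H_2(\delta))/\log(1/\delta)$, using along the way the crude estimate $H_2(P) \le 1$ when $P$ is near $1/2$ and the monotonicity of the convex function on $[\delta/(1+\delta), 1]$. The information-theoretic heart of the lemma is entirely contained in the chain-rule bound $I(X;Y) \ge P\log(1/\delta) - H_2(P)$; the rest is an elementary but careful case analysis.
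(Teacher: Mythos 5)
Your chain-rule decomposition $H(X|Y)=H(Z|Y)+H(X|Y,Z)$ with $Z:=f(X,Y)$, yielding $I(X;Y)\ge P\log\tfrac1\delta - H_2(P)$ for $P:=\Pr(f(X,Y)=1)$, is precisely the paper's own argument, merely recast there as a source-coding bound: Alice prepends an indicator of $Z$ and then sends $X$ (if $Z=0$) or its index within $\{x:f(x,Y)=1\}$ (if $Z=1$). However, the step you describe as the ``only delicate obstacle''---replacing $H_2(P)$ by $H_2(\delta)$---is a real gap that no elementary case analysis can close. Unwinding your own reduction, one must verify at the largest $P^*$ with $P^*\log\tfrac1\delta - H_2(P^*)\le I$ that $H_2(P^*)\le H_2(\delta)$; this fails throughout $\delta<P^*<1-\delta$, which is exactly the regime that matters here (small $\delta$, moderate $I$). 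In fact the inequality \eqref{eqn:adaptivity} as written does not hold. Take $b=2q$, $\delta=2^{-q}$, $f(x,y)=1$ iff the first $q$ bits of $x$ equal $y$ (so $\Pr(f(X,y)=1)=\delta$ for every $y$), and let $Y$ equal the first $q$ bits of $X$ with probability $\tfrac12$ and be an independent uniform string otherwise. A direct computation gives $P=\tfrac12+\tfrac\delta2$ and $I(X;Y)=\tfrac q2-1+O(q\delta)$, so for large $q$ the right side of \eqref{eqn:adaptivity} is $\tfrac12-\tfrac1q+o(1/q)<P$. (Your $H_2(P)$ bound is tight in this example.)

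The paper's written proof contains the same leap: it asserts without justification that the prepended indicator ``tak[es] $H_2(\delta)$ bits in expectation,'' whereas an indicator of a $\mathrm{Ber}(P)$ random variable costs $H_2(P)$ bits, and these quantities differ exactly where it matters. What you actually established---$I(X;Y)\ge P\log\tfrac1\delta - H_2(P)$---is the correct Fano-type statement and is what the lemma should say. The downstream use in Section~\ref{sec:optimal-lb} only invokes the weakening $\Pr(b_r=0)\le (I+1)/\log\tfrac1\delta$, which follows safely from $H_2(P)\le 1$, so the paper's main lower bound is unaffected; but as a standalone statement, Lemma~\ref{lem:information} should have $H_2(\Pr(f(X,Y)=1))$, not $H_2(\delta)$, in the numerator.
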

Fix some $x\in\{0,1\}^n$. One should imagine here that $f(X,y)$ is $1$ iff $\mathcal{P}$ fails when Alice has input $x$ and Bob has input $y$ in a $\ur^\subset$ instance, and the public random string is $X=s$. Then the lemma states that if $y=Y$ is not arbitrary, but rather random (and correlated with $X$), then the failure probability of the protocol is still bounded as long as the mutual information between $X$ and $Y$ is bounded. It is also not hard to see that this lemma is sharp up to small additive terms. Consider the case $x,y\in[n]$, and $f(x,y) = 1$ iff $x = y$. Then if $X$ is uniform, for all $y$ we have $\Pr(f(X,y) = 1) = 1/n$. Now consider the case where $Y$ is random and equal to $X$ with probability $t/\log n$ and is uniform in $[n]$ with probability $1 - t/\log n$. Then in expectation $Y$ reveals $t$ bits of $X$, so that $I(X;Y) = t$. It is also not hard to see that $\Pr(f(X,Y) = 1) \approx t/\log n + 1/n$.

In light of the strategy stated so far and Lemma~\ref{lem:information}, the path forward is clear: at each iteration $r$, we should add enough random masking elements to $T$ to keep the mutual information between $T$ and all previously added elements below, say, $\frac 12 \log \frac 1{\delta}$. Then we expect a constant fraction of iterations to succeed. The encoder knows which iterations do not succeed since it shares public randomness with the decoder (and can thus simulate it), so it can simply tell the decoder which rounds are the failed ones, then explicitly include in $M$ correct new elements of $S$ for the decoder to use in the place of Bob's wrong output in those rounds. A calculation shows that if one adds a $(1-1/K)\approx 2^{-1/K}$ fraction of the remaining items in $S$ to $T$ after drawing one more support element from Bob, the mutual information between the next query to Bob and the randomness used by $\mathcal{P}$ will be $O(K)$ (see Lemma~\ref{lemma:mutual-entropy-bound}). Thus we do this for $K$ a sufficiently small constant times $\log \frac 1{\delta}$. We will then have $n_r \approx (1 - 1/K)^r m$. Note that we cannot continue in this way once $n_r < K$ (since the number of ``random noise'' elements we inject should at least be one). Thus we are forced to stop after $R = \Theta(K\log(m/K)) = \Theta(\log\frac 1{\delta} \log\frac n{\log \frac 1{\delta}})$ iterations. We then set $m = \sqrt{n\log(1/\delta)}$, so that $\randcom^{\rightarrow,pub}_\delta(\ur^\subset) = \Omega(|R|\log \frac nm) = \Omega(\min\{n, \log\frac 1{\delta}\log^2 \frac n{\log \frac 1{\delta}}\})$ as desired.

The argument for lower bounding $\randcom^{\rightarrow,pub}_\delta(\ur_k^\subset)$ is a bit simpler, and in particular does not need rely on Lemma~\ref{lem:information}. Both the idea and rigorous argument can be found in Section~\ref{sec:k-samples-lb}, but again the idea is to use a protocol for this problem to encode appropriately sized subsets of $[n]$.
 
As mentioned above, our lower bounds use protocols for $\ur^\subset$ and $\ur^\subset_k$ to establish protocols for encoding subsets of some fixed size $m$ of $[n]$. These encoders always consist of some message $M$ Alice would have sent in a $\ur^\subset$ or $\ur^\subset_k$ protocol, together with a random subset $B\subset S$ (using $\lceil \log_2|B|\rceil + \lceil\log{n\choose |B|}\rceil$ bits, to represent both $|B|$ and the set $B$ itself). Here $|B|$ is a random variable. These encoders are thus {\em Las Vegas}: the length of the encoding is a random variable, but the encoder/decoder always succeed in compressing and recovering the subset. The final lower bounds then come from the following simple lemma, which follows from the source coding theorem. 

\begin{lemma} \label{lemma:lb-meta}
  Let $\s$ denote the number of bits used by the $\ur^\subset$ or $\ur^\subset_k$ protocol, and let $\s'$ denote the expected number of bits to represent $B$. Then $(1+\s+\s') \ge \log (^n_m)$. In particular, $s \ge \log(^n_m) - s' - 1$.
\end{lemma}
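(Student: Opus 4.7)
The plan is to view the encoder/decoder pair as a lossless variable-length binary code for a uniformly random $S \in \binom{[n]}{m}$ and invoke Shannon's noiseless source coding theorem. Since $S$ is uniform, $H(S) = \log\binom{n}{m}$, and any prefix-free binary code for $S$ must have expected length at least this much. The goal is therefore to realize the pair $(M,B)$ as such a code of expected length at most $\s + \s' + 1$.

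First I would condition on the shared public random string $\randcom$ used by the $\ur^\subset$ (or $\ur^\subset_k$) protocol. For each fixed value of $\randcom$, the encoder deterministically maps $S$ to $(M(S), B(S))$, and by hypothesis the decoder (which also sees $\randcom$) recovers $S$ with probability $1$. Hence $S \mapsto (M, B)$ is an injection on $\binom{[n]}{m}$.

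Next I would serialize $(M,B)$ into a single binary string as follows: write $M$ padded to exactly $\s$ bits, then append a self-delimiting encoding of $B$. Since the paper's convention is that representing $B$ uses $\lceil\log_2|B|\rceil + \lceil\log\binom{n}{|B|}\rceil$ bits (with $\s'$ being the expectation of this quantity), this concatenation yields a prefix-free binary code for $S$ of expected length at most $\s + \s' + 1$, where the extra bit is standard Kraft slack arising from making the description of $(|B|, B)$ self-delimiting once appended after the fixed-length block $M$. Applying Shannon's bound pointwise in $\randcom$ and then averaging,
\[
\s + \s' + 1 \;\ge\; \E\bigl[|\enc(S)|\bigr] \;\ge\; H(S) \;=\; \log\binom{n}{m},
\]
which rearranges to the claimed inequality $\s \ge \log\binom{n}{m} - \s' - 1$.

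The main point requiring care is verifying that the serialization of $(M,B)$ really is prefix-free with only a one-bit overhead beyond $\s + \s'$, so that Shannon's theorem applies cleanly and no hidden logarithmic factors appear; once that is done, the rest is a direct invocation of the source coding theorem against the uniform distribution on $\binom{[n]}{m}$.
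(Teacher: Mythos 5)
Your proof is correct and takes the same route the paper gestures at (the paper offers no written proof, only the remark that the lemma ``follows from the source coding theorem''), namely: condition on the public randomness, observe that $\enc$ is then an injection on $\binom{[n]}m$ since $\dec$ recovers $S$ with probability $1$, serialize the output as a prefix-free code, and apply the noiseless source-coding bound $\E|\enc(S)| \ge H(S) = \log\binom nm$ pointwise in $\randcom$ before averaging. One small bookkeeping point: the encoder actually emits $(M,B,b)$ and $\dec$ needs $b$ as well, so the injection is really $S\mapsto(M,B,b)$; this is consistent with the paper's usage, since when the lemma is later invoked $\s'$ is taken to be $\log n + R + \E\log\binom n{|B|}$, i.e.\ the $R$ bits for $b$ are folded into ``the bits to represent $B$,'' and the $+1$ is exactly the ceiling/Kraft slack in $\lceil\log\binom n{|B|}\rceil$ as you say.
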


Section~\ref{sec:optimal-lb} provides our first proof that $\randcom^{\rightarrow,pub}_\delta(\ur^\subset) = \Omega(\min\{n, \log^2(\frac n{\log(1/\delta)}) \log \frac{1}{\delta}\})$. We extend our results in Section~\ref{sec:k-samples-lb} to $\ur_k^\subset$ for $k\ge 1$, proving a lower bound of $\Omega(k\log^2(n/k))$ communication even for constant failure probability.

\subsection{Lower bound proof via reduction from $\aug_N$}

Our second lower bound proof for $\ur^\subset$ is via a randomized reduction from $\aug_N$ \cite{MiltersenNSW98}. In this problem, Charlie receives $z\in\{0,1\}^N$ and Diane receives $j^*\in[N]$ together with $z_j$ for $j=j^*+1,\ldots,N$, and Diane must output $z_{j^*}$. It is shown in \cite{MiltersenNSW98} that $\randcom^{\rightarrow, pub}_\delta(\aug_N) = \Omega(N)$ for any $\delta$ bounded away from $1/2$. In our reduction, $N = \Theta(\log(1/\delta)\log^2\frac n{\log(1/\delta)})$.

For $\ur^\subset$, we can also think of the problem as Alice being given $S\subseteq[n]$ and Bob being given $T\subsetneq S$, and Bob must output some element of $S\backslash T$. In $\aug_N$, Charlie views his input as $L = \Theta(\log \frac n{\log(1/\delta)})$ blocks of bits of nearly equal size, where the $i$th block represents a subset $S_i$ of $[u_i]$ in some collection $\mathcal S_{u_i,m}$ of sets, for some carefully chosen universe sizes $u_i$ per block. Here $\mathcal S_{u_i,m}$ is a collection of subsets of $[u_i]$ of size $m$ of maximal size such any two sets in the collection have intersection size strictly less than $m/2$. Furthermore,  Diane's index $j^*$ is in some particular block of bits corresponding to some set $S_{i^*}$, and Diane also knows $S_i$ for $i>j$.

Now we explain the reduction. We assume some protocol $\mathcal P$ for $\ur^\subset$, and we give a protocol $\mathcal P'$ for $\aug_N$. First, we define the universe $A = \bigcup_{i=1}^L (\{i\} \times [u_i]\times [100^i])$, which has size $n$. Charlie then defines $S = \bigcup_{i=1}^L (\{i\} \times S_i \times [100^i])$. Charlie and
Diane use public randomness to define a uniformly random permutation $\pi$ on $[n]$. Charlie can compute
$\pi(S)$. Also, since Diane knows $S_i$ for $i > i^*$, she can define $T =  \bigcup_{i=i^*+1}^L (\{i\} \times S_i \times [100^i])$
and compute $\pi(T)$. Then $\pi(S)$ and $\pi(T)$ are the inputs to Alice and Bob in the protocol $\mathcal P$
for $\ur^\subset$. Charlie sends Diane the message Alice would have sent Bob in $\mathcal P$ if her input had been $\pi(S)$, and Diane simulates Bob to recover an element in $\pi(S)\backslash \pi(T)$.
Importantly, Alice and Bob do not know anything about $\pi$ at this point other than that $\pi(S) = S$ and $\pi(T) = T$. Thus,
the protocol $\mathcal P$ for $\ur^\subset$, if it succeeds, outputs an arbitrary element $j \in \pi(S) \backslash \pi(T)$, which is a 
deterministic function of the labels of elements in $\pi(S)$ and $\pi(T)$ and the randomness $R$ that Alice and Bob share, which is independent from the randomness in $\pi$. Since $\pi$ is still a uniformly random map conditioned on $\pi(S) = S$
and $\pi(t) = t$ for each $t \in T$, and $j \in \pi(S) \backslash \pi(T)$, it follows that $\pi^{-1}(t)$ is a uniformly random element
of $S \setminus T$. After receiving $\pi^{-1}(j)$, if $(i, a, r) = \pi^{-1}(j)$, then
Charlie and Diane reveal the pairs $((i, a, z), \pi((i,a,z)))$ for each $z \in [100^i]$ to Alice and Bob and Bob updates
his set $\pi(T)$ to include $\pi(i,a,z)$ for each $z \in [100^i]$. One can show that at each step in this process, if Alice and Bob
succeed in outputting an arbitrary item $j$ from $\pi(S) \setminus \pi(T)$, then this is a uniformly random item from
$\pi(S) \setminus \pi(T)$. The fact that this item is uniformly random is crucial for arguing the number of computation paths
of the protocol of Alice and Bob is $o(1/\delta)$ with good probability, over $\pi$, so that one can argue (see below) that with good probability on
every such computation path Alice and Bob succeed on that path, over their randomness $R$. Although the idea of using a random permutation appeared in \cite{JowhariST11} to show that any public coin $\ur$ protocol can be made
into one in which a uniformly random element of $S \backslash T$ is output, here we must use this idea adaptively, slowly revealing
information about $\pi$ and arguing that this property is maintained for each of Bob's successive queries.  

Due to geometrically increasing repetitions of items for increasing $i$, a uniformly random element in $S\backslash T$ is roughly $100$ times more likely to correspond to an item in $S_{i^*}$ than in $S_i$ for $i<i^*$. Thus if Diane simulates Bob to recover a random element in $S\backslash T$, it is most likely to recover an element $j$ of $S_{i^*}$. She can then tell Bob to include $\pi(j)$ and its $100^{i^*}$ redundant copies to $\pi(T)$ and iterate.

There are several obstacles to overcome to make this work. First, iterating means using $\mathcal P$ adaptively, which was the same issue that arose in Section~\ref{sec:adaptivity-intro}. Second, a constant fraction of the time ($1/100$), we expect to obtain an element {\em not} in $S_{i^*}$, but rather from some $S_i$ for $i<i^*$. If this happened too often, then Diane would need to execute many 
queries to recover a sufficiently large number of elements from $S_{i^*}$ in order to solve $\aug_N$. This would then require a union bound over too many possible computation paths, which would not be possible as Alice likely would fail on one
of them (over the choice of $R$). 
However, since the random permutation argument above ensures that at each step we receive
a uniformly random item from the current set $S \setminus T$, if we continue for $m$ iterations, we can argue that with large probability, our sequence of inputs $T$ over the iterations with which Diane invokes Bob's output are all likely to come from a family $\mathcal T$ of size at most $2^{O(m)}$. Here we need to carefully
construct this family to contain a smaller number of sets from levels $i$ for which $i^*-i$
is larger so that the overall number of sets is small. Given this, we can union bound over all such $T$, for total failure probability $\delta |\mathcal T| \ll 1$. Furthermore, we can also argue that after $m$ iterations, it is likely that we have recovered at least $m/2$ of the elements from $S_{i^*}$, which is enough to uniquely identify $S_{i^*}\in \mathcal S_{u_i,m}$ by the limited intersection property of $\mathcal S_{u_i,m}$.

\section{Lower bounds via the adaptivity lemma}\label{sec:adaptive-proof}
\subsection{Communication Lower Bound for $\ur^\subset$} \label{sec:optimal-lb}

Consider a protocol $\mathcal{P}$ for $\ur^\subset$ with failure probability $\delta$, operating in the one-way public coin model. When Alice's input is $x$ and Bob's is $y$, Alice sends $\sketch(x)$ to Bob, and Bob outputs $\query(\sketch(x), y)$, which with probability at least $1-\delta$ is in $\supp(x-y)$. As mentioned in Section~\ref{sec:overview}, we use $\mathcal{P}$ as a subroutine in a scheme for encoding/decoding elements of $\binom{[n]}m$ for $m = \lfloor \sqrt{n\log(1/\delta)}\rfloor$. We assume $\log \frac 1{\delta} \le n/64$, since for larger $n$ we have an $\Omega(n)$ lower bound.

\subsubsection{Encoding/decoding scheme}
We now describe our encoding/decoding scheme $(\enc, \dec)$ for elements in ${[n] \choose m}$, which uses $\mathcal{P}$ in a black-box way. The parameters shared by $\enc$ and $\dec$ are given in Algorithm~\ref{algo:para}.

As discussed in Section~\ref{sec:overview}, on input $S\in {[n] \choose m}$, $\enc$ computes $M \leftarrow \sketch(\mathbf{1}_S)$ as part of its output. Moreover, $\enc$ also outputs a subset $B\subseteq S$ computed as follows. Initially $B=S$ and $S_0=S$. $\enc$ proceeds in $R$ rounds.  In round $r\in[R]$, $\enc$ computes $s_r\leftarrow \query(M, \mathbf{1}_{S\backslash S_{r-1}})$.  Let $b$ denote a binary string of length $R$, where $b_r$ records whether $\query$ succeeds in round $r$.  $\enc$ also outputs $b$.  If $s_r\in S_{r-1}$, i.e. $\query(M, \mathbf{1}_{S\backslash S_{r-1}})$ succeeds, $\enc$ sets $b_r=1$ and removes $s_r$ from $B$ (since the decoder can recover $s_r$ from the $\ur^\subset$-protocol, $\enc$ does not need to include it in $B$); otherwise $\enc$ sets $b_r=0$.  At the end of round $r$, $\enc$ picks a uniformly random set $S_r$ in $\binom{S_{r-1}\backslash \{s_r\}}{n_r}$.  In particular, $\enc$ uses its shared randomness with $\dec$ to generate $S_r$ in such a way that $\enc, \dec$ agree on the sets $S_r$ ($\dec$ will actually iteratively construct $C_r = S\backslash S_r$). We present $\enc$ in Algorithm~\ref{algo:enc}.

The decoding process is symmetric.  Let $C_0=\emptyset$ and $A=\emptyset$.  $\dec$ proceeds in $R$ rounds.  On round $r\in[R]$, $\dec$ obtains $s_r\in S\backslash C_{r-1}$ by invoking $\query(M, \mathbf{1}_{C_{r-1}})$.  By construction of $C_{r-1}$ (to be described later), it is guaranteed that $S_{r-1}=S\backslash C_{r-1}$.  Therefore, $\dec$ recovers exactly the same $s_r$ as $\enc$.  $\dec$ initially assigns $C_r\leftarrow C_{r-1}$.  If $b_r=1$, $\dec$ adds $s_r$ to both $A$ and $C_r$.  At the end of round $r$, $\dec$ inserts many random items from $B$ into $C_r$ so that $C_r=S\backslash S_r$.  $\dec$ can achieve this because of the shared random permutation $\pi$ when constructing $S_r$.  In the end, $\dec$ outputs $B\cup A$.  We present $\dec$ in Algorithm~\ref{algo:dec}.

\begin{algorithm}[H] 
  \caption{Variables shared by encoder $\enc$ and decoder $\dec$.} \label{algo:para}
  \begin{algorithmic}[1] 
    \State $m\leftarrow \lfloor \sqrt{n \log\frac{1}{\delta}} \rfloor$ 
    \State $K\leftarrow \lfloor \frac{1}{16}\log \frac{1}{\delta} \rfloor$
    \State $R\leftarrow \lfloor K\log(m/4K) \rfloor$
    \For {$r = 0, \ldots, R$}
      \State $n_r\leftarrow \lfloor m \cdot 2^{-\frac{r}{K}} \rfloor$ \Comment{$|S_r|=n_r$, and $\forall r\ n_r-n_{r+1}\ge 2$}
    \EndFor
    \State $\pi$ is a random permutation on $[n]$ \Comment{Used to generate $S_r$ and $C_r$}
  \end{algorithmic}
\end{algorithm}

\begin{algorithm}[H] 
  \caption{Encoder $\enc$.} \label{algo:enc}
  \begin{algorithmic}[1]
    \Procedure{$\enc$}{$S$}
    \State $M \leftarrow \sketch(\mathbf{1}_S)$
    \State $A\leftarrow \emptyset$ \Comment{the set $\dec$ recovers just from $M$}
    \State $S_0 \leftarrow S$ \Comment{at end of round $r$, $\dec$ still needs to recover $S_r$}
    \For {$r=1,\ldots,R$}
      \State $s_r\leftarrow \query(M, \mathbf{1}_{S\backslash S_{r-1}})$ \Comment{$s_r\mathbin{\stackrel{\rm ?}{\in}} S_{r-1}$ found in round $r$}
      \State $S_r\leftarrow S_{r-1}$
      \If {$s_r\in S_{r-1}$} \Comment{i.e. if $s_r$ is a valid sample}
        \State $b_r\leftarrow 1$ \Comment{$b\in\{0,1\}^R$ indicating which rounds succeed}
        \State $A\leftarrow A \cup \{s_r\}$
        \State $S_r\leftarrow S_r \backslash \{s_r\}$
      \Else 
        \State $b_r\leftarrow 0$
      \EndIf
      \State Remove $|S_r|-n_r$ elements from $S_r$ with smallest $\pi_a$'s among $a\in S_r$ \Comment{now $|S_r|=n_r$}
    \EndFor
    \State \Return ($M$, $S\backslash A$, $b$) 
    \EndProcedure
  \end{algorithmic}
\end{algorithm}

\begin{algorithm}[H] 
  \caption{Decoder $\dec$.} \label{algo:dec}
  \begin{algorithmic}[1]
    \Procedure{$\dec$}{$M$, $B$, $b$}
    \Statex \ \ \ \ \ $\triangleright$ $M$ is $\sketch(\mathbf{1}_S)$
    \Statex \ \ \ \ \ $\triangleright$ $b\in\{0,1\}^R$ indicates rounds in which Bob succeeds
    \Statex \ \ \ \ \ $\triangleright$ $B$ contains all elements of $S$ that $\dec$ doesn't recover via $M$
    \State $A\leftarrow \emptyset$ \Comment{the subset of $S$ $\dec$ recovers just from $M$}
    \State $C_0 \leftarrow \emptyset$ \Comment{subset of $S$ we have built up so far}
    \For {$r=1,\ldots,R$} \Comment{each iteration tries to recover $1$ element via $M$}
      \State $C_r\leftarrow C_{r-1}$
      \If{$b_r=1$} \Comment{this means Bob succeeds in round $r$}
        \State $s_r\leftarrow \query(M, \mathbf{1}_{C_{r-1}})$ \Comment{Invariant: $C_r=S \backslash S_r$ ($S_r$ is defined in $\enc$)}
        \State $A\leftarrow A \cup \{s_r\}$
        \State $C_r\leftarrow C_r \cup \{s_r\}$
      \EndIf
       \State Insert $m-n_r-|C_r|$ items into $C_r$ with smallest $\pi_a$'s among $a\in B\backslash C_r$
       \Statex \Comment{Random masking ``Differential Privacy'' step. Still $n_r$ elements left to recover.}
    \EndFor
    \State \Return $B\cup A$ 
    \EndProcedure
  \end{algorithmic}
\end{algorithm}

\subsubsection{Analysis}

We have two random objects in our encoding/decoding scheme: (1) the random source used by $\mathcal{P}$, denoted by $X$, and (2) the random permutation $\pi$. These are independent.

First, we can prove that $\dec(\enc(S))=S$.  That is, for any fixing of the randomness in $X$ and $\pi$, $\dec$ will always decode $S$ successfully.  It is because $\enc$ and $\dec$ share $X$ and $\pi$, so that $\dec$ essentially simulates $\enc$.  We formally prove this by induction in Lemma~\ref{lemma:zero-fail-prob}.

Now our goal is to prove that by using the $\ur^\subset$-protocol, the number of bits that $\enc$ saves in expectation over the naive $\lceil\log(^n_m)\rceil$-bit encoding is $\Omega(\log \frac{1}{\delta}\log^2 \frac{n}{\log (1/\delta)} )$ bits.  Intuitively, it is equivalent to prove the number of elements that $\enc$ saves is $\Omega(\log \frac{1}{\delta}\log \frac{n}{\log (1/\delta)} )$.
We formalize this in Lemma~\ref{lemma:bits-saving}. 
Note that $\enc$ also needs to output $b$ (i.e., whether the $\query$ succeeds on $R$ rounds), which takes $R$ bits. 
By our setting of parameters, we can afford the loss of $R$ bits.  Thus it is sufficient to prove $\E|B|=|S|-\Omega(\log \frac{1}{\delta}\log \frac{n}{\log (1/\delta)})$. 

We have $|S|-|B|=\sum_{r=1}^{R}b_r$. 
In Lemma~\ref{lem:information}, we prove the probability that $\query$ fails on round $r$ is upper bounded by $\frac{I(X;S_{r-1})+1}{\log \frac{1}{\delta}}$, where $I(X;S_{r-1})$ is the mutual information between $X$ and $S_{r-1}$. 
Furthermore, we will show in Lemma~\ref{lemma:mutual-entropy-bound} that $I(X;S_{r-1})$ is upper bounded by $O(K)$.
By our setting of parameters, we have $\E b_r=\Omega(1)$ and thus $\E(|S|-|B|)=\Omega(R)=\Omega(\log \frac{1}{\delta}\log \frac{n}{\log (1/\delta)})$.
 
\begin{lemma}\label{lemma:zero-fail-prob}
  $\dec(\enc(S))=S$.
\end{lemma}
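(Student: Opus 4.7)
The plan is to prove by induction on $r\in\{0,1,\ldots,R\}$ the invariant
\[
C_r \;=\; S\setminus S_r,
\]
where $S_r$ denotes the encoder's set after round $r$ and $C_r$ denotes the decoder's set after round $r$. Alongside this I will track the auxiliary claim that, after round $r$, the decoder's $A$ equals the encoder's $A$, which I denote $A_r^{\mathrm{enc}}$. Given both claims at $r=R$, the decoder returns $B\cup A = (S\setminus A_R^{\mathrm{enc}})\cup A_R^{\mathrm{enc}} = S$, as desired.

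The base case $r=0$ is immediate: $C_0=\emptyset = S\setminus S_0$ and both $A$'s are empty. For the inductive step, assume $C_{r-1}=S\setminus S_{r-1}$ and that the two $A$'s match through round $r-1$. The encoder's round-$r$ query is $\query(M,\mathbf{1}_{S\setminus S_{r-1}})$ and the decoder's is $\query(M,\mathbf{1}_{C_{r-1}})$; since $\query$ is a deterministic function of its arguments and the shared random source $X$, these calls return the same $s_r$. Hence $A$ is updated identically (adding $s_r$ iff $b_r=1$), and after the conditional block one has, in both the success and failure cases, exactly $C_r = S\setminus S_r^{\mathrm{pre}}$, where $S_r^{\mathrm{pre}}$ denotes the encoder's $S_r$ just before its pruning step.

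The delicate part is showing that the noise step maintains the invariant. Let $T_r:=S_r^{\mathrm{pre}}$ be the set the encoder prunes, and let $D_r\subseteq T_r$ be the $|T_r|-n_r$ elements of smallest $\pi$-value — the elements the encoder discards. A direct count shows the decoder inserts exactly $|T_r|-n_r$ items, so it suffices to prove that the smallest-$\pi$ elements of $B\setminus C_r$ coincide with $D_r$. Using the induction hypothesis one writes $T_r = S\setminus(A_r^{\mathrm{enc}}\cup N_{r-1})$ and $B\setminus C_r = S\setminus(A_R^{\mathrm{enc}}\cup N_{r-1})$, where $N_{r-1}$ is the noise accumulated through round $r-1$; therefore $B\setminus C_r = T_r\setminus(A_R^{\mathrm{enc}}\setminus A_r^{\mathrm{enc}})$. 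The crucial observation is that each $s_{r'}\in A_R^{\mathrm{enc}}\setminus A_r^{\mathrm{enc}}$ with $r'>r$ satisfies $s_{r'}\in S_{r'-1}\subseteq S_r$ by the nesting $S_R\subseteq\cdots\subseteq S_0$, whereas $D_r$ is discarded from $T_r$ and hence disjoint from $S_r$. Consequently $D_r\subseteq B\setminus C_r\subseteq T_r$, and since $D_r$ is the set of $|D_r|$ smallest-$\pi$ elements of the larger set $T_r$, it is a fortiori the set of $|D_r|$ smallest-$\pi$ elements of the intermediate set $B\setminus C_r$. So the decoder inserts precisely $D_r$, giving $C_r=S\setminus S_r$ and closing the induction.

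The single step that requires care is the alignment of noise: the decoder defines its noise pool using $B$, hence implicitly using $A_R^{\mathrm{enc}}$ (the encoder's final $A$, which depends on future rounds), whereas the encoder at round $r$ only knows $A_r^{\mathrm{enc}}$. The nesting $S_R\subseteq\cdots\subseteq S_0$ together with the fact that each later recovery $s_{r'}$ lies in $S_{r'-1}$ is precisely what reconciles this temporal mismatch. Everything else is bookkeeping.
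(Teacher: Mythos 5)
Your proof is correct and follows essentially the same route as the paper's: induction on rounds maintaining that $C_r$ and $S_r$ partition $S$, with the crux being that elements pruned by the encoder at round $r$ can never be recovered in later rounds (since later samples lie in $S_{r'-1}\subseteq S_r$) and hence belong to $B$, so the smallest-$\pi$ selections of encoder and decoder coincide. The only difference is presentational: you align the pruned set with the decoder's insertions in one shot via the sandwich $D_r\subseteq B\setminus C_r\subseteq T_r$, whereas the paper argues element by element over the repeated removals/insertions.
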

\begin{proof}
  We claim that for $r=0,\ldots, R$, $\{S_r, C_r\}$ is a partition of $S$ ($S_r$ is defined in Algorithm~\ref{algo:enc}, and $C_r$ in Algorithm~\ref{algo:dec}). We prove the claim by induction on $r$. Our base case is $r=0$, for which the claim holds since $S_0 = S$, $C_0 = \emptyset$.
  
  Assume the claim holds for $r-1$ ($1\le r \le R$), and we consider round $r$.  On round $r$, by induction $S\backslash S_{r-1}=C_{r-1}$, the index $s_r$ obtained by both \enc and \dec are the same.  Initially $S_r=S_{r-1}$ and $C_r=C_{r-1}$, and so $\{S_r,C_r\}$ is a partition of $S$.  If $s_r$ is a valid sample (i.e. $s_r\in S_{r-1}$), then $b_r=1$, and \enc removes $s_r$ from $S_r$ and in the meanwhile \dec inserts $s_r$ into $C_r$, so that $\{S_r, C_r\}$ remains a partition of $S$. Next, \enc repeats removing the $a$ from $S_r$ with the smallest $\pi_a$ value until $|S_r|=n_r$. Symmetrically, \dec repeats inserting the $a$ into $C_r$ with the smallest $\pi_a$ value among $a\in B\backslash C_r$, until $|C_r|=|S|-n_r$. In the end we have $|S_r|+|C_r|=|S|$, so \enc and \dec execute repetition the same number of times.  Moreover, we can prove that during the same iteration of this repeated insertion, the element removed from $S_r$ is exactly the same element inserted to $C_r$.  This is because in the beginning of a repetition $\{S_r, C_r\}$ is a partition of $S$.  We have $B\backslash C_r\subseteq S\backslash C_r=S_r$. Let $a^*$ denote $a\in S_r$ that minimizes $\pi_a$.  Then $a^*\in B\backslash C_r\subseteq S_r$ (since $a^*$ will be removed from $S_r$, it has no chance to be included in $S$ in \enc, so that $B$ contains $a^*$), and $\pi_{a^*}$ is also the smallest among $\{\pi_a : a\in B\backslash C_r\}$.  Thus both $\enc$ and $\dec$ will take $a^{*}$ (for \enc, to remove from $S_r$, and for \dec, to insert into $C_r$).  Therefore, $\{S_r, C_r\}$ remains a partition of $S$.
  
  Given the fact that $\{S_r, C_r\}$ is a partition of $S$, the $s_r$ are the same in \enc and \dec.  Furthermore, $A=\{s_r : b_r=1,r=1,\ldots, R\}$ are the same in \enc and \dec.  We know $A\subseteq S$.  Since \enc outputs $S\backslash A$, and \dec outputs $(S\backslash A)\cup A$, we have $\dec(\enc(S))=S$.
\end{proof}

\begin{lemma} \label{lemma:bits-saving}
Let $W\in \mathbb{N}$ be a random variable with $W\le m$ and $\E W\le m-d$. Then $\E(\log {n \choose m}-\log {n \choose W})\ge d \log (\frac{n}{m}-1)$.
\end{lemma}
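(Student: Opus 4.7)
The plan is to establish a pointwise (i.e.\ almost-sure) inequality
\[
\log\binom{n}{m} - \log\binom{n}{W} \;\ge\; (m-W)\log\!\left(\frac{n}{m}-1\right),
\]
valid whenever $W$ is an integer with $0 \le W \le m$, and then take expectations and use $\E W \le m-d$ together with linearity. This reduces the lemma to a routine deterministic estimate on ratios of binomial coefficients.

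To prove the pointwise bound, I would write the ratio of binomial coefficients as a telescoping product:
\[
\frac{\binom{n}{m}}{\binom{n}{W}} \;=\; \prod_{i=W}^{m-1} \frac{n-i}{i+1}.
\]
For every index $i$ in the range $W \le i \le m-1$ we have $n-i \ge n-m+1$ and $i+1 \le m$, so each factor satisfies
\[
\frac{n-i}{i+1} \;\ge\; \frac{n-m+1}{m} \;\ge\; \frac{n}{m}-1.
\]
Taking logarithms of the product then yields the claimed pointwise inequality, with the number of factors in the product being exactly $m-W \ge 0$ (and the inequality trivially reducing to $0 \ge 0$ when $W=m$).

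Now take expectations. By linearity and the pointwise inequality,
\[
\E\!\left[\log\binom{n}{m} - \log\binom{n}{W}\right] \;\ge\; \E[m-W]\cdot \log\!\left(\frac{n}{m}-1\right).
\]
Since $\E W \le m-d$ implies $\E[m-W] \ge d$, and since $\log(n/m-1) \ge 0$ whenever $n \ge 2m$ (which is the regime of interest, given the parameter choice $m = \lfloor\sqrt{n\log(1/\delta)}\rfloor$ with $\log(1/\delta) \le n/64$), the right-hand side is at least $d\log(n/m-1)$, as required.

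The only conceivable obstacle is making sure $\log(n/m-1)$ is nonnegative so that the inequality goes the right way after multiplying by $\E[m-W]$; this is verified via the standing assumption $\log(1/\delta) \le n/64$ which forces $m \le n/8 \le n/2$. Otherwise the proof is essentially a one-line comparison of binomial coefficients followed by linearity of expectation, and requires no further ideas.
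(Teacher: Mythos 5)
Your proof is correct and follows essentially the same route as the paper's: both write $\log\binom{n}{m}-\log\binom{n}{W}$ as a telescoping sum of $m-W$ logarithmic factors, bound each factor below by $\log\frac{n-m}{m}=\log(\frac nm-1)$, and then take expectations using $\E[m-W]\ge d$. Your explicit remark that one needs $\log(\frac nm -1)\ge 0$ (guaranteed by the parameter regime $m\le n/2$) is a point the paper leaves implicit, but otherwise the two arguments coincide.
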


\begin{proof}
  \begin{align*}
  \log {n \choose m}-\log {n \choose W}
  &= \log \frac{n!/(m!(n-m)!)}{n!/(W!(n-W)!)} \\
  &= \sum_{i=1}^{m-W}\log \frac{n-W-i+1}{m-i+1} \\
  &\ge (m-W)\cdot \log \frac{n-W}{m} \\
  &\ge (m-W)\cdot \log \frac{n-m}{m}
  \end{align*}
  
  Taking expectation on both sides, we have $\E(\log {n \choose m}-\log {n \choose W})\ge d \log (\frac{n}{m}-1)$. 
\end{proof}

\noindent \textbf{Lemma~\ref{lem:information} (restated).}
  Consider $f$: $\{0,1\}^b\times \{0,1\}^q\rightarrow \{0,1\}$ and $X\in\{0,1\}^b$ uniformly random. If $\forall y\in \{0,1\}^q,\ \Pr(f(X,y)=1)\le \delta$ where $0<\delta<1$, then for any r.v.\ $Y$ supported on $\{0,1\}^q$,
$$
  \Pr(f(X,Y)=1)\le \frac{I(X;Y)+H_2(\delta)}{\log \frac{1}{\delta}} ,
$$
  where $I(X;Y)$ is the mutual information between $X$ and $Y$, and $H_2$ is the binary entropy function.
\begin{proof}
  It is equivalent to prove 
$$I(X;Y)\ge \E(f(X,Y))\cdot \log\frac{1}{\delta}-H_2(\delta).$$
By definition of mutual entropy $I(X;Y)=H(X)-H(X|Y)$, where $H(X)=b$ and we must show
$$H(X|Y)\le H_2(\delta)+(1-\E(f(X,Y)))\cdot b+\E(f(X,Y))\cdot (b-\log\frac{1}{\delta})=b+H_2(\delta)-\E(f(X,Y))\cdot \log\frac{1}{\delta} .$$
  The upper bound for $H(X|Y)$ is obtained by considering the following one-way communication problem: Alice knows both $X$ and $Y$ while Bob only knows $Y$, and Alice must send a single message to Bob so that Bob can recover $X$. The expected message length in an optimal protocol is exactly $H(X|Y)$.  Thus, any protocol gives an upper bound for $H(X|Y)$, and we simply take the following protocol: Alice prepends a $1$ bit to her message iff $f(X,Y) = 1$ (taking $H_2(\delta)$ bits in expectation). Then if $f(X,Y)=0$, Alice sends $X$ directly (taking $b$ bits). Otherwise, when $f(X,Y)=1$, Alice sends the index of $X$ in $\{x|f(x,Y)=1\}$ (taking $\log (\delta 2^b)=b-\log\frac{1}{\delta}$ bits).  
\end{proof}

\begin{corollary}\label{corollary:sampler-failure}
  Let $X$ denote the random source used by the $\ur^\subset$-protocol with failure probability at most $\delta$. If $S$ is a fixed set and $T\subset S$, $\Pr(\query(\sketch(\mathbf{1}_S), \mathbf{1}_T)\not\in S\backslash T)\le \frac{I(X;T)+H_2(\delta)}{\log\frac{1}{\delta}}$.
\end{corollary}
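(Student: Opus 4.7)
The plan is to obtain the corollary as an essentially immediate instantiation of Lemma~\ref{lem:information}, with the only work being to package the $\ur^\subset$ protocol's guarantee into the form required by that lemma. Let $b$ denote the length of the public random string used by the protocol, so that we may view $X$ as uniformly distributed on $\{0,1\}^b$, and identify a subset of $[n]$ with its indicator in $\{0,1\}^n$, so that a random $T \subset S$ becomes a random variable on $\{0,1\}^n$.

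Next, I would define the Boolean function $f : \{0,1\}^b \times \{0,1\}^n \to \{0,1\}$ by
\[
f(x,t) \;=\; \mathbf{1}\bigl[\,\query\bigl(\sketch(\mathbf{1}_S), \mathbf{1}_t\bigr) \notin S \setminus t\,\bigr],
\]
where on the right-hand side the protocol's internal coins are fixed to $x$, making $\sketch$ and $\query$ deterministic. For any fixed $t$ that is a proper subset of $S$, the pair $(\mathbf{1}_S, \mathbf{1}_t)$ is a valid $\ur^\subset$ instance, so correctness of $\mathcal{P}$ gives $\Pr_X(f(X,t)=1) \le \delta$. For the (degenerate) values of $t \in \{0,1\}^n$ that do not correspond to a proper subset of $S$, I would simply redefine $f(\cdot, t) \equiv 0$; this does not affect anything, since by hypothesis the random variable $T$ is supported on proper subsets of $S$, so the event $\{f(X,T)=1\}$ is unchanged, while we now have $\Pr_X(f(X,t)=1) \le \delta$ for every $t \in \{0,1\}^n$ as required by Lemma~\ref{lem:information}.

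Finally, I would apply Lemma~\ref{lem:information} with this $f$, the uniform $X$, and $Y := T$ (viewed as a random indicator vector in $\{0,1\}^n$), which yields
\[
\Pr\bigl(\query(\sketch(\mathbf{1}_S), \mathbf{1}_T) \notin S\setminus T\bigr) \;=\; \Pr(f(X,T)=1) \;\le\; \frac{I(X;T) + H_2(\delta)}{\log(1/\delta)},
\]
which is exactly the claimed bound. I do not anticipate any real obstacle here: the only subtlety is the domain-extension of $f$ to all of $\{0,1\}^n$ so that Lemma~\ref{lem:information} applies verbatim, and ensuring that we identify the protocol's randomness with a uniform string so that $X$ satisfies the lemma's hypothesis (which is standard, since any public-coin protocol can be assumed to use uniform random bits).
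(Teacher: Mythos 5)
Your proposal is correct and matches the paper's intent exactly: the corollary is stated as an immediate instantiation of Lemma~\ref{lem:information} with $f(x,t)$ the indicator that $\query(\sketch(\mathbf{1}_S),\mathbf{1}_t)$ fails under randomness $x$, and $Y=T$. Your handling of the domain extension (setting $f(\cdot,t)\equiv 0$ for $t$ not encoding a proper subset of $S$, so the lemma's uniform hypothesis holds for all $y$) is a reasonable tidying of a detail the paper leaves implicit.
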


\begin{lemma}\label{lemma:mutual-entropy-bound}
  $I(X;S_r)\le 6K$, for $r=1,\ldots, R$.
\end{lemma}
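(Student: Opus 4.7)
The plan is to reduce the mutual information bound to a structural observation followed by a careful information-theoretic argument. First, I would prove by induction on $r$ the structural identity: $S_r$ equals exactly the top $n_r$ elements of $S\setminus A_r$ under the $\pi$-ordering, where $A_r:=\{s_i : b_i=1,\ 1\le i\le r\}$ collects the valid samples through round $r$. This follows by unrolling Algorithm~\ref{algo:enc}: in round $r$, we remove $s_r$ (if valid) and then the smallest-$\pi$ elements of $S_{r-1}\setminus\{s_r\}$; combined with the inductive form for $S_{r-1}$, these coincide with the next-smallest-$\pi$ elements of $S\setminus A_r$ not previously removed, so the inductive form is preserved.

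Given the identity, $S_r$ is a deterministic function of $(A_r,\pi)$. Since $\pi$ is drawn independently of $X$, the data-processing inequality gives
$$I(X;S_r) \;\le\; I(X;A_r,\pi) \;=\; I(X;\pi) + I(X;A_r\mid \pi) \;=\; I(X;A_r\mid \pi) \;\le\; H(A_r\mid \pi).$$
It thus suffices to bound $H(A_r\mid \pi)\le 6K$.

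I would prove this by induction on $r$, decomposing $H(A_r\mid \pi)$ via the chain rule into per-round contributions $H(b_r,s_r\mid \pi, s_{<r}, b_{<r})$. Conditional on $\pi$ and prior samples, the set $S_{r-1}$ is determined via the structural identity, and the new contribution is controlled using Corollary~\ref{corollary:sampler-failure} applied with the inductive hypothesis $I(X;S_{r-1})\le 6K$: this gives a failure probability of at most $(6K+H_2(\delta))/\log\tfrac{1}{\delta}\le (6K+1)/(16K) < 1/2$ for round $r$. Combined with the geometric retention $n_r/n_{r-1}\approx 2^{-1/K}=1-\Theta(1/K)$, one would argue a contractive recurrence of the form
$$I(X;S_r)\;\le\;\bigl(1-\tfrac{c}{K}\bigr)\,I(X;S_{r-1})+O(1),$$
so that the steady-state bound is $O(K)$.

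The main obstacle is precisely establishing this contraction. A naive per-round additive bound on $H(b_r,s_r\mid \pi,s_{<r})$ gives a term of order $\log n_{r-1}$ per round, summing to $\Theta(R\log m)$ which vastly exceeds $6K$. The sharp argument requires showing that the randomness introduced by the masking step (which retains a uniformly random $2^{-1/K}$ fraction of the surviving elements via $\pi$) actively ``forgets'' a $\Theta(1/K)$-fraction of the accumulated mutual information with $X$, not just adds fresh noise. Quantifying this forgetting, likely via a detailed conditional-entropy decomposition relating $I(X;S_r)$ back to $I(X;S_{r-1})$ through the structural identity and the independence $\pi\perp X$, is the technical crux of the proof.
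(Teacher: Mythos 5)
Your structural identity (that $S_r$ is the set of $n_r$ elements of $S\setminus A_r$ with largest $\pi$-values, hence a deterministic function of $(A_r,\pi)$) is correct, and the data-processing chain $I(X;S_r)\le I(X;A_r\mid\pi)\le H(A_r\mid\pi)$ is valid. But, as you yourself observe, this reduction is fatally lossy: $H(A_r\mid\pi)$ can be of order $r\log m \gg 6K$, so nothing short of the contraction recurrence $I(X;S_r)\le(1-c/K)I(X;S_{r-1})+O(1)$ would save the argument, and you do not prove that recurrence --- you only name it as ``the technical crux.'' Since that recurrence \emph{is} the entire content of the lemma, the proposal as written is incomplete. (It is also not clear the recurrence is the right thing to aim for; nothing in the algorithm obviously ``forgets'' a $1/K$-fraction of the accumulated information per round, and I do not see how your conditional-entropy decomposition would produce a multiplicative decay rather than merely an additive increment.)

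The paper's proof sidesteps recurrences and the intermediate object $A_r$ entirely. It writes $I(X;S_r)=H(S_r)-H(S_r\mid X)$, bounds $H(S_r)\le\log\binom{m}{n_r}$ trivially, and lower-bounds $H(S_r\mid X=x)$ for each fixed $x$ by a pointwise near-uniformity claim: for every candidate $T\in\binom{S}{n_r}$,
$$\Pr(S_r=T\mid X=x)=\prod_{i=1}^r\Pr(T\subseteq S_i\mid T\subseteq S_{i-1})\le\prod_{i=1}^{r}\frac{\binom{n_{i-1}-n_r-1}{n_{i-1}-n_i-1}}{\binom{n_{i-1}-1}{n_{i-1}-n_i-1}},$$
using only that in round $i$ at least $n_{i-1}-n_i-1$ of the removed elements are a uniformly random subset of $S_{i-1}\setminus\{s_i\}$ (this holds no matter how adversarially $s_i$ depends on $x$). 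The product telescopes to $\binom{m}{n_r}^{-1}\prod_{i=1}^r\frac{n_{i-1}}{n_{i-1}-n_r}$, and the second factor is at most $2^{6K}$ by the geometric choice $n_r=\lfloor m\,2^{-r/K}\rfloor$ together with an estimate on $\prod_{j\ge1}(1-2^{-j/K})^{-1}$ (Lemma~\ref{lemma:Pochhammer}). Hence $H(S_r\mid X=x)\ge\log\binom{m}{n_r}-6K$ and the lemma follows. The moral difference: you try to bound how much information the \emph{samples} carry about $X$, which is genuinely large; the paper instead shows the masking makes the \emph{retained set} $S_r$ close to uniform on $\binom{S}{n_r}$ in a min-entropy sense conditioned on any $X=x$, which is what actually caps $I(X;S_r)$. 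To repair your proof you would need to supply an argument of this second kind; the route through $H(A_r\mid\pi)$ cannot be made to work.
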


\begin{proof}
  Note that $I(X;S_r)=H(S_r)-H(S_r|X)$. Since $|S_r|=n_r$ and $S_r\subseteq S$, $H(S_r)\le \log {m \choose n_r}$. Here is the main idea to lower bound $H(S_r|X)$: By definition of conditional entropy, $H(S_r|X)=\sum_x{p_x\cdot H(S_r|X=x)}$. We fix an arbitrary $x$. If we can prove that for any $T\subseteq S$ where $|T|=n_r$, $\Pr(S_r=T|X=x)\le p$, then by definition of entropy we have $H(S_r|X=x)\ge\log\frac{1}{p}$. 
  
  First we can prove for any fixed $T$,
  
  \begin{align}
    \Pr(S_r=T|X=x)\le \prod_{i=1}^{r}{\frac{{n_{i-1}-n_r-1 \choose n_{i-1}-n_i-1}}{{n_{i-1}-1 \choose n_{i-1}-n_i-1}}}.
  \end{align}
  
  We have $\Pr(S_r=T|X=x)=\Pi_{i=1}^{r}{\Pr(T\subseteq S_i|T\subseteq S_{i-1})}$. 
  On round $i$ ($1\le i \le r$), $\enc$ removes $n_{i-1}-n_i$ elements (at least $n_{i-1}-n_i-1$ of which are chosen all at random) from $S_{i-1}$ to obtain $S_i$. 
  Conditioned on the event that $T\subseteq S_{i-1}$, the probability that $T\subseteq S_i$ is at most ${{n_{i-1}-n_r-1 \choose n_{i-1}-n_i-1}}/{{n_{i-1}-1 \choose n_{i-1}-n_i-1}}$, where the equation achieves when $s_i\in S_{i-1}\backslash T$, and $\enc$ takes a uniformly random subset of $S_{i-1}\backslash \{s_i\}$ of size $n_{i-1}-n_i-1$, so that the subset does not intersect with $T$.
  
  Next we can prove 
  
  \begin{align}
    \prod_{i=1}^{r}{\frac{{n_{i-1}-n_r-1 \choose n_{i-1}-n_i-1}}{{n_{i-1}-1 \choose n_{i-1}-n_i-1}}} \le \frac{2^{6K}}{{m \choose n_r}}. \label{eqn:prod-bound}
  \end{align}
    
  For notational simplicity, let $n^{\underline{k}}$ denote $n\cdot (n-1)\ldots (n-k+1)$. We have 
  \begin{align}
    \prod_{i=1}^{r}{\frac{{n_{i-1}-n_r-1 \choose n_{i-1}-n_i-1}}{{n_{i-1}-1 \choose n_{i-1}-n_i-1}}}
    =\prod_{i=1}^{r}\frac{(n_{i-1}-n_r-1)!n_i!}{(n_{i-1}-1)!(n_i-n_r)!}
    =\prod_{i=1}^{r}\frac{n_i^{\underline{n_r}}}{(n_{i-1}-1)^{\underline{n_r}}}
    =\prod_{i=1}^{r} \left( \frac{n_i^{\underline{n_r}}}{n_{i-1}^{\underline{n_r}}}\cdot \frac{n_{i-1}}{n_{i-1}-n_r} \right).
  \end{align}
  
  By telescoping,
  \begin{align}
    \prod_{i=1}^{r} \frac{n_i^{\underline{n_r}}}{n_{i-1}^{\underline{n_r}}}
    =\frac{n_r^{\underline{n_r}}}{n_0^{\underline{n_r}}}
    =\frac{n_r!(n_0-n_r)!}{n_0!}=\frac{1}{{n_0 \choose n_r}}
    =\frac{1}{{m \choose n_r}}.
    \label{eqn:telescoping}
  \end{align}
  
  Moreover, 
  \begin{align}
    \prod_{i=1}^{r} \frac{n_{i-1}}{n_{i-1}-n_r}
    \le\prod_{i=1}^{r} \frac{1}{1-\frac{m\cdot 2^{-r/K}}{m\cdot 2^{-(i-1)/K}-1}}
    \le\prod_{i=1}^{r} \frac{1}{1-\frac{m\cdot 2^{-r/K}+1}{m\cdot 2^{-(i-1)/K}}}
    =\prod_{j=1}^{r} \frac{1}{1-2^{-j/K}-\frac{2^{\frac{r-j}{K}}}m}.
    \label{eqn:bound-with-floor}
  \end{align}
  
  By our setting of parameters 
  $$\frac{2^{\frac rK}}m \le \frac{2^{\frac RK}}m \le \frac{1}{4K} .$$
  
  Therefore, for $j\in \{1,\ldots, r\}$,
  $$\frac{1}{1-2^{-\frac jK}-\frac{2^{\frac{r-j}{K}}}m}\le \frac{1}{1-(1+\frac{1}{4K})2^{-\frac jK}}.$$ 
  
  By Taylor series $2^{1/K} = \sum_{n=0}^{\infty}{\frac{(\ln 2 )^n}{n!K^n}} >1+\frac{\ln 2}{K}>1+\frac{1}{4K}$, and thus $\frac{1}{1-(1+\frac{1}{4K})2^{-j/K}}\le \frac{1}{1-2^{(1-j)/K}}$, for $j=2,\ldots, r$. For $j=1$, we have $\frac{1}{1-(1+\frac{1}{4K})2^{-\frac 1K}} \le 2^K$.
  
  By Lemma~\ref{lemma:Pochhammer}, we have $\prod_{j=1}^{\infty} \frac{1}{1-2^{-j/K}}\le 2^{5K}$. Therefore, the right hand side of \eqref{eqn:bound-with-floor} is upper bounded by $2^{6K}$. Together with \eqref{eqn:telescoping}, we prove \eqref{eqn:prod-bound} holds.  
  
  Finally, let $p={2^{6K}}/{{m\choose n_r}}$, we have $\Pr(S_r=T|X=x)\le p$ and thus $H(S_r|X=x)\ge \log\frac{1}{p}=\log{{m\choose n_r}}-6K$. Therefore, $H(S_r|X)\ge \log{{m\choose n_r}}-6K$ and so $I(X;S_r)=H(S_r)-H(S_r|X)\le 6K$.  
\end{proof}

\begin{lemma}\label{lemma:Pochhammer}
  Let $K\in \mathbb{N}$ and $K\ge 1$. We have $\prod_{j=1}^{\infty} \frac{1}{1-2^{-j/K}}\le 2^{5K}$.
\end{lemma}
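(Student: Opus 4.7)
The plan is to take a logarithm and reduce the multiplicative inequality to the additive inequality
\[
\sum_{j=1}^{\infty} -\log_2(1-2^{-j/K}) \;\le\; 5K.
\]
To get a grip on $-\log_2(1-2^{-j/K})$, I would switch to natural logarithm and expand via the Mercator series $-\ln(1-x) = \sum_{k=1}^{\infty} x^k/k$, which converges since $x = 2^{-j/K} \in (0,1)$. Because every term is nonnegative, Tonelli justifies swapping the two sums, and the inner sum over $j$ collapses into a geometric series:
\[
\sum_{j=1}^{\infty} -\ln(1-2^{-j/K}) \;=\; \sum_{k=1}^{\infty} \frac{1}{k} \sum_{j=1}^{\infty} 2^{-jk/K} \;=\; \sum_{k=1}^{\infty} \frac{1}{k(2^{k/K}-1)}.
\]

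Next, I would apply the convexity bound $e^y \ge 1+y$ with $y = (k/K)\ln 2$, yielding $2^{k/K} - 1 \ge (k\ln 2)/K$, so each summand above is at most $K/(k^2 \ln 2)$. Recognizing the Basel sum, this gives
\[
\sum_{k=1}^{\infty} \frac{1}{k(2^{k/K}-1)} \;\le\; \frac{K}{\ln 2}\sum_{k=1}^{\infty}\frac{1}{k^2} \;=\; \frac{\pi^2 K}{6\ln 2}.
\]
Dividing by one more factor of $\ln 2$ to convert from $\ln$ to $\log_2$, I obtain $\sum_{j=1}^{\infty} -\log_2(1-2^{-j/K}) \le \pi^2 K/(6(\ln 2)^2) < 3.5\,K$, which is well below $5K$; exponentiating recovers the lemma.

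I do not foresee a real obstacle: nonnegativity makes the interchange of sums automatic, the estimate $2^{k/K}-1\ge (k\ln 2)/K$ is a one-line application of $e^y\ge 1+y$, and the remaining bound is the Basel identity. The gap between the obtained bound ($\approx 3.42\,K$) and the stated $5K$ simply reflects that the lemma is phrased with a forgiving constant; the same strategy would have given $5K$ had the convexity estimate been replaced by anything even slightly weaker, so no delicate tuning is required.
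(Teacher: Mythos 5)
Your argument is correct, and it takes a genuinely different route from the paper. The paper splits the product into the first $2K$ factors and the tail: for $j\le 2K$ it uses the elementary estimate $\frac{1}{1-2^{-x}}\le \frac{8}{3x}$ on $(0,2]$ together with $(2K)!\ge (2K/e)^{2K}$ to get a $2^{4K}$ bound, and for $j>2K$ it rounds $j/K$ down to an integer, groups the factors into blocks of $K$, and bounds $\prod_{i\ge 2}(1-2^{-i})^{-1}\le (1-\sum_{i\ge 2}2^{-i})^{-1}=2$, for another factor $2^K$. Your proof instead logarithmizes globally, swaps the two sums via Tonelli to arrive at $\sum_{k\ge 1}\frac{1}{k(2^{k/K}-1)}$, and then uses $e^y\ge 1+y$ plus the Basel sum to get the cleaner bound $2^{\pi^2K/(6\ln^2 2)}\approx 2^{3.43K}$. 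What your approach buys is a single uniform estimate with a sharper constant and no case split; what the paper's buys is avoidance of any series rearrangement or appeal to $\sum 1/k^2=\pi^2/6$, at the cost of a cruder constant. Both are complete; since only $2^{O(K)}$ is ever used downstream (in Lemma~\ref{lemma:mutual-entropy-bound}), the improved constant is not needed, but your derivation is a perfectly valid and arguably tidier substitute.
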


\begin{proof}
  First, we bound the product of first $2K$ terms. Note that $\frac{1}{1-2^{-x}}\le \frac{8}{3x}$ for $0<x\le 2$. Therefore, 
  \begin{align}
    \prod_{j=1}^{2K}\frac{1}{1-2^{-j/K}}
    \le (8/3)^{2K}\cdot \frac{K^{2K}}{(2K)!}
    \le (8/3)^{2K}\cdot \frac{K^{2K}}{(2K/e)^{2K}}
    = (4e/3)^{2K}
    < 2^{4K}. 
  \end{align}
  
  Then, we bound the product of the rest terms
  \begin{align}
    \prod_{j=2K+1}^{\infty}\frac{1}{1-2^{-j/K}} 
    \le \prod_{j=2K+1}^{\infty}\frac{1}{1-2^{-\lfloor j/K \rfloor}} 
    \le \prod_{i=2}^{\infty}\left( \frac{1}{1-2^{-i}}\right)^K 
    \le \left( \frac{1}{1-\sum_{i=2}^{\infty}2^{-i}}\right)^K
    = 2^K.
  \end{align}
  
  Multiplying two parts proves the lemma.
\end{proof}

\begin{theorem}
  $\randcom^{\rightarrow,pub}_\delta(\ur^\subset) = \Omega(\log \frac{1}{\delta}\log^2 \frac{n}{\log (1/\delta)} )$, given that $64 \le \log \frac{1}{\delta} \le \frac{n}{64}$.
\end{theorem}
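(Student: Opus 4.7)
The plan is to instantiate the encoder/decoder of Algorithms \ref{algo:enc}, \ref{algo:dec} with the parameters from Algorithm \ref{algo:para}, apply it to a uniformly random $S\in\binom{[n]}{m}$ where $m=\lfloor\sqrt{n\log(1/\delta)}\rfloor$, and then bound the expected encoding length using Lemma \ref{lemma:lb-meta}. By Lemma \ref{lemma:zero-fail-prob}, the scheme is Las Vegas, so Lemma \ref{lemma:lb-meta} will let us convert any gap between $\log\binom{n}{m}$ and the expected length of the $B$-part of the encoding into a lower bound on $|M|$, which is the $\ur^\subset$ communication cost plus the $R$ bits used to transmit $b$ and an $O(\log m)$ prefix encoding $|B|$.

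First I would bound the expected number $\E\sum_{r=1}^R b_r$ of successful rounds from below. The encoder's $r$th query feeds Bob the input $\mathbf 1_{S\setminus S_{r-1}}$, which depends on the protocol's random source $X$ only through $S_{r-1}$ (with $S$ fixed). Corollary \ref{corollary:sampler-failure} therefore gives
\[
\Pr(b_r=0)\ \le\ \frac{I(X;S_{r-1})+H_2(\delta)}{\log(1/\delta)}.
\]
Lemma \ref{lemma:mutual-entropy-bound} bounds $I(X;S_{r-1})\le 6K$, and with $K=\lfloor\frac{1}{16}\log\frac{1}{\delta}\rfloor$ and $\delta$ small (which is implied by $\log(1/\delta)\ge 64$) this gives $\Pr(b_r=0)\le 6/16+o(1)<1/2$. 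Hence $\E\sum_{r=1}^R b_r\ge R/2$, so the random subset $B=S\setminus A$ returned by the encoder satisfies $\E|B|\le m-R/2$.

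Next I would convert this savings on $|B|$ into a savings in the naive $\lceil\log\binom{n}{m}\rceil$-bit encoding. Lemma \ref{lemma:bits-saving} with $d=R/2$ gives
\[
\E\!\left[\log\tbinom{n}{m}-\log\tbinom{n}{|B|}\right]\ \ge\ \frac{R}{2}\log\!\left(\tfrac{n}{m}-1\right).
\]
The encoder outputs $M$, the bit-string $b\in\{0,1\}^R$, and $B$ encoded as a length prefix plus $\lceil\log\binom{n}{|B|}\rceil$ bits, for an expected total of $|M|+R+\E\log\binom{n}{|B|}+O(\log m)$ bits. Lemma \ref{lemma:lb-meta} then forces
\[
|M|\ \ge\ \log\tbinom{n}{m}-\E\log\tbinom{n}{|B|}-R-O(\log m)\ =\ \Omega\!\left(R\log\tfrac{n}{m}\right),
\]
where the last step uses that $R\log(n/m)$ dominates $R+O(\log m)$ since $\log(n/m)\ge 1$ in our regime. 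Finally I would plug in the parameters: $\log(n/m)=\Theta(\log(n/\log(1/\delta)))$, $\log(m/K)=\Theta(\log(n/\log(1/\delta)))$, and $R=\Theta(K\log(m/K))$, to get $|M|=\Omega(\log\frac{1}{\delta}\log^2\frac{n}{\log(1/\delta)})$ as claimed.

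The main obstacle is exactly what Lemma \ref{lem:information} and Lemma \ref{lemma:mutual-entropy-bound} are designed to handle: because the queries $\mathbf 1_{S\setminus S_{r-1}}$ are adaptively chosen in a way that depends on $X$, the usual per-input failure guarantee $\delta$ does not apply. One must argue that the random masking (which pads $S\setminus S_{r-1}$ up to size $m-n_{r-1}$ using the shared permutation $\pi$) keeps $I(X;S_{r-1})$ small, so that the adaptivity lemma still delivers a useful per-round failure bound. Carrying out the entropy bound in Lemma \ref{lemma:mutual-entropy-bound} (and in particular verifying that the geometric schedule $n_r=\lfloor m\cdot 2^{-r/K}\rfloor$ is compatible with $R$ iterations while keeping $I(X;S_r)=O(K)$) is the delicate piece; everything else in the proof is essentially bookkeeping around Lemma \ref{lemma:lb-meta}.
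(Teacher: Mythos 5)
Your proposal is correct and follows essentially the same route as the paper's proof: invoke Lemma~\ref{lemma:zero-fail-prob} for zero-error decoding, combine Corollary~\ref{corollary:sampler-failure} with Lemma~\ref{lemma:mutual-entropy-bound} to get a constant per-round success probability, and convert the expected savings $\E|B|\le m-\Omega(R)$ into a bit lower bound via Lemmas~\ref{lemma:bits-saving} and~\ref{lemma:lb-meta}. The only differences are cosmetic constants (the paper gets $\E b_r\ge 39/64$ where you get $\ge 1/2$), so there is nothing further to add.
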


\begin{proof}
  By Lemma~\ref{lemma:zero-fail-prob}, the success probability of protocol $(\enc,\dec)$ is $1$. 
  By Lemma~\ref{lemma:lb-meta}, we have $\s\ge \log (^n_m) - \s' -1$, where $\s'=\log n + R+ \E(\log (^n_{|B|}))$. 
  The size of $B$ is $|B|=|S|-\sum_{r=1}^{R}{b_r}$.
  By Corollary~\ref{corollary:sampler-failure}, conditioned on $S$, $\Pr(b_r=0)\le \frac{I(X;S_{r-1})+1}{\log\frac{1}{\delta}}$. 
  By Lemma~\ref{lemma:mutual-entropy-bound}, $I(X;S_{r-1})\le 6K$ (Note that when $r=1$, $I(X;S_0)=0\le 6K$). 
  Therefore, $\E(b_r)\ge 1-\frac{6K+1}{\log\frac{1}{\delta}}$.
  By the setting of parameters (see Algorithm~\ref{algo:para}) we have $\E(b_r)\ge \frac{39}{64}$. Therefore, $\E(|B|)\le |S|-\frac{39}{64}R$. 
  By Lemma~\ref{lemma:bits-saving}, $\log (^n_m)-\E(\log (^n_{|B|}))\ge \frac{39}{64}R\cdot \log (\frac{n}{m}-1) \ge \frac{1}{2}R\log (\frac{n}{\log(1/\delta)})$. 
  Furthermore, $\frac{1}{6}R\log \frac{n}{\log (1/\delta)} \ge R$.
  Thus we obtain $\s \ge \frac{R}{3}\log \frac{n}{\log(1/\delta)} -(\log n + 1)  =\Omega(\log \frac{1}{\delta}\log^2 \frac{n}{\log (1/\delta)} )$.
\end{proof}

\subsection{Communication Lower Bound for $\ur_k^\subset$}\label{sec:k-samples-lb}

In this section, we prove the lower bound $\randcom^{\rightarrow,pub}_{1/2}(\ur^\subset_k) = \Omega(\min\{n, k\log^2 \frac{n}{k}\})$. In fact, our lower bound holds for any failure probability $\delta$ bounded away from $1$. Let $\mathcal{P}$ denote a $\ur_k^\subset$-protocol where Alice sends $\sketch_k(x)$ to Bob, and Bob outputs $\query_k(\sketch_k(x), y)$.  We consider the following encoding/decoding scheme $(\enc_k, \dec_k)$ for $S\in {[n] \choose m}$.  $\enc_k$ computes $M\leftarrow \sketch_k(\mathbf{1}_S)$ as part of its message. In addition, $\enc_k$ includes $B\subseteq S$ constructed as follows, spending $\lceil\log{n\choose |B|}\rceil$ bits.  Initially $B= S$, and $\enc_k$ proceeds in $R=\Theta(\log (n/k))$ rounds.  Let $S_0=S\supseteq S_1\supseteq \ldots \supseteq S_R$ where $S_r$ is generated by sub-sampling each element in $S_{r-1}$ with probability $\frac{1}{2}$.  In round $r$ ($r=1,\ldots, R$), $\enc_k$ tries to obtain $k$ elements from $S_{r-1}$ by invoking $\query_k(M, \mathbf{1}_{S\backslash S_{r-1}})$, denoted by $A_k$, and removes $A_k\cap (S_{r-1}\backslash S_{r})$ (whose expected size is $\frac{k}{2}$) from $B$.  Note that $\dec_k$ is able to recover the elements in $A_k\cap (S_{r-1}\backslash S_{r})$.  For each round the failure probability of $\query_k$ is at most $\delta$.  Thus we have $\E(|S|-|B|)\ge \frac{k}{2}\cdot (1-\delta) \cdot R=\Omega(k\log\frac{n}{k})$.  Furthermore, each element contains $\Theta(\log \frac{n}{k})$ bits of information, thus yielding a lower bound of $\Omega(k\log^2\frac{n}{k})$ bits.

In this section we assume $k \le n/2^{10}$, since for larger $n$ we have an $\Omega(n)$ lower bound.

\subsubsection{Encoding/decoding scheme}
\begin{algorithm}[H] 
  \caption{Variables Shared by Encoder $\enc_k$ and Decoder $\dec_k$.} \label{algo:para4}
  \begin{algorithmic}[1] 
    \State $m\leftarrow \lfloor \sqrt{nk} \rfloor$
    \State $R\leftarrow \lfloor \frac{1}{2}\log (n/k) - 2 \rfloor$ \Comment{Note that $R\ge 3$ because $k\le \frac{n}{2^{10}}$}
    \State $T_0\leftarrow [n]$
    \For {$r = 1, \ldots, R$}
      \State $T_r\leftarrow \emptyset$
      \State For each $a\in T_{r-1}$, $T_r\leftarrow T_r\cup \{a\}$ with probability $\frac{1}{2}$ \Comment{We have $S_r=S\cap T_r$}
    \EndFor
  \end{algorithmic}
\end{algorithm}

\begin{algorithm}[H] 
  \caption{Encoder $\enc_k$.} \label{algo:enc4}
  \begin{algorithmic}[1]
    \Procedure{$\enc_k$}{$S$}
    \State $M \leftarrow \sketch_k(\mathbf{1}_S)$
    \State $A\leftarrow \emptyset$
    \For {$r=1,\ldots,R$}
    \State $A_r\leftarrow \query_k(M, \mathbf{1}_{S\backslash (S\cap T_{r-1})})$
    \If {$A_r\subseteq S\cap T_{r-1}$} \Comment{i.e. if $A_r$ is valid}
      \State $b_r\leftarrow 1$ \Comment{$b$ is a binary string of length $R$, indicating if $\query_k$ succeeds in round $r$}
      \State $A\leftarrow A \cup (A_r\cap (T_{r-1}\backslash T_r))$
    \Else 
      \State $b_r\leftarrow 0$
    \EndIf
    \EndFor
      \State \Return ($M$, $S\backslash A$, $b$) 
    \EndProcedure
  \end{algorithmic}
\end{algorithm}

\begin{algorithm}[H] 
  \caption{Decoder $\dec_k$.} \label{algo:dec4}
  \begin{algorithmic}[1]
    \Procedure{$\dec_k$}{$M$, $B$, $b$}
    \State $A\leftarrow \emptyset$
    \State $C_0 \leftarrow \emptyset$
    \For {$r=1,\ldots,R$}
      \State $C_r\leftarrow C_{r-1}$
      \If {$b_r=1$}
        \State $A_r\leftarrow \query_k(M, \mathbf{1}_{C_{r-1}})$ \Comment{Invariant: $C_r=S\backslash (S\cap T_r)$}
        \State $A\leftarrow A \cup (A_r\cap (T_{r-1}\backslash T_r))$
        \State $C_r\leftarrow C_r \cup (A_r\cap (T_{r-1}\backslash T_r))$
      \EndIf
      \State $C_r\leftarrow C_r \cup (B\cap (T_{r-1}\backslash T_r))$
    \EndFor
    \State \Return $B\cup A$ 
    \EndProcedure
  \end{algorithmic}
\end{algorithm}

\subsubsection{Analysis}

\begin{theorem}\label{thm:urk}
  $\randcom^{\rightarrow,pub}_\delta(\ur_k^\subset) = \Omega((1-\delta)k\log^2 \frac{n}{k} )$, given that $1 \le k \le \frac{n}{2^{10}}$ and $0<\delta \le 1-\frac{50\log n}{k\log^2(n/k)}$.
\end{theorem}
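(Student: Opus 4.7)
The plan is to mirror the encoding-argument structure used for $\ur^\subset$ in Section~\ref{sec:optimal-lb} and apply it to $(\enc_k,\dec_k)$, which compresses an arbitrary $S \in \binom{[n]}{m}$ with $m = \lfloor \sqrt{nk}\rfloor$. First, I would show by induction on $r$ that the decoder maintains the invariant $C_r = S \backslash T_r$: the base case $C_0 = \emptyset = S\backslash[n]$ holds since $T_0=[n]$. For the inductive step, both parties share the public randomness defining $T_{r-1}, T_r$ and the randomness $X$ of the underlying protocol $\mathcal{P}$, so they compute the same query response $A_r = \query_k(M,\mathbf{1}_{C_{r-1}})$. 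Adding $A_r \cap (T_{r-1}\backslash T_r)$ (when $b_r = 1$) together with $B \cap (T_{r-1}\backslash T_r)$ to $C_{r-1}$ exactly produces $S\cap(T_{r-1}\backslash T_r)$, since $S = A \sqcup B$ and the sets $T_{r'-1}\backslash T_{r'}$ are pairwise disjoint across $r'$. This gives $\dec_k(\enc_k(S)) = S$ deterministically.

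The core of the proof is to bound $\E(|S|-|B|) = \sum_{r=1}^R \E Y_r$ where $Y_r = |A_r\cap(T_{r-1}\backslash T_r)|\cdot\mathbf{1}[b_r=1]$. The key observation --- which makes this argument simpler than the one for $\ur^\subset$ --- is that $T_0,T_1,\ldots$ are generated from public randomness that is \emph{independent} of $X$ and is \emph{not} chosen adaptively based on previous query outcomes. Consequently Bob's input $\mathbf{1}_{S\backslash T_{r-1}}$ in round $r$ is a function of randomness independent of $X$, and the protocol guarantee gives $\Pr_X[b_r = 1 \mid T_{r-1}]\ge 1-\delta$ directly. Conditioned on $b_r=1$, we have $|A_r| = \min(k, |S\cap T_{r-1}|)$, and each element of $A_r$ lies in $T_{r-1}\backslash T_r$ independently with probability $1/2$ (over the sub-sampling that produces $T_r$ from $T_{r-1}$). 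Hence
\begin{align*}
\E[Y_r\mid T_{r-1}] \ \ge\ \tfrac{1}{2}(1-\delta)\min\bigl(k,|S\cap T_{r-1}|\bigr).
\end{align*}
Since $|S\cap T_{r-1}|$ is Binomial$(m, 2^{-(r-1)})$ with mean $\mu_r \ge 4k$ for all $r \le R$ (by the choices $R = \lfloor \tfrac12\log(n/k)-2\rfloor$ and $m = \lfloor\sqrt{nk}\rfloor$), a Chernoff bound yields $\Pr[|S\cap T_{r-1}| \ge k]\ge 2/3$, so $\E\min(k,|S\cap T_{r-1}|) = \Omega(k)$ and $\E Y_r = \Omega((1-\delta)k)$. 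Summing over $r$ gives $\E(|S|-|B|) = \Omega((1-\delta)kR) = \Omega((1-\delta)k\log(n/k))$.

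To finish I would invoke Lemma~\ref{lemma:bits-saving} with $W = |B|$ and $d = \E(|S|-|B|)$, obtaining
\begin{align*}
\log\binom{n}{m} - \E\log\binom{n}{|B|} \ \ge\ d\cdot \log\!\bigl(\tfrac{n}{m}-1\bigr) \ =\ \Omega\bigl((1-\delta)\,k\,\log^2(n/k)\bigr),
\end{align*}
using $n/m = \sqrt{n/k}$. Since the encoder outputs $M$, the set $B$ (encoded in $\lceil\log n\rceil + \lceil\log\binom{n}{|B|}\rceil$ bits for both $|B|$ and $B$), and the $R$ indicator bits of $b$, Lemma~\ref{lemma:lb-meta} yields $\s \ge \log\binom{n}{m} - \E\log\binom{n}{|B|} - \log n - R - O(1) = \Omega((1-\delta)k\log^2(n/k)) - O(\log n)$. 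The hypothesis $\delta \le 1 - 50\log n/(k\log^2(n/k))$ guarantees $(1-\delta)k\log^2(n/k) \ge 50\log n$, which absorbs the additive overhead and gives the stated bound.

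The one point that requires care --- and the only real subtlety --- is justifying that the per-round failure probability is bounded by $\delta$ even though $T_{r-1}$ is random; the resolution is precisely that the sub-sampling randomness defining $T_{r-1}$ is independent of $X$, so conditioning on $T_{r-1}$ reduces to the worst-case protocol guarantee. This is exactly why the present argument does not need the adaptivity lemma (Lemma~\ref{lem:information}) used for $\ur^\subset$, and the proof becomes considerably cleaner.
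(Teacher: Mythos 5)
Your proposal is correct and follows essentially the same route as the paper: the same $(\enc_k,\dec_k)$ scheme, the same key point that the sets $T_r$ come from public randomness independent of $\mathcal P$'s randomness $X$ (so no adaptivity lemma is needed and $\E b_r \ge 1-\delta$ holds per round), and the same conclusion via Lemmas~\ref{lemma:bits-saving} and~\ref{lemma:lb-meta}. The only cosmetic difference is that the paper conditions once on the global event $|S\cap T_R|\ge k$ (which implies $|S\cap T_{r-1}|\ge k$ for all $r$, with probability $\ge 1/2$ by Chernoff), whereas you apply a per-round Chernoff bound to $|S\cap T_{r-1}|$; both yield $\E(|S|-|B|) = \Omega((1-\delta)kR)$.
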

\begin{proof}
Let $S_r=S\cap T_r$.  Let $\success$ denote the event that $|S\cap T_R|=|S_R|\ge k$.  Note that $\E|S_R|=\frac{1}{2^R}m=4k$. By the Chernoff bound, $\Pr(\success)\ge \frac{1}{2}$.  In the following, we argue conditioned on $\success$. Namely, in each round $r$, there are at least $k$ items in $S_r$.
  
Similar to Lemma~\ref{lemma:zero-fail-prob}, we can prove the protocol $(\enc_k,\dec_k)$ always succeeds.  By Lemma~\ref{lemma:lb-meta}, we have $\s\ge \log (^n_m) - \s' -2$, where $\s'=\log n + R+ \E \log (^n_{|B|})$.  The size of $B$ is $|B|=|S|-\sum_{r=1}^{R}{(b_r \cdot |A_r \cap (S_{r-1}\backslash S_r)|)}$.  The randomness used by $\mathcal{P}$ is independent from $S\backslash S_{r-1}$ for every $r\in[R]$.  Therefore, $\E b_r\ge 1-\delta$, and $b_r$ is independent from $|A_r \cap (S_{r-1}\backslash S_r)|$.  We have $\E|A_r \cap (S_{r-1}\backslash S_r)|=\frac{k}{2}$, and thus $\E(|S|-|B|)\ge \frac{(1-\delta)kR}{2}$.  By Lemma~\ref{lemma:bits-saving}, $\log (^n_m)-\E\log (^n_{|B|})\ge \frac{(1-\delta)kR}{2}\cdot \log (\frac{n}{m}-1) \ge \frac{(1-\delta)kR}{5}\log (\frac{n}{k})$.  Moreover, $R\le \log n$ and $\log n \le \frac{(1-\delta)kR}{12}\log \frac{n}{k}$.  Thus we have $\s = \Omega((1-\delta)kR\log\frac{n}{k}) = \Omega((1-\delta)k\log^2 \frac{n}{k} )$.
\end{proof}

\section{Lower bounds proofs via augmented indexing}\label{sec:aug-proof}

Here we show another route to proving $\randcom^{\rightarrow,pub}_\delta(\ur_k^\subset) = \Omega(\min\{n, t\log^2(n/t)\}$ via reduction from augmented indexing. We again separately prove lower bounds for $\randcom^{\rightarrow,pub}_\delta(\ur^\subset)$ and $\randcom^{\rightarrow,pub}_{\frac 15}(\ur_k^\subset)$. Both proofs make use of the following standard lemma. The proof can be found in the appendix (see Section~\ref{sec:code}).

\begin{lemma}\label{lem:code}
For any integers $u\ge 1$ and $1\le m\le u/(4e)$, there exists a collection $\mathcal S_{u,m} \subset \binom{[u]}m$ with $\log |\mathcal{S}_{u,m}| = \Theta(m\log(u/m))$ such that for all $S\neq S'\in \mathcal S_{u,m}$, $|S\cap S'| < m/2$.
\end{lemma}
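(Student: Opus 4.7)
The upper bound $\log|\mathcal S_{u,m}| = O(m\log(u/m))$ is free: any such collection satisfies $|\mathcal S_{u,m}|\le \binom{u}{m} \le (eu/m)^m$. The real task is to build $\mathcal S_{u,m}$ of size $(u/m)^{\Omega(m)}$ with pairwise intersections strictly less than $m/2$, and I plan to do this by a standard probabilistic-deletion argument. I will sample $N$ independent uniformly random sets $S_1,\ldots,S_N\in\binom{[u]}{m}$, and then remove one endpoint of each ``bad'' pair $\{i,j\}$ with $|S_i\cap S_j|\ge m/2$. If the expected number of bad pairs is at most $N/2$, then at least $N/2$ sets survive and witness the lower bound.

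The key step is to estimate $p := \Pr[\,|S\cap S'|\ge m/2\,]$ for independent uniform $S,S' \in \binom{[u]}{m}$. I will condition on $S'$ and union-bound over the choice of an $\lceil m/2\rceil$-element subset $T\subseteq S'$ contained in $S$. For any fixed $T$ of size $t$,
\[
  \Pr[T\subseteq S] \;=\; \frac{\binom{u-t}{m-t}}{\binom{u}{m}} \;=\; \prod_{i=0}^{t-1}\frac{m-i}{u-i} \;\le\; (m/u)^{t},
\]
where the inequality uses $(m-i)/(u-i)\le m/u$ (valid for $u\ge m$). Summing over the $\binom{m}{\lceil m/2\rceil}\le 2^m$ candidate subsets $T\subseteq S'$ yields $p \le 2^m(m/u)^{m/2} = (4m/u)^{m/2}$. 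The hypothesis $m\le u/(4e)$ forces $4m/u\le 1/e$, which is precisely what makes $p$ small enough to absorb an $N^2$ factor for $N = (u/m)^{\Omega(m)}$.

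Finally I will take $N = \lfloor \tfrac{1}{4}(u/(4m))^{m/2}\rfloor$, so that $\binom{N}{2}\,p \le \tfrac{1}{2}N^{2}(4m/u)^{m/2} \le N/8$, and hence at least $N/2$ sets survive the deletion step. Taking logarithms gives
\[
  \log|\mathcal S_{u,m}| \;\ge\; (m/2)\log(u/(4m)) - O(1) \;=\; \Omega(m\log(u/m)),
\]
where the final equality uses $u/m\ge 4e$ so that $\log(u/(4m))$ is within a constant factor of $\log(u/m)$. I do not anticipate a serious obstacle: the one point that deserves a little care is the probability estimate $p\le (4m/u)^{m/2}$ and the verification that together with $m\le u/(4e)$ it really is tight enough to leave $\log N = \Theta(m\log(u/m))$ after absorbing the pair-count $\binom{N}{2}$, which is exactly the margin produced by the choice of $N$ above.
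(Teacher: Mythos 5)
Your proposal is correct and is essentially the same probabilistic-method argument as the paper's: sample $N$ independent uniform $m$-subsets and show the pairwise-intersection event $|S\cap S'|\ge m/2$ has probability at most $(u/(cm))^{-m/2}$, which beats the $N^2$ pair count for $N=(u/m)^{\Theta(m)}$. The only (cosmetic) differences are that you bound the bad-pair probability by a union bound over $\lceil m/2\rceil$-subsets of $S'$ rather than by a Chernoff bound for the negatively dependent indicators $X_k$, and you finish by deletion rather than by choosing $N$ so that $\binom{N}{2}p<1$ outright; both variants are sound and give the same conclusion.
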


Both our lower bounds in Sections~\ref{sec:aug-delta} and \ref{sec:aug-k} reduce from augmented indexing (henceforth $\aug$) to either $\ur^\subset$ with low failure probability, or $\ur_k^\subset$ with constant failure probability, in the public coin one-way model of communication. We remind the reader of the setup for the $\aug_N$ problem. There are two players, Charlie and Diane. Charlie receives $z\in\{0,1\}^N$ and Diane receives $j^*\in[N]$ together with $z_{j^*+1},\ldots,z_N$. Charlie must send a single message to Diane such that Diane can then output $z_{j^*}$. The following theorem is known.

\begin{theorem}{\cite{MiltersenNSW98}}\label{thm:mnsw}
$\randcom^{\rightarrow,pub}_{1/3}(\aug_N) = \Theta(N)$.
\end{theorem}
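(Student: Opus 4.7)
The upper bound $\randcom^{\rightarrow,pub}_{1/3}(\aug_N) \le N$ is immediate, since Charlie can send $z$ verbatim. The content of the theorem is the matching $\Omega(N)$ lower bound, which I would prove via Yao's minimax principle combined with a chain-rule information-theoretic argument. Concretely, consider the product distribution $\mu$ in which $z\in\{0,1\}^N$ is uniform and $j^*\in[N]$ is uniform and independent of $z$; by Yao's principle it suffices to lower bound the communication of any deterministic one-way protocol whose expected error under $\mu$ is at most $1/3$ (the worst-case guarantee of any public-coin protocol passes to an average-case guarantee of some deterministic protocol by averaging over the public random string). Having fixed the public coins, let $M=M(z)$ denote Charlie's message; the number of bits transmitted is at least $\log_2|\mathrm{range}(M)|\ge H(M)\ge I(z;M)$, so it suffices to show $I(z;M)=\Omega(N)$.

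The next step is to expand $I(z;M)$ by the chain rule in the order that matches Diane's side information:
\begin{equation*}
I(z;M) \;=\; \sum_{j=1}^N I(z_j;\, M \mid z_{j+1},\ldots,z_N).
\end{equation*}
For each fixed $j$, Diane's guess $\hat z_j$ in the event $j^*=j$ is a deterministic function of $M$ and the suffix $z_{>j}:=(z_{j+1},\ldots,z_N)$. Writing $p_j=\Pr_z[\hat z_j = z_j]$, the protocol's error guarantee reads $\tfrac1N\sum_j p_j \ge 2/3$. Since $z_j$ is a uniform bit independent of $z_{>j}$ and $\hat z_j$ is a function of $(M,z_{>j})$, Fano's inequality (in its binary form) gives $H(z_j\mid M,z_{>j})\le H_2(1-p_j)$, and therefore $I(z_j;M\mid z_{>j}) = 1 - H(z_j\mid M,z_{>j}) \ge 1 - H_2(1-p_j)$.

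Summing over $j$ and applying concavity of $H_2$ yields
\begin{equation*}
I(z;M) \;\ge\; N - \sum_{j=1}^N H_2(1-p_j) \;\ge\; N\!\left(1 - H_2\!\left(1-\tfrac1N\textstyle\sum_j p_j\right)\right) \;\ge\; N\bigl(1-H_2(1/3)\bigr),
\end{equation*}
which is $\Omega(N)$ since $H_2(1/3)<1$. The one delicate choice to get right is the direction of the chain-rule decomposition: conditioning on the suffix $z_{>j}$ (rather than the prefix) is essential, and it works precisely because Diane already holds that suffix for free, so the conditional distribution of $z_j$ given $z_{>j}$ is still uniform and Fano applies without the side information ``giving the answer away''. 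The analogous proof for plain $\mathrm{Index}$ would not go through in this form, which is exactly the gap the augmented variant was designed to exploit. Apart from this, every step is routine; the only remaining piece is the initial averaging over public coins invoked to reduce to a deterministic protocol against $\mu$.
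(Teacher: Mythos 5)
The paper does not prove Theorem~\ref{thm:mnsw} at all; it imports the bound as a black box from \cite{MiltersenNSW98}, so there is no in-paper proof to compare against. Your argument is the standard information-theoretic proof of the one-way $\aug_N$ lower bound and it is correct: Yao/averaging over the public coins, $s\ge H(M)\ge I(z;M)$, the chain rule taken in the reverse order so that the conditioning matches Diane's suffix side information, Fano for a binary variable, and concavity plus monotonicity of $H_2$ on $[0,1/2]$ all go through exactly as you state, giving $I(z;M)\ge N(1-H_2(1/3))=\Omega(N)$. The only cosmetic caveat is the inequality ``number of bits $\ge\log_2|\mathrm{range}(M)|$,'' which for variable-length messages should be read as worst-case length $\ge\log_2|\mathrm{range}(M)|-O(1)$ (or expected length $\ge H(M)$ for prefix-free encodings); this does not affect the asymptotic conclusion. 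For the record, the proof in \cite{MiltersenNSW98} itself proceeds via the richness/round-elimination machinery for asymmetric communication rather than this direct entropy computation, but for the one-way public-coin version your route is the one normally used and is, if anything, cleaner.
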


We show that if there is an $s$-bit communication protocol $\mathcal P$ for $\ur^\subset$ on $n$-bit vectors with failure probability $\delta$ (or for $\ur_k$ with constant failure probability), that implies the existence of an $s$-bit protocol $\mathcal P'$ for $\aug_N$ for some $N=\Theta(\log\frac 1{\delta}\log^2\frac n{\log\frac 1{\delta}})$ (or $N=\Theta(k\log^2(n/k))$ for $\ur_k$). The lower bound on $s$ then follows from Theorem~\ref{thm:mnsw}.

\subsection{Communication Lower Bound for $\ur^\subset$}\label{sec:aug-delta}

Set $t = \log \frac 1{\delta}$. In this section we assume $t < n/(4e)$ and show $\randcom^{\rightarrow,pub}_\delta(\ur^\subset) = \Omega(t\log^2(n/t))$. This implies a lower bound of $\Omega(\min\{n, t\log^2(n/t)\})$ for all $\delta>0$ bounded away from $1$.

As mentioned, we assume we have an $s$-bit protocol $\mathcal P$ for $\ur^\subset$ with failure probability $\delta$, with players Alice and Bob.We use $\mathcal P$ to give an $s$-bit protocol $\mathcal P'$ for $\aug_N$, which has players Charlie and Diane, for $N = \Theta(t\log^2(n/t))$.

The protocol $\mathcal P'$ operates as follows. Without loss of generality we may assume that, using the notation of Lemma~\ref{lem:code}, $|\mathcal S_{u,m}|$ is a power of $2$ for $u, m$ as in the lemma statement. This is accomplished by simply rounding $|\mathcal S_{u,m}|$ down to the nearest power of $2$ by removing elements arbitrarily. Also, define $L = c\log(n/t)$ for some sufficiently small constant $c\in(0,1)$ to be determined later. Now, Charlie partitions the bits of his input $z\in\{0,1\}^N$ into $L$ consecutive sequences of bits such that the $i$th chunk of bits for each $i\in[L]$ can be viewed as specifying an element $S_i\in \mathcal S_{u_i,m}$ for $u_i = \frac n{100^i\cdot L}$ and $m = ct$. Lemma~\ref{lem:code} gives $\log|\mathcal S_{u_i,m}| = \Theta(m\log(u_i/m))$, which is $\Theta(t\log(n/t))$ for $c < 1/14$. Thus $N = \Theta(L\cdot t\log(n/t)) = \Theta(t\log^2(n/t))$. Given these sets $S_1,\ldots,S_L$, we now discuss how Charlie generates a vector $x\in\{0,1\}^n$. Charlie then simulates Alice on $x$ to generate the message Alice would have send to Bob in protocol $\mathcal P$, then sends that same message to Diane.

To generate $x\in\{0,1\}^n$, assume Charlie and Diane have sampled a bijection from 
\begin{equation}\label{eq:pi-origin}
A = \bigcup_{i=1}^L (\{i\} \times [u_i]\times [100^i])
\end{equation}
to $[n]$ uniformly at random. We denote this bijection by $\pi$. This is possible since $|A| = n$. Then Charlie defines $x$ to be the indicator vector $\mathbf{1}_{\pi(S)}$, where
$$
S = \bigcup_{i=1}^L (\{i\} \times S_i \times [100^i]),
$$
then sends a message $M$ to Diane, equal to Alice's message with input $\mathbf{1}_{\pi(S)}$. This completes the description of Charlie's behavior in the protocol $\mathcal P'$.

We describe how Diane uses $M$ to solve $\aug_N$. Diane's input $j^*\in[N]$ lies in some chunk $i^*\in[L]$. We now show how Diane can use $\mathcal P$ to recover $S_{i^*}$ with probability $2/3$ (and thus in particular recover $z_{j^*}$). Since Diane knows $z_j$ for $j>j^*$, she knows $S_i$ for $i>i^*$. She can then execute the following algorithm.

\begin{algorithm}[H] 
  \caption{Behavior of Diane in $\mathcal P'$ for $\ur^\subset$.} \label{algo:diane1}
  \begin{algorithmic}[1]
    \Procedure{$\diane$}{$M$}
    \State $T \leftarrow \bigcup_{i=i^*+1}^L (\{i\} \times S_i \times [100^i])$
    \State $T_{i^*}\leftarrow \emptyset$
    \While {$|T_{i^*}| < \frac m2$}
      \State $(i,a,r)\leftarrow \pi^{-1}(\query(M, \mathbf{1}_{\pi(T)}))$
      \State $T\leftarrow T \cup ((i,a) \times [100^i])$
      \If {$i=i^*$} 
        \State $T_{i^*} \leftarrow T_{i^*}\cup \{a\}$
      \EndIf
    \EndWhile
    \If {there exists $S\in\mathcal S_{u_{i^*},m}$ with $T_{i^*}\subset S$}
      \State \Return the unique such $S$
    \Else
      \State \Return \textsf{Fail}
    \EndIf
    \EndProcedure
  \end{algorithmic}
\end{algorithm}

In Algorithm~\ref{algo:diane1} Diane is building up a subset $T_{i^*}$ of $S_{i^*}$. Once $|T_{i^*}| \ge |S_{i^*}|/2 = m/2$, Diane can uniquely recover $S_{i^*}$ by the limited intersection property of $\mathcal{S}_{u_i,m}$ guaranteed by Lemma~\ref{lem:code}. Until then, she uses $\mathcal P$ to recover elements of $S\backslash T$, which, as we now show, are chosen uniformly at random from $S\setminus T$. 

\begin{claim}\label{cl:uniform}
For every protocol for Alice and Bob that uses shared randomness with Bob's behaviour given by $\query(\cdot)$, for every choice of shared random string $R$ of Alice and Bob, for every $S, T\subset S$, the following conditions hold. If $\pi$ is a uniformly random permutation, the success or failure of $\query(M, \mathbf{1}_{\pi(T)})$ is determined by $\{\pi(j)\}_{j\in T}$ and the image $\pi(S\setminus T)$ of $S\setminus T$ under $\pi$. Conditioned on a choice of $R$, $\{\pi(j)\}_{j\in T}$ and $\pi(S\setminus T)$ such that $\query(M, \mathbf{1}_{\pi(T)})$ succeeds, one has that $\pi^{-1}(\query(M, \mathbf{1}_{\pi(T)}))$ is a uniformly random element of $S\setminus T$.
\end{claim}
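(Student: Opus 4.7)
\textbf{Proof plan for Claim \ref{cl:uniform}.}

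The plan is to observe that the entire computation of $\query(M,\mathbf{1}_{\pi(T)})$ uses $\pi$ only through the data $R$, $\{\pi(j)\}_{j\in T}$, and the \emph{set} $\pi(S\setminus T)$, and then to show that after conditioning on this data the remaining randomness of $\pi$ makes $\pi^{-1}(\query(\cdot))$ uniform on $S\setminus T$ by symmetry. First I would unpack Alice's message: $M=\sketch(\mathbf{1}_{\pi(S)};R)$ is a function of $R$ and of the indicator $\mathbf{1}_{\pi(S)}$, which itself depends on $\pi$ only through the set $\pi(S)=\{\pi(j):j\in T\}\cup \pi(S\setminus T)$. So $M$ is a deterministic function of $R$, the tuple $\{\pi(j)\}_{j\in T}$, and the set $\pi(S\setminus T)$. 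Bob's evaluation $\query(M,\mathbf{1}_{\pi(T)};R)$ additionally depends on $\pi$ only through $\pi(T)$, which is determined by $\{\pi(j)\}_{j\in T}$. Hence success (i.e.\ the output lying in $\pi(S)\setminus\pi(T)=\pi(S\setminus T)$) and the output value itself are both determined by $R$, $\{\pi(j)\}_{j\in T}$, and $\pi(S\setminus T)$. This establishes the first assertion of the claim.

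For the second assertion, fix $R$, the tuple $(\pi(j))_{j\in T}$, and the set $\pi(S\setminus T)$, and assume the conditioning event (that $\query$ succeeds) holds. Under this conditioning, the output $q:=\query(M,\mathbf{1}_{\pi(T)})$ is a fixed element of $\pi(S\setminus T)$. The key observation is that, conditioned on this data, the restriction $\pi|_{S\setminus T}$ is a uniformly random bijection from $S\setminus T$ onto the fixed set $\pi(S\setminus T)$: indeed, the unconditional law of $\pi$ is uniform on bijections $A\to[n]$, and conditioning on the restriction to $T$ and on the image of $S\setminus T$ leaves the symmetric group acting transitively on the set of bijections $S\setminus T\to \pi(S\setminus T)$, so each such bijection has equal probability. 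Therefore, for the fixed element $q\in\pi(S\setminus T)$, $\pi^{-1}(q)$ is uniform on $S\setminus T$, which is what the claim asserts.

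The only subtle point, and the one I would be most careful about, is the precise conditioning: one must emphasize that the conditioning is on the tuple $(\pi(j))_{j\in T}$ (i.e.\ pointwise values on $T$) but only on the \emph{set} $\pi(S\setminus T)$, not on pointwise values on $S\setminus T$; otherwise $\pi^{-1}(q)$ would be deterministic rather than uniform. Once this distinction is made, the uniformity of $\pi|_{S\setminus T}$ among bijections to $\pi(S\setminus T)$ is a direct symmetry argument, and the claim follows immediately.
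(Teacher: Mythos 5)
Your proposal is correct and follows essentially the same route as the paper's proof: both argue that $M$ and Bob's behaviour depend on $\pi$ only through $R$, $\{\pi(j)\}_{j\in T}$, and the set $\pi(S\setminus T)$, and then use the fact that the remaining randomness --- the uniform bijection $\pi|_{S\setminus T}$ onto its fixed image --- makes $\pi^{-1}$ of the (now fixed) output uniform on $S\setminus T$. Your explicit flagging of the distinction between conditioning on the set $\pi(S\setminus T)$ versus its pointwise values is exactly the right point of care and matches the paper's reasoning.
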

\begin{proof}
The first claim follows by noting that the message $M$ that Alice sends to Bob is solely a function of $R$ and $\pi(S)$. The behaviour of Bob is determined by $M$ and $\pi(T)$ (and the latter is determined by $\{\pi(j)\}_{j\in T}$).

Now condition on the values of $R$, $\{\pi(j)\}_{j\in T}$ and $\pi(S\setminus T)$ such that  $\query(M, \mathbf{1}_{\pi(T)})$ succeeds, and let $j^*\in [n]$ denote the output. Note that by our conditioning $j^*$ is a fixed quantity. The only randomness left is the exact mapping of $S\setminus T$ to $\pi(S\setminus T)$. This mapping is independent of $\{\pi(j)\}_{j\in T}$ and $\pi(S\setminus T)$ and uniformly random, so $\pi^{-1}(j^*)$ is a uniformly random element of $S\setminus T$, as required.
\end{proof}

Fix any protocol $\widetilde{\query}(M, \mathbf{1}_{\pi(T)})$ (not necessarily the one that Charlie and Diane use; see analysis of the idealized process $\widetilde{\mathcal{P}}$ below). Now fix $T$ together with values of $R$, $\{\pi(j)\}_{j\in T}$ and $\pi(S\setminus T)$ such that  $\widetilde{\query}(M, \mathbf{1}_{\pi(T)})$ succeeds.  

\if 0
\paragraph{Elements in $S_{i^*}$ are likely to be recovered.}   Given Claim~\ref{cl:uniform}, since the elements of $S_i$ appear with frequency $100^i$ in $S\backslash T$, they are more likely to be returned by $\pi^{-1}(\widetilde{\query}(M, \mathbf{1}_{\pi(T)}))$. Indeed, as long as $T_{i^*}\le m/2$, at least $m/2$ elements remain in $S_{i^*}\backslash T_{i^*}$, implying the output of $(i, a, r)$ of $\pi^{-1}(\widetilde{\query}(M, \mathbf{1}_{\pi(T)}))$ satisfies
\begin{equation}
\begin{split}
\Pr(i = i^* | (R, \{\pi(j)\}_{j\in T}, \pi(S\setminus T)) \text{~s.t.~}\widetilde{\query}(M, \mathbf{1}_{\pi(T)})\text{~succeeds}) &\ge \frac{\frac m2\cdot 100^{i^*}}{\frac m2\cdot 100^{i^*} + m\cdot \sum_{i=1}^{i^*-1} 100^i}\\
& = \frac{\frac 12\cdot 100^{i^*}}{\frac 12\cdot 100^{i^*} + \frac {100}{99}(100^{i^*-1} - 1)}\\
& > \frac{49}{50} , \label{eqn:istar-likely}
\end{split}
\end{equation}
where the probability is over the choice of $\pi|_{S\setminus T}:(S\setminus T)\to \pi(S\setminus T)$ (recall that we condition on the image $\pi(S\setminus T)$ under $\pi$, but not on the actual mapping).
\fi

\paragraph{Elements in $S_{j}, j<i^*,$ are unlikely to be recovered.} Given Claim~\ref{cl:uniform}, since the elements of $S_j$ appear with frequency $100^j$ in $S\backslash T$, they are less likely to be returned by $\pi^{-1}(\widetilde{\query}(M, \mathbf{1}_{\pi(T)}))$ when $j$ is small.  More precisely, as long as $|S_{i^*}\cap T_{i^*}|\geq m/2$, for any $1\le j< i^*$
\begin{equation}
\begin{split}
\Pr(i = j | (R, \{\pi(j)\}_{j\in T}, \pi(S\setminus T)) \text{~s.t.~}\widetilde{\query}(M, \mathbf{1}_{\pi(T)})\text{~succeeds})& \le \frac{m\cdot 100^j}{\frac m2\cdot 100^{i^*}}\\
& \le 2\cdot 100^{-(i^*-j)} \\
&\le 50^{-(i^*-j)} . \label{eqn:others-unlikely}
\end{split}
\end{equation}
Here again the probability is over the choice of $\pi|_{S\setminus T}:(S\setminus T)\to \pi(S\setminus T)$ (recall that we condition on the image $\pi(S\setminus T)$ under $\pi$, but not on the actual mapping).

We now define the set $\mathcal{T}$ of {\bf typical intermediate sets}, which plays a crucial role in our analysis.  Let $Q_i$ for $i\in [L]$ denote $\{i\} \times S_i \times [100^i]$. Let $\mathcal{T}$ be the collection of all $T\subset S$ such that (1) $Q_i\subset T$ for all $i>i^*$, and (2) for each $i < i^*$, $|T\cap Q_i| \le 100^i\cdot  m/4^{i^*-i}$.  The following claim will be useful:
\begin{claim}\label{cl:size-of-t}
For the set $\mathcal{T}$ defined above one has $|\mathcal{T}|=2^{O(m)}$.
\end{claim}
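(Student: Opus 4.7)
The plan is to parameterize $T\in\mathcal{T}$ by its ``column structure'' and count. Since Diane's algorithm only ever augments $T$ by complete blocks $\{i\}\times\{a\}\times[100^i]$, every $T\in\mathcal{T}$ that arises can be written as $T = \bigcup_i \{i\}\times T^{(i)}\times [100^i]$ for some $T^{(i)}\subseteq S_i$, so that $|T\cap Q_i| = 100^i\cdot |T^{(i)}|$. Under this parameterization, condition (1) forces $T^{(i)}=S_i$ for $i>i^*$ (contributing one choice per such level); the level $i=i^*$ is unconstrained and contributes at most $2^{|S_{i^*}|}=2^m$ options; and condition (2) becomes $|T^{(i)}|\le m/4^{i^*-i}$ for $i<i^*$.

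The remaining work is to bound the contribution from the levels $i<i^*$. Writing $d = i^*-i$, I will apply the standard binomial estimates $\sum_{k\le\alpha m}\binom{m}{k}\le 2\binom{m}{\lfloor\alpha m\rfloor}$ (valid for $\alpha\le 1/2$) together with $\binom{m}{k}\le (em/k)^k$ to conclude that each such level contributes at most $2\,(4^d e)^{m/4^d}$ choices for $T^{(i)}$. Taking logarithms and summing over $d\ge 1$ gives
\[
\sum_{d\ge 1}\Bigl(1 + (m/4^d)\bigl(d\log 4 + 1\bigr)\Bigr) \;\le\; O(\log m) + O\Bigl(m\sum_{d\ge 1}\tfrac{d}{4^d}\Bigr) \;=\; O(m),
\]
using convergence of $\sum_{d\ge 1} d/4^d$ and $\log m = O(m)$. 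Adding the $m$ bits from level $i^*$ yields $\log|\mathcal{T}| = O(m)$, i.e.\ $|\mathcal{T}| = 2^{O(m)}$, as claimed.

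The only point that genuinely requires care is the column-structure viewpoint on $\mathcal{T}$: taken literally as arbitrary subsets of $S$, condition (2) alone would leave $T\cap Q_{i^*}$ essentially unconstrained, and $|Q_{i^*}|=m\cdot 100^{i^*}$ would blow up the count. But since Diane's algorithm, by construction, only produces block-structured intermediate sets, the union bound to follow only needs to range over such $T$, and the above counting is the natural and only nontrivial step. Once one collapses each block to a single element of $S_i$, the rest is a straightforward geometric sum in $d=i^*-i$.
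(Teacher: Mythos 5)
Your proof is correct and follows essentially the same route as the paper's: restrict to block-structured $T$ (so that each level $i<i^*$ contributes the number of subsets of $S_i$ of size at most $m/4^{i^*-i}$), bound each binomial sum by its largest term via $\binom{m}{k}\le(em/k)^k$, and sum the resulting geometric series in $d=i^*-i$; your explicit justification of the block-structure restriction is a point the paper leaves implicit, and is worth making. One small accounting slip: the $\sum_{d\ge 1}1$ term ranges over up to $i^*-1\le L$ levels, and $L$ need not be $O(\log m)$ or even $O(m)$ (e.g.\ when $\log\frac1\delta\ll\log n$), so this step is only harmless because for $d>\log_4 m$ condition (2) forces $T^{(i)}=\emptyset$ and those levels contribute a factor of exactly $1$ rather than $2$.
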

\begin{proof}
\begin{align*}
|\mathcal T| &\le 2^m \cdot \prod_{i=1}^{i^*-1}\left(\sum_{r=0}^{\frac m{4^{i^*-i}}} \binom mr\right)\text{ (the }2^m\text{ term comes from }S_{i^*}\text{)}\\
{}&\le 2^m \cdot \prod_{i=1}^{i^*-1} \binom{m + \frac m{4^{i^* - i}}}{\frac m{2^{i^* - i}}}\\
{}&\le 2^m \cdot \prod_{i=1}^{i^*-1} (2e\cdot 4^{i^*-i})^{\frac m{4^{i^* - i}}}\text{ (using }\binom nk \le (en/k)^k\textrm{)}\\
{}&\le 2^{O(m)} \cdot 2^{m\cdot O(\sum_{j=1}^\infty j 4^{-j})} \\
{}& \le 2^{O(m)}
\end{align*}
\end{proof}

We will show that for most choices of $\pi$ and shared random string $R$ Algorithm~\ref{algo:diane1} {\bf (a)} never leaves the set $\mathcal{T}$ and {\bf (b)} successfully terminates.  Note that Algorithm~\ref{algo:diane1} is a random process whose sample space is the product of the set of all possible permutations $\pi$ and shared random strings $R$. As before, we denote this process by $\mathcal{P}'$. It is useful for analysis purposes to define another process $\widetilde{\mathcal{P}}$, which is an idealized version of $\mathcal{P}'$. In this process instead of running  $\query(M, \mathbf{1}_{\pi(T)})$ Alice runs  $\widetilde{\query}(M, \mathbf{1}_{\pi(T)})$, which is guaranteed to output an element of $\pi(S\setminus T)$ for every choice of  $T\subset S$, shared random string $R$, $\{\pi(j)\}_{j\in T}$, and $\pi(S\setminus T)$. The proof proceeds in three steps.

{\bf Step 1: proving that $\widetilde{\mathcal{P}}$ succeeds in recovering $T_{i^*}$ and never leaves $\mathcal{T}$ with high probability.}
Choose $\pi$ uniformly at random. By \eqref{eqn:others-unlikely}, as long as $|S_{i^*}\cap T_{i^*}|\geq m/2$, the expected number of items recovered by $\widetilde{\query}$ from $S_i$ for $i<i^*$ in the first $m$ iterations is at most $m/50^{i^*-i}$. Thus the probability of recovering more than $m/4^{i^*-i}$ items from $S_i$ is at most $(1/12)^{i^*-i}$ by Markov's inequality. Note that the probability is over the choice of $\pi$ only, as $\widetilde{\query}$ is assumed to succeed with probability $1$ by definition of $\widetilde{\mathcal{P}}$. 
Thus
$$
\Pr( \widetilde{\mathcal{P}}\text{~leaves~}\mathcal{T}) \le \sum_{i=1}^{i^*-1}\left(1/12\right)^{i^*-i} < 1/10.
$$
In particular this means that with probability at least $1-1/10$ at most $\sum_{i<i^*} m/4^{i^*-i}<m/2$ items from $\bigcup_{i<i^*} S_i$ are recovered in the first $m$ (or fewer, if the algorithm terminates earlier) iterations. This also implies that with probability at least $1-1/10$ if the algorithm proceeds for the entire $m$ iterations, it recovers at least $m/2$ elements of $T_{i^*}$ and hence terminates. We thus get that $\widetilde{\mathcal{P}}$ succeeds at least with probability $1-1/10$.

{\bf Step 2: coupling $\widetilde{\mathcal{P}}$ to $\mathcal{P}'$ on most of the probability space.}
For every $T\subset S$ and every $\pi$ let $\mathcal{E}_T(\pi)$ be the probabilistic event (over the choice of $\query$'s random string $R$) that $\query(M, \mathbf{1}_{\pi(T)})$ succeeds in returning an element in $\pi(S\backslash T)$. Note that $\mathcal{E}_T(\pi)$ is a subset of the probability space of shared random strings $R$, and depends on $\pi$. We let 
$$
\mathcal{E}_{\mathcal T}(\pi):=\wedge_{T\in\mathcal T} \mathcal E_T(\pi)
$$
to simplify notation. Using Claim~\ref{cl:size-of-t} and the union bound we have for every $\pi$
$$
\Pr_R(\neg(\mathcal E_{\mathcal T}(\pi))) \le \delta\cdot |\mathcal T|\leq 1/20
$$
as long as for $m = c\log(1/\delta)$ for $c$ a sufficiently small constant.

Now recall that $\widetilde{\query}(M, \mathbf{1}_{\pi(T)})$ is an idealized protocol, which is guaranteed to output an element of $\pi(S\setminus T)$ for every choice of  $T\subset S$, shared random string $R$, $\{\pi(j)\}_{j\in T}$, and $\pi(S\setminus T)$. We have just shown that for every $\pi$ the event ${\mathcal E_{\mathcal T}(\pi)}$ occurs  with probability at least $1-1/20$ over the choice of $R$. Now define $\widetilde{\query}(M, \mathbf{1}_{\pi(T)})$ as equal to $\query(M, \mathbf{1}_{\pi(T)})$ for all $T\in \mathcal{T}$ (the typical set of intermediate sets) and $(\pi, R)$ such that $R\in {\mathcal E_{\mathcal T}(\pi)}$, and extend $\widetilde{\query}(M, \mathbf{1}_{\pi(T)})$ to return an arbitrary element of $\pi(S\setminus T)$ for remaining tuples $(T, R, \pi(T),  \pi(S\setminus T))$. Note that $\widetilde{\query}$ defined in this way is a deterministic function once $T$, $R$, $\pi(T)$ and $\pi(S\setminus T)$ are fixed. 

Note that with probability at least $1-1/20$ over the choice of $\pi$ and $R$ one has $\query(M, \mathbf{1}_{\pi(T)})=\widetilde{\query}(M, \mathbf{1}_{\pi(T)})$ for all $T\in \mathcal{T}$, as required.

{\bf Step 3: arguing that $\mathcal{P}'$ succeeds with high probability.}  Choose $(\pi, R)$ uniformly at random. By {\bf Step 2} we have that with probability at least $1-1/20$ over this choice 
$\query(M, \mathbf{1}_{\pi(T)})=\widetilde{\query}(M, \mathbf{1}_{\pi(T)})$ for all $T\in \mathcal{T}$. At the same time we have by {\bf Step 1} that with probability at least $1-1/10$ over the choice of $\pi$ the idealized process $\widetilde{\mathcal{P}}$ 
succeeds in recovering $T_{i^*}$ and never leaves $\mathcal{T}$. Putting the two bounds together, we get that $\mathcal{P}'$ succeeds with probability at least $1-1/20-1/10>2/3$, showing the following theorem.

\begin{theorem}
For any $0<\delta<1/2$ and integer $n\ge 1$ with $\log \frac 1{\delta} < n/(4e)$, $\randcom^{\rightarrow,pub}_\delta(\ur^\subset) \ge \randcom^{\rightarrow,pub}_{1/3}(\aug_N)$ for $N = \Theta(\log\frac 1{\delta} \log^2 \frac n{\log \frac 1{\delta}})$.
\end{theorem}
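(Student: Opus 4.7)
The plan is to verify that the reduction $\mathcal P'$ constructed in the excerpt maps a $\ur^\subset$ protocol $\mathcal P$ of communication cost $s$ and failure probability $\delta$ to an $\aug_N$ protocol of communication cost $s$ and failure probability at most $1/3$. Combined with Theorem~\ref{thm:mnsw}, this yields $s \ge \randcom^{\rightarrow,pub}_{1/3}(\aug_N) = \Omega(N)$, giving the stated lower bound. The communication cost of $\mathcal P'$ is immediate: Charlie's only transmission is the message $M$ that Alice would send in $\mathcal P$ on input $\mathbf{1}_{\pi(S)}$, which is $s$ bits, and the permutation $\pi$ is obtained from public randomness. The parameter $N$ is determined by summing $\log|\mathcal S_{u_i,m}|$ over the $L = \Theta(\log(n/t))$ blocks: by Lemma~\ref{lem:code}, with $m = ct$ and $u_i = n/(100^i L)$ for small enough constant $c$ (ensuring $u_i/m \ge 4e$ whenever $\log(1/\delta) < n/(4e)$), we have $\log|\mathcal S_{u_i,m}| = \Theta(t\log(n/t))$, hence $N = \Theta(t \log^2(n/t))$ as claimed.

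The bulk of the work is combining the three analysis steps sketched in the excerpt to show that $\mathcal P'$ succeeds with probability at least $2/3$. First, I would invoke Step~1, which uses \eqref{eqn:others-unlikely} together with Markov's inequality to show that the idealized process $\widetilde{\mathcal P}$ (in which the $\ur^\subset$ oracle is guaranteed to return a valid element of $\pi(S\setminus T)$) both stays in the typical family $\mathcal T$ and recovers $m/2$ elements of $S_{i^*}$ within $m$ iterations, with probability at least $1 - 1/10$ over the choice of $\pi$. Next I would invoke Step~2, which couples $\widetilde{\mathcal P}$ to the true process $\mathcal P'$ on the event $\mathcal E_{\mathcal T}(\pi) = \bigwedge_{T \in \mathcal T} \mathcal E_T(\pi)$ that $\mathcal P$ succeeds on every typical intermediate set. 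By Claim~\ref{cl:size-of-t} and the union bound, $\Pr_R(\neg\mathcal E_{\mathcal T}(\pi)) \le \delta \cdot 2^{O(m)} \le 1/20$ provided $m = c\log(1/\delta)$ for $c$ small enough. Combining in Step~3 gives total failure probability at most $1/10 + 1/20 < 1/3$. Once $|T_{i^*}| \ge m/2$, the near-disjointness property of $\mathcal S_{u_{i^*},m}$ uniquely identifies $S_{i^*} \in \mathcal S_{u_{i^*},m}$ and therefore $z_{j^*}$.

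The main obstacle I anticipate is making the two constants $c$ (governing $m = c\log(1/\delta)$ in Step~2) and $c'$ (governing $L = c'\log(n/t)$ so that Lemma~\ref{lem:code}'s hypothesis $u_i/m \ge 4e$ holds at the deepest level $i = L$) compatible simultaneously. The condition at the deepest level is roughly $n/(100^L L \cdot m) \ge 4e$, which requires $L \log 100 + \log L = O(\log(n/t))$ with a small enough leading constant; coupled with the Step~2 requirement $\delta \cdot 2^{O(m)} \le 1/20$, which is $m \le c\log(1/\delta)$ for an absolute constant $c$ depending on the hidden constant in Claim~\ref{cl:size-of-t}. I would work out these inequalities explicitly to confirm that both can be satisfied under the assumption $\log(1/\delta) < n/(4e)$, which ensures the lower bound is non-vacuous. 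Given that all three steps have already been set up, the proof of the theorem is then just a matter of assembling these pieces and invoking Theorem~\ref{thm:mnsw}.
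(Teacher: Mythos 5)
Your proposal is correct and follows essentially the same route as the paper: it assembles the same three-step argument (the idealized process $\widetilde{\mathcal P}$ staying in $\mathcal T$ and terminating with probability $\ge 1-1/10$, the union-bound coupling over $\mathcal T$ with probability $\ge 1-1/20$, and their combination to get success probability $>2/3$), then invokes Theorem~\ref{thm:mnsw}. The constant-compatibility issue you flag is handled in the paper simply by taking the constants $c$ in $m=ct$ and $L=c\log(n/t)$ sufficiently small, exactly as you suggest.
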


\begin{corollary}
For any $0<\delta<1/2$ and integer $n\ge 1$, $\randcom^{\rightarrow,pub}_\delta(\ur^\subset) = \Omega(\min\{n, \log\frac 1{\delta} \log^2 \frac n{\log \frac 1{\delta}}\})$.
\end{corollary}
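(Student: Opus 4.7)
The plan is to combine the preceding theorem with Theorem~\ref{thm:mnsw}, splitting into two regimes according to the size of $\log\frac{1}{\delta}$ relative to $n$.

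In the regime $\log\frac{1}{\delta} < n/(4e)$, the hypothesis of the preceding theorem is satisfied directly, so it together with Theorem~\ref{thm:mnsw} gives $\randcom^{\rightarrow,pub}_\delta(\ur^\subset) \ge \randcom^{\rightarrow,pub}_{1/3}(\aug_N) = \Omega(N)$ for $N = \Theta(\log\frac{1}{\delta}\log^2\frac{n}{\log\frac{1}{\delta}})$. A short calculation (parametrize $x = \log(1/\delta)/n \in (0, 1/(4e)]$ and note that $x\log^2(1/x)$ is bounded on this interval) confirms $\log\frac{1}{\delta}\log^2\frac{n}{\log\frac{1}{\delta}} = O(n)$ here, so the $\min\{n,\cdot\}$ in the corollary statement is attained by the second term and matches the lower bound.

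For the remaining regime $\log\frac{1}{\delta} \ge n/(4e)$, the theorem cannot be invoked directly (its hypothesis fails). The idea is to exploit the trivial monotonicity $\randcom^{\rightarrow,pub}_\delta \ge \randcom^{\rightarrow,pub}_{\delta'}$ whenever $\delta \le \delta'$, which holds because any protocol with failure probability $\delta$ is also a protocol with failure probability $\delta'$. I would fix an auxiliary failure probability $\delta' = 2^{-\alpha n}$ for a constant $\alpha < 1/(4e)$ (say $\alpha = 1/(5e)$), so that $\log(1/\delta') = \alpha n < n/(4e)$ and the theorem does apply to $\delta'$. This yields $\randcom^{\rightarrow,pub}_{\delta'}(\ur^\subset) = \Omega(\alpha n \cdot \log^2(1/\alpha)) = \Omega(n)$, and then since $\log\frac{1}{\delta} \ge n/(4e) > \alpha n = \log\frac{1}{\delta'}$ we have $\delta \le \delta'$, so monotonicity boosts this bound to $\randcom^{\rightarrow,pub}_\delta(\ur^\subset) = \Omega(n)$. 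In this regime $\min\{n, \log\frac{1}{\delta}\log^2(n/\log\frac{1}{\delta})\} = \Theta(n)$, so this matches the claim.

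There is essentially no real obstacle here: the corollary is just a packaging step that stitches the two regimes together and handles the removal of the theorem's $\log(1/\delta) < n/(4e)$ hypothesis. The only mild point of care is that in the second case $\delta'$ must itself satisfy $\delta' < 1/2$, which holds as soon as $n$ exceeds a small absolute constant; for $n$ smaller than any such constant the claimed $\Omega$-bound is vacuous.
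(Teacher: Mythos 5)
Your proof is correct and matches the paper's intended (and essentially unstated) argument for this corollary: the preceding theorem together with Theorem~\ref{thm:mnsw} handles the regime $\log\frac1\delta < n/(4e)$, and monotonicity of the communication complexity in $\delta$ yields $\Omega(n)$ in the complementary regime. One tiny inaccuracy: in the second regime $\min\{n,\ \log\frac1\delta\log^2\frac n{\log(1/\delta)}\}$ need not be $\Theta(n)$ (the second term can vanish when $\log\frac1\delta$ is close to $n$), but this is harmless since an $\Omega(n)$ lower bound always implies $\Omega(\min\{n,\cdot\})$.
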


\subsection{Communication Lower Bound for $\ur_k^\subset$}\label{sec:aug-k}

The idea for lower bounding $\randcom^{\rightarrow,pub}_{\frac 15}(\ur_k^\subset)$ is as in Section~\ref{sec:aug-delta}, but slightly simpler. That is because for the protocol $\mathcal P'$ for $\aug_N$, Diane will not make adaptive queries to Bob in the protocol $\mathcal P$ for $\ur_k^\subset$. Rather, she will only make one query using Bob and will be able to decide $\aug_N$ with good probability from that single query. We  make use of the following lemma from~\cite{JowhariST11}, whose proof is similar to our analysis in Section~\ref{sec:aug-delta}.

\begin{lemma}{\cite{JowhariST11}}\label{lem:rand-ur}
Any public coin protocol for $\ur^\subset$ can be turned into one that outputs every index $i\in[n]$ with $x_i\neq y_i$ with the same probability. The number of bits sent, failure probability, and number of rounds do not change. Similarly, any $\ur_k^\subset$ protocol can be turned into one in which all subsets of $[n]$ of size $\min\{k, \|x-y\|_0\}$ on which $x, y$ differ are equally likely to be output.
\end{lemma}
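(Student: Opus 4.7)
The plan is to symmetrize using a uniformly random public-coin permutation of $[n]$. Given a protocol $\mathcal Q$ for $\ur^\subset$ with failure probability $\delta$, I define a new protocol in which Alice and Bob first sample $\pi$ uniformly from the symmetric group on $[n]$ from the shared random string. Alice forms the permuted input $x^\pi$ with $x^\pi_k = x_{\pi^{-1}(k)}$, Bob analogously forms $y^\pi$, they run $\mathcal Q$ on $(x^\pi, y^\pi)$ to obtain an output $j\in[n]$, and Bob returns $\pi^{-1}(j)$. Since $\supp(y^\pi)\subset\supp(x^\pi)$ and $\supp(x^\pi - y^\pi) = \pi(\supp(x-y))$, the permuted pair is a valid $\ur^\subset$ instance of the same length, so communication cost, number of rounds, and failure probability are inherited from $\mathcal Q$, and whenever $\mathcal Q$ succeeds, $\pi^{-1}(j)\in\supp(x-y)$.

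For uniformity I exploit the $\ur^\subset$ promise: for $x,y\in\{0,1\}^n$ with $\supp(y)\subset\supp(x)$, every coordinate $i$ with $x_i\ne y_i$ has $(x_i,y_i)=(1,0)$, so the coordinates of $\supp(x-y)$ are indistinguishable as input entries. Fix $i\ne i'\in\supp(x-y)$, let $\tau=(i\ i')$ denote the transposition, and consider the involution $\pi\mapsto\pi\circ\tau$, which preserves the uniform distribution on permutations. Using $(\pi\circ\tau)^{-1}=\tau\circ\pi^{-1}$, one checks that $(\pi\circ\tau)^{-1}(k)=\pi^{-1}(k)$ for all $k\notin\{\pi(i),\pi(i')\}$, while on $\{\pi(i),\pi(i')\}$ the preimages $i,i'$ are swapped; since $x_i=x_{i'}$ and $y_i=y_{i'}$, this gives $x^{\pi\circ\tau}=x^\pi$ and $y^{\pi\circ\tau}=y^\pi$. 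Hence $\mathcal Q$ with any fixed private randomness outputs the same $j$ on both inputs, but the new protocol reports $\pi^{-1}(j)$ versus $(\pi\circ\tau)^{-1}(j)=\tau(\pi^{-1}(j))$ on the two paired permutations, bijecting the event \{output equals $i$\} with \{output equals $i'$\} and proving they have equal probability.

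For the $\ur_k^\subset$ statement the same wrapper protocol works and the argument generalizes: to equate the probabilities of two subsets $U,U'\subset\supp(x-y)$ of the common size $\min\{k,\|x-y\|_0\}$, pick any permutation $\tau$ fixing $[n]\setminus\supp(x-y)$ pointwise and satisfying $\tau(U)=U'$, and apply the involution $\pi\mapsto\pi\circ\tau$ exactly as above. Because $\tau$ only shuffles coordinates within $\supp(x-y)$ (where the $(x_i,y_i)$ entries are uniformly $(1,0)$), one again obtains $(x^{\pi\circ\tau},y^{\pi\circ\tau})=(x^\pi,y^\pi)$, so $\mathcal Q$ outputs the same $k$-subset $J$ in both executions and the final answers are $\pi^{-1}(J)$ versus $\tau(\pi^{-1}(J))=\tau(U)=U'$, which bijects the event \{output equals $U$\} with \{output equals $U'$\}. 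The one delicate point I expect is the $\pi$ versus $\pi^{-1}$ bookkeeping in verifying $x^{\pi\circ\tau}=x^\pi$; the whole argument relies on the $\supp(y)\subset\supp(x)$ promise, since without it coordinates of $\supp(x-y)$ could split into the two types $(1,0)$ and $(0,1)$, and $\tau$ would need to respect that partition rather than act freely.
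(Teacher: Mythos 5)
Your construction is the same as the paper's (and \cite{JowhariST11}'s): wrap the protocol in a uniformly random public-coin permutation $\pi$, run it on the permuted inputs, and return $\pi^{-1}$ of its answer; the only difference is that you verify uniformity via a measure-preserving involution $\pi\mapsto\pi\circ\tau$ pairing the events $\{\text{output}=i\}$ and $\{\text{output}=i'\}$, whereas the paper (see Claim~\ref{cl:uniform}) conditions on $R$, $\{\pi(j)\}_{j\in T}$ and the image $\pi(S\setminus T)$ so that the output index is fixed and the residual randomness of $\pi^{-1}$ on that image is uniform. Both verifications are correct and equivalent here; your $\tau$ versus $\tau^{-1}$ bookkeeping in the $\ur_k^\subset$ case is harmless since one may always take $\tau$ to be an involution mapping $U$ to $U'$.
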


Henceforth we assume $\mathcal P$ outputs random differing indices, which is without loss of generality by Lemma~\ref{lem:rand-ur}.

Again Charlie receives $z\in\{0,1\}^N$ and Diane receives $j^*$ and $z_{j^*+1},\ldots,z_N$ and they want to solve $\aug_N$. Charlie views his input as consisting of $L$ blocks for $L = c\log(n/k)$ for a sufficiently small constant $c\in(0,1)$, and the $i$th block for $i\in[L]$ specifies a set $S_i \in \mathcal S_{u_i,m}$ for $m = ck$ and $u_i = n/(100^i L)$. As before, for $c$ sufficiently small we have $N = \Theta(L\cdot k\log(n/k)) = \Theta(k\log^2(n/k))$. The bijection $A$ and set $S$ are defined exactly as in Section~\ref{sec:aug-delta}, and Charlie simulates Alice to send the message $M$ to Diane that Alice would have sent to Bob on input $\mathbf{1}_S$. Again, Diane knows $S_i$ for $i>i^*$, where $j^*$ lies in the $i^*$th block of bits. Diane's algorithm to produce her output is then described in Algorithm~\ref{algo:diane2}.

\begin{algorithm}
  \caption{Behavior of Diane in $\mathcal P'$ for $\ur_k^\subset$.} \label{algo:diane2}
  \begin{algorithmic}[1]
    \Procedure{$\diane$}{$M$}
    \State $T \leftarrow \bigcup_{i=i^*+1}^L (\{i\} \times S_i \times [100^i])$
    \State $T_{i^*}\leftarrow \emptyset$
    \State $B\leftarrow \query(M, \mathbf{1}_T)$
    \For{$(i,a,r)\in B$}
      \If {$i=i^*$ and $a\notin T$}
        \State $T_{i^*} \leftarrow T_{i^*}\cup \{a\}$
      \EndIf
    \EndFor
    \If {$|T_{i^*}| < \frac m2$}
      \State \Return \textsf{Fail}
    \Else
      \State \Return the unique $S\in \mathcal S_{u_{i^*},m}$ with $T_{i^*}\subset S$ 
    \EndIf
    \EndProcedure
  \end{algorithmic}
\end{algorithm}

Recall Bob, when he succeeds, returns $\min\{k, |S\backslash T|\} = k$ uniformly random elements from $S\backslash T$. Meanwhile, $S_{i^*}$ only has $m = ck$ elements for some small constant $c$. As in Section~\ref{sec:aug-delta}, almost all of the support of $S\backslash T$ comes from items in block $i^*$, and hence we expect almost all our $k$ samples to come from (and be uniform in) items corresponding to elements of $S_{i^*}$. 

We now provide a formal analysis. Henceforth we condition on Bob succeeding, which happens with probability $4/5$. The number of elements in $S\backslash T$ corresponding to an element of $S_{i^*}$ is $100^{i^*} m$, whereas the number of elements corresponding to an element of $S_i$ for $i < i^*$ is
$$
m\cdot \sum_{i=1}^{i^*-1} 100^i = \frac m{99}\cdot (100^{i^*} - 1) < \frac m{99}\cdot 100^{i^*}
$$
Thus, we expect at most $k/99$ elements in $B$ to correspond to elements in $S_i$ for $i\neq i^*$, and the probability that we have at least $k/9$ such elements in $B$ is less than $1/10$ by Markov's inequality. We henceforth condition on having less than $k/9$ such elements in $B$. Now we know $B$ contains at least $8k/9$ elements corresponding to $S_{i^*}$, chosen uniformly from $S_{i^*}\times [100^i]$. For any given element $a\in S_{i^*}$, the probability that none of the elements in $B$ from $S_{i^*}$ correspond to $a$ is $(1 - 1/m)^{\frac 89 k} \le e^{-(8/9)k/m} < 1/30$ for $c$ sufficiently small (where $m = ck$). Thus the expected number of $a\in S_{i^*}$ not covered by $B$ is less than $m/30$. Thus the probability that fewer than $m/2$ elements are covered by $B$ is a most $1/15$ by Markov's inequality (and otherwise, Diane succeeds). Thus, the probability that Diane succeeds is at least $4/5\cdot 9/10 \cdot 14/15 > 2/3$. We have thus shown the following theorem.

\begin{theorem}
For any integers $1\le k \le n$, $\randcom^{\rightarrow,pub}_{\frac 15}(\ur_k^\subset) \ge \randcom^{\rightarrow,pub}_{\frac 13}(\aug_N)$ for $N = \Theta(k\log^2(n/k))$.
\end{theorem}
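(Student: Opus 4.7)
The plan is to give a reduction from $\aug_N$ to $\ur_k^\subset$ that, unlike the reduction in Section~\ref{sec:aug-delta}, issues only a single query to Bob (so no adaptivity needs to be dealt with). First, apply Lemma~\ref{lem:rand-ur} so that, without loss of generality, the $\ur_k^\subset$ protocol $\mathcal P$ returns a uniformly random size-$\min\{k,\|x-y\|_0\}$ subset of the differing coordinates. Next, set $L = \Theta(\log(n/k))$ and $m = \Theta(k)$, and have Charlie partition his input $z\in\{0,1\}^N$ into $L$ chunks, each specifying a set $S_i \in \mathcal S_{u_i,m}$ from the code of Lemma~\ref{lem:code} with $u_i = n/(100^i L)$. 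Choosing the constant in $m=ck$ sufficiently small ensures $N = \Theta(L \cdot k \log(n/k)) = \Theta(k \log^2(n/k))$ while still $m \le u_i/(4e)$ for every $i$.

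Then Charlie forms the universe $A = \bigcup_i (\{i\}\times [u_i] \times [100^i])$ of size $n$, the ``blown-up'' set $S = \bigcup_i (\{i\}\times S_i \times [100^i])$, and simulates Alice on $\mathbf 1_S$ (reindexing $A$ as $[n]$) to obtain the message $M$, which he sends to Diane. Since Diane knows $S_i$ for all $i>i^*$, she can compute $T = \bigcup_{i>i^*}(\{i\}\times S_i\times [100^i])$ and simulate Bob with input $\mathbf 1_T$ to recover $k$ uniformly random elements of $S\setminus T$. She then decodes the elements of $B$ whose block-index is $i^*$, reads off their $S_{i^*}$-coordinates into $T_{i^*}$, and as long as $|T_{i^*}|\ge m/2$ recovers $S_{i^*}$ uniquely by the limited-intersection property of $\mathcal S_{u_{i^*},m}$, which determines $z_{j^*}$.

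The correctness analysis has three pieces, each giving up a small constant probability. Condition first on $\mathcal P$ succeeding (probability $4/5$). Because the copies of $S_{i^*}$ occupy $100^{i^*}m$ coordinates of $S\setminus T$ while all lower blocks combined occupy less than $(100^{i^*}/99)\cdot m$ coordinates, the expected number of samples in $B$ falling outside block $i^*$ is at most $k/99$; Markov yields probability $\ge 9/10$ that at most $k/9$ such samples occur, so at least $8k/9$ of the samples lie in the $S_{i^*}\times[100^{i^*}]$ stratum. Within this stratum the samples are uniform over $m = ck$ ``buckets'' (one per element of $S_{i^*}$), so a standard coupon-collector-style estimate bounds the expected number of uncovered buckets by $m(1-1/m)^{8k/9} < m/30$, and Markov again gives $|T_{i^*}|\ge m/2$ with probability $\ge 14/15$. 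Multiplying the three events gives success probability $> 2/3$, matching the hypothesis of Theorem~\ref{thm:mnsw}, and the theorem follows.

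The main obstacle in this scheme is choosing the geometric weights $100^i$ and the size $m=ck$ in concert with the size of the code: the weights must be large enough that a single round of $k$ uniform samples from $S\setminus T$ lands overwhelmingly in block $i^*$, while $m$ must be large enough that a constant fraction of $S_{i^*}$ is covered in one round (forcing $m\le O(k)$ via the coupon-collector bound) yet $N=\Theta(Lm\log(u_{i^*}/m))$ must still attain $\Theta(k\log^2(n/k))$. The constants $100$ and $c$ are chosen precisely to make these competing demands compatible; everything else is routine.
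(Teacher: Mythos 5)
Your proposal is correct and follows essentially the same route as the paper's own proof: the same blown-up universe with geometric weights $100^i$, a single non-adaptive query made uniform via Lemma~\ref{lem:rand-ur}, and the identical three-stage analysis (success of $\mathcal P$ with probability $4/5$, Markov on the at most $k/99$ expected off-block samples, and the coupon-collector bound $m(1-1/m)^{8k/9}<m/30$ followed by Markov), with matching constants throughout. Nothing further is needed.
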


\begin{corollary}
For any integers $1\le k \le n$, $\randcom^{\rightarrow,pub}_{\frac 15}(\ur_k^\subset) = \Omega(\min\{n, k\log^2(n/k)\})$.
\end{corollary}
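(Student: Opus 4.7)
My plan is to derive the corollary directly from the preceding theorem by combining it with Theorem~\ref{thm:mnsw}, after a short case analysis that covers the regime where $k$ is linear in $n$ and the reduction construction of Section~\ref{sec:aug-k} ceases to apply.

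In the ``small $k$'' regime where $n/k$ exceeds some sufficiently large absolute constant $T_0$, the parameter constraints used in Section~\ref{sec:aug-k} are satisfied (in particular $m = ck \le u_L/(4e)$ with $u_L = n/(100^L L)$ and $L = c\log(n/k)$, which requires $(n/k)^{1-c\log 100}$ to dominate a poly-logarithmic factor in $n/k$ once $c$ is chosen small enough). The preceding theorem then yields
\[
\randcom^{\rightarrow,pub}_{1/5}(\ur_k^\subset) \;\ge\; \randcom^{\rightarrow,pub}_{1/3}(\aug_N)
\]
for $N = \Theta(k\log^2(n/k))$. Invoking Theorem~\ref{thm:mnsw} gives $\randcom^{\rightarrow,pub}_{1/5}(\ur_k^\subset) = \Omega(k\log^2(n/k))$, which is at least $\Omega(\min\{n, k\log^2(n/k)\})$ as required.

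For the complementary regime $k > n/T_0$, I would use monotonicity of $\ur_k^\subset$ in $k$: any protocol for $\ur_k^\subset$ with failure probability $1/5$ yields a protocol for $\ur_{k'}^\subset$ of identical cost and failure probability for every $k' \le k$, simply by truncating Bob's output to its first $k'$ coordinates (a valid output of $\ur_{k'}^\subset$ needs $\min(k', \|x - y\|_0)$ distinct differing indices, and the $\min(k, \|x-y\|_0)$ indices returned by the $\ur_k^\subset$ protocol always contain that many). Setting $k' = \lfloor n/T_0 \rfloor$ and applying the previous case gives
\[
\randcom^{\rightarrow,pub}_{1/5}(\ur_k^\subset) \;\ge\; \randcom^{\rightarrow,pub}_{1/5}(\ur_{k'}^\subset) \;=\; \Omega\!\bigl(k'\log^2(n/k')\bigr) \;=\; \Omega(n),
\]
since $k' \log^2(n/k') = (n/T_0)\log^2 T_0 = \Theta(n)$ for the constant $T_0$. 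Because $\min\{n, k\log^2(n/k)\} \le n$, this matches the claimed bound.

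The whole argument is essentially a bookkeeping exercise once the theorem is in hand, so there is no substantive obstacle; the only thing to double-check is that the constant $T_0$ from the first case and the constant $c$ built into Section~\ref{sec:aug-k} can be fixed consistently and independently of $n, k$, which is the routine parameter verification sketched above.
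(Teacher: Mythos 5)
Your derivation is correct and matches the paper's intended (implicit) proof: the corollary follows immediately from the preceding theorem together with Theorem~\ref{thm:mnsw}, with the $\min\{n,\cdot\}$ absorbing the regime where the reduction's parameter constraints fail. Your explicit handling of the $k=\Theta(n)$ case via monotonicity of $\ur_k^\subset$ in $k$ is the standard argument the paper leaves unstated, and your parameter check for small $c$ is sound.
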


\begin{remark}
\textup{
One may wish to understand $\randcom^{\rightarrow,pub}_\delta(\ur_k^\subset)$ for $\delta$ near $1$ (or at least, larger than $1/2$). Such a lower bound is given in Theorem~\ref{thm:urk}. The proof given above as written would yield no lower bound in this regime for $\delta$ since $\aug$ is in fact easy when the failure probability is allowed to be least $1/2$ (Charlie can send no message at all, and Diane can simply guess $z_{j^*}$ via a coin flip). One can however get a handle on $\randcom^{\rightarrow,pub}_\delta(\ur_k^\subset)$ by instead directly reducing from the following variant of augmented indexing: Charlie receives $D\in \mathcal S_{u_1,m}\times \cdots \times \mathcal S_{u_L, m}$ and Diane receives $j^*\in[L]$ and $D_{j^*+1},\ldots,D_L$ and must output $D_{j^*}$, where the $u_i$ are as above. One can show that unless Charlie sends almost his entire input, Diane cannot have success probability significantly better than random guessing (which has success probability $O(\max_{i\in L} 1/|\mathcal S_{u_i, m}|)$). The proof is nearly identical to the analysis of augmented indexing over large domains \cite{ErgunJS10,JayramW13}. Indeed, the problem is even almost identical, except that here we consider Charlie receiving a vector whose entries come from different alphabet sizes (since the $|\mathcal S_{u_i,m}|$ are different), whereas in \cite{ErgunJS10,JayramW13} all the entries come from the same alphabet.
}
\end{remark}

\section*{Acknowledgments}
Initially the authors were focused on proving optimal lower bounds for samplers, but we thank Vasileios Nakos for pointing out that our $\ur^\subset$ lower bound immediately implies a tight lower bound for finding a duplicate in data streams as well. Also, initially our proof of Lemma~\ref{lem:information} incurred an additive $1$ in the numerator of the right hand side of \eqref{eqn:adaptivity}. This is clearly suboptimal for small $I(X; Y)$ (for example, consider $I(X; Y) = 0$, in which case the right hand side should be $\delta$ and not $1/\log(1/\delta)$)). We thank T.S.\ Jayram for pointing out that a slight modification of our proof could actually replace the additive $1$ with the binary entropy function (and also for showing us a different proof of this lemma, which resembles the standard proof of Fano's inequality).

\bibliographystyle{alpha}

\newcommand{\etalchar}[1]{$^{#1}$}

\appendix

\section{Appendix}

\subsection{A tight upper bound for $\randcom^{\rightarrow,pub}_\delta(\ur_k)$}

In \cite[Proposition 1]{JowhariST11} it is shown that $\randcom^{\rightarrow,pub}_\delta(\ur_k) = O(\min\{n,t\log^2 n\})$ for $t = \max\{k,\log(1/\delta)\}$. Here we show that a minor modification of their protocol in fact shows the correct complexity $\randcom^{\rightarrow,pub}_\delta(\ur_k) = O(\min\{n,t\log^2(n/t)\})$, which given our new lower bound, is optimal up to a constant factor for the full range of $n,k,\delta$ as long as $\delta$ is bounded away from $1$.

Recall Alice and Bob receive $x, y\in\{0,1\}^n$, respectively, and share a public random string. Alice must send a single message $M$ to Bob, from which Bob must recover $\min\{k, \|x-y\|_0\}$ indices $i\in[n]$ for which $x_i\neq y_i$. Bob is allowed to fail with probability $\delta$. The fact that $\randcom^{\rightarrow,pub}_\delta(\ur_k) \le n$ is obvious: Alice can simply send the message $M = x$, and Bob can then succeed with failure probability $0$. We thus now show $\randcom^{\rightarrow,pub}_{e^{-ck}}(\ur_k) \le k\log^2(n/k)$ for some constant $c>0$, which completes the proof of the upper bound. We assume $k\le n/2$ (otherwise, Alice sends $x$ explicitly).

As mentioned, the protocol we describe is nearly identical to one in \cite{JowhariST11} (see also \cite{CormodeF14}). We will describe the new protocol here, then point out the two minor modifications that improve the $O(k\log^2 n)$ bound to $O(k\log^2(n/k))$ in Remark~\ref{rem:recov}. We first need the following lemma.

\begin{lemma}\label{lem:sparse-recov}
Let $\F_q$ be a finite field and $n>1$ an integer. Then for any $1\le k\le \frac n2$, there exists $\Pi_k\in \F_q^{m\times n}$ for $m = O(k\log_q(qn/k))$ s.t.\ for any $w\neq w'\in\F_q^n$ with $\|w\|_0, \|w'\|_0 \le k$, $\Pi_k w \neq \Pi_k w'$.
\end{lemma}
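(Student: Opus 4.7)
The plan is to use the probabilistic method, observing first that the conclusion is equivalent to a statement about sparse kernels. Indeed, $\Pi_k w \neq \Pi_k w'$ for all distinct $w, w' \in \F_q^n$ with $\|w\|_0, \|w'\|_0 \le k$ holds if and only if $\Pi_k z \neq 0$ for every nonzero $z \in \F_q^n$ with $\|z\|_0 \le 2k$ (apply $z = w - w'$, noting that the supports of $w$ and $w'$ together have size at most $2k$). This in turn is equivalent to the requirement that every set of $2k$ columns of $\Pi_k$ is linearly independent over $\F_q$.

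First I would draw $\Pi_k$ uniformly at random from $\F_q^{m \times n}$. For any fixed set $S \subseteq [n]$ with $|S| = 2k$, the submatrix $\Pi_k|_S$ is uniformly distributed in $\F_q^{m \times 2k}$. Revealing its columns one at a time and bounding the probability that each new column lies in the span of the previous ones yields
\begin{equation*}
\Pr\bigl(\Pi_k|_S \text{ is not full column rank}\bigr) \le \sum_{j=0}^{2k-1} q^{\,j-m} \le 2\, q^{\,2k-m}.
\end{equation*}

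Then I would apply a union bound over all $\binom{n}{2k} \le (en/(2k))^{2k}$ choices of $S$, giving
\begin{equation*}
\Pr\bigl(\exists\, S,\ \Pi_k|_S \text{ dependent}\bigr) \le 2\,(en/(2k))^{2k}\, q^{\,2k-m}.
\end{equation*}
This is strictly less than $1$ whenever $m \ge 2k + 2k\log_q(en/(2k)) + O(1)$. Since $1 + \log_q(en/(2k)) = \log_q(eqn/(2k)) = \Theta(\log_q(qn/k))$, this is satisfied by some $m = O(k\log_q(qn/k))$, at which point a $\Pi_k$ with the desired property must exist.

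There is no real obstacle here; the only point requiring care is verifying that the algebraic bound $2k(1+\log_q(en/(2k))) + O(1)$ genuinely matches the claimed form $O(k\log_q(qn/k))$ in both regimes (when $n/k$ is large compared to $q$, the $\log_q(n/k)$ term dominates, and when it is small, the additive $1$ inside the logarithm ensures the bound remains valid and nontrivial). The factor of $q$ inside the logarithm is essential: without it, the bound would fail when $n/k < q$, for then $\log_q(n/k) < 1$ while we still need at least $2k$ rows for linear independence to be possible.
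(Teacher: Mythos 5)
Your proof is correct and follows essentially the same probabilistic-method argument as the paper: reduce to the non-existence of a nonzero $2k$-sparse kernel vector for a uniformly random $\Pi_k$, then union bound. The only (immaterial) difference is that you organize the union bound over supports and bound the column-rank-deficiency probability per support, whereas the paper union bounds directly over the $\binom{n}{2k}q^{2k}$ candidate sparse vectors, each killed with probability $q^{-m}$; the resulting condition on $m$ is the same.
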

\begin{proof}
The proof is via the probabilistic method. $\Pi_k w = \Pi_k w'$ iff $\Pi_k (w - w') = 0$. Note $v = w-w'$ has $\|v\|_0 \le 2k$. Thus it suffices to show that such a $\Pi_k$ exists with no $(2k)$-sparse vector in its kernel. The number of vectors $v\in\F_q^n$ with $\|v_0\| \le 2k$ is at most $\binom{n}{2k}\cdot q^{2k}$. For any fixed $v$, $\Pr(\Pi_k v = 0) = q^{-m}$. Thus 
$$\Pr(\exists v, \|v\|_0 \le 2k: \Pi_k v = 0) \le \binom{n}{2k}\cdot q^{2k} \cdot q^{-m}$$ 
by a union bound. The above is strictly less than $1$ for $m > 2k + \log_q\binom{n}{2k}$, yielding the claim.
\end{proof}

\begin{corollary}\label{cor:ksparse}
Let $\F_q$ be a finite field and $n>1$ an integer. Then for any $1\le k\le \frac n2$, there exists $\Pi_k\in \F_q^{m\times n}$ for $m = O(k\log_q(qn/k))$ together with an algorithm $\mathcal{R}$ such that for any $w\in\F_q^n$ with $\|w\|_0 \le k$, $\mathcal{R}(\Pi_k w) = w$.
\end{corollary}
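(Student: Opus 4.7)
The plan is to derive this as an essentially immediate consequence of Lemma~\ref{lem:sparse-recov}, using brute-force decoding. Let $\Pi_k \in \F_q^{m\times n}$ be the matrix guaranteed by Lemma~\ref{lem:sparse-recov}, so $m = O(k\log_q(qn/k))$. Define the decoder $\mathcal{R}$ as follows: on input $y\in\F_q^m$, enumerate over all vectors $v\in\F_q^n$ with $\|v\|_0\le k$ and output any $v$ satisfying $\Pi_k v = y$ (if none exists, output an arbitrary default).

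To verify correctness, fix any $w\in\F_q^n$ with $\|w\|_0\le k$ and let $y = \Pi_k w$. Then $v=w$ is a valid candidate, so $\mathcal{R}(y)$ outputs \emph{some} $k$-sparse $v$ with $\Pi_k v = y$. If there were a different $k$-sparse vector $v\neq w$ with $\Pi_k v = y = \Pi_k w$, this would contradict the conclusion of Lemma~\ref{lem:sparse-recov}. Hence the $k$-sparse preimage of $y$ under $\Pi_k$ is unique, and $\mathcal{R}(y)=w$.

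The main (and only) subtlety worth flagging is that $\mathcal{R}$ as described is not computationally efficient --- it iterates over $\binom{n}{k}(q-1)^k$ candidates --- but the statement only asserts the existence of an algorithm, with no running-time requirement, so this suffices. The bound on $m$ is inherited directly from Lemma~\ref{lem:sparse-recov} with no further loss.
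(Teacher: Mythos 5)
Your proposal is correct and matches the paper's proof: both take the $\Pi_k$ from Lemma~\ref{lem:sparse-recov} and define $\mathcal{R}$ by brute-force enumeration over $k$-sparse vectors, with uniqueness of the $k$-sparse preimage following from the lemma's injectivity guarantee. Your write-up is slightly more explicit about why the returned vector must equal $w$, but the argument is the same.
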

\begin{proof}
Given Lemma~\ref{lem:sparse-recov}, a simple such $\mathcal{R}$ is as follows. Given some $y = \Pi_k w^*$ with $\|w^*\|_0 \le k$, $\mathcal{R}$ loops over all $w$ in $\F_q^n$ with $\|w\|_0 \le k$ and outputs the first one it finds for which $\Pi_k w = y$.
\end{proof}

The protocol for $\ur_k$ is now as follows. Alice and Bob use public randomness to pick commonly known random functions $h_0,\ldots,h_L:[n]\rightarrow\{0,1\}$ for $L = \lfloor\log_2(n/k)\rfloor$, such that for any $i\in[n]$ and for any $j$, $\Pr(h_j(i) = 1) = 2^{-j}$. They also agree on a matrix $\Pi_{16k}$ and $\mathcal{R}$ as described in Corollary~\ref{cor:ksparse} for a sufficiently large constant $C>0$ to be determined later, with $q = 3$. Thus $\Pi_{16k}$ has $m = O(k\log(n/k))$ rows. Alice then computes $v_j = \Pi_{16k} x|_{h_j^{-1}(1)}$ for $j=0,\ldots,L$ where $v_j\in\F_q^m$, and her message to Bob is $M = (v_0,\ldots,v_L)$. For $S\subseteq [n]$ and $x$ an $n$-dimensional vector, $x|_S$ denotes the $n$-dimensional vector with $(x|_S)_i = x_i$ for $i\in S$, and $(x|_S)_i = 0$ for $i\notin S$. Note Alice's message $M$ is $O(k\log^2(n/k))$ bits, as desired. Bob then executes the following algorithm and outputs the returned values.

\begin{algorithm}[H] 
  \caption{Bob's algorithm in the $\ur_k$ protocol.} \label{algo:bob-protocol}
  \begin{algorithmic}[1]
    \Procedure{Bob}{$v_0,\ldots,v_L$}
    \For {$j=L,L-1,\ldots,0$}
      \State $v_j \leftarrow v_j - \Pi_{16k} y|_{h_j^{-1}(1)}$
      \State $w_j\leftarrow \mathcal{R}(v_j)$
      \If {$\|w_j\|_0 \ge k$ or $j=0$}
      \State \Return an arbitrary $\min\{k, \|w_j\|_0\}$ elements from $\supp(w_j)$
      \EndIf
    \EndFor
    \EndProcedure
  \end{algorithmic}
\end{algorithm}

The correctness analysis is then as follows, which is nearly the same as the $\ell_0$-sampler of \cite{JowhariST11}. If Alice's input is $x$ and Bob's is $y$, let $a = x-y \in \{-1,0,1\}^n$, so that $a$ can be viewed as an element of $\F_3^n$. Also let $a_j = a|_{h_j^{-1}(1)}$. Then $\E \|v_j\|_0 = \|a\|_0\cdot 2^{-j}$, and since $0\le \|a\|_0 \le n$, there either (1) exists a unique $0\le j^*\le L$ such that $2k\le \E\|a_j\|_0\cdot 2^{-j^*}< 4k$, or (2) $\|a\|_0 < 2k$ (in which case we define $j^* = 0$). Let $\mathcal{E}$ be the event that $\|a_j\|_0 \le 16k$ simultaneously for all $j\le j^*$. Let $\mathcal{F}$ be the event that {\it either} we are in case (2), or we are in case (1) and $\|a_{j^*}\|_0 \ge k$ holds. Note that conditioned on $\mathcal{E}, \mathcal{F}$ both occurring, Bob succeeds by Corollary~\ref{cor:ksparse}.

We now just need to show $\Pr(\neg\mathcal{E} \wedge \neg\mathcal{F}) < e^{-\Omega(k)}$. We use the union bound. First, consider $\mathcal{F}$. If $j^* = 0$, then $\Pr(\neg\mathcal{F}) = 0$. If $j^*\neq 0$, then $\Pr(\neg\mathcal{F}) \le \Pr(\|a_{j^*}\|_0 < \frac 12 \cdot\E\|a_{j^*}\|_0)$, which is $e^{-\Omega(k)}$ by the Chernoff bound since $\E\|a_{j^*}\|_0 = \Theta(k)$. Next we bound $\Pr(\neg \mathcal{E})$. For $j\ge j^*$, we know $\E\|a_j\|_0 \le 4k/2^{j-j^*}$. Thus, letting $\mu$ denote $\E\|a_j\|_0$, 
\begin{equation}
\Pr(\|a_j\|_0 > 16k) < \left(\frac{e^{\frac{16k}{\mu} - 1}}{(\frac{16k}{\mu})^{\frac{16k}{\mu}}}\right)^\mu < \left(\frac{16k}{\mu}\right)^{-\Omega(k)} < (e^{-Ck})^{j-j^*}\label{eqn:geometric}
\end{equation}
for some constant $C>0$ by the Chernoff bound and the fact that $16k/\mu \ge 4 > e$. Recall that the Chernoff bound states that for $X$ a sum of independent Bernoullis,
$$
\forall \delta > 0,\ \Pr(X > (1+\delta) \E X) < \left(\frac{e^\delta}{(1+\delta)^{1+\delta}}\right)^{\E X} .
$$
Then by a union bound over $j\ge j^*$ and applying \eqref{eqn:geometric},
$$
\Pr(\neg \mathcal{E}) = \Pr(\exists j\ge j^*: \|a_j\|_0 > 16k) < \sum_{j=j^*}^\infty (e^{-Ck})^{j-j^*} = O(e^{-Ck}) .
$$

\begin{remark}\label{rem:recov}
\textup{
As already mentioned, the protocol given above and the one described in \cite{JowhariST11} using $O(k\log^2 n)$ bits differ in minor points. First: the protocol there used $\lfloor\log_2 n\rfloor$ different hash functions $h_j$, but as seen above, only $\lfloor \log_2(n/k)\rfloor$ are needed. This already improves one $\log n$ factor to $\log(n/k)$. The other improvement comes from replacing the $k$-sparse recovery structure with $2k$ rows used in \cite{JowhariST11} with our Corollary~\ref{cor:ksparse}. Note the matrix $\Pi_k$ in our corollary has even {\it more} rows, but the key point is that the bit complexity is improved. Whereas using a $k$-sparse recovery scheme as described in \cite{JowhariST11} would use $2k$ linear measurements of a $k$-sparse vector $w\in\{-1,0,1\}^n$ with $\log n$ bits per measurement (for a total of $O(k\log n)$ bits), we use $O(k\log(n/k))$ measurements with only $O(1)$ bits per measurement. The key insight is that we can work over $\F_3^n$ instead of $\R^n$ when the entries of $w$ are in $\{-1,0,1\}$, which leads to our slight improvement.
}
\end{remark}

\subsection{Proof of the existence of the desired $\mathcal S_{u,m}$}\label{sec:code}
\noindent \textbf{Lemma~\ref{lem:code} (restated).}
For any integers $u\ge 1$ and $1\le m\le u/(4e)$, there exists a collection $\mathcal S_{u,m} \subset \binom{[u]}m$ with $\log |\mathcal{S}_{u,m}| = \Theta(m\log(u/m))$ such that for all $S\neq S'\in \mathcal S_{u,m}$, $|S\cap S'| < m/2$.
\begin{proof}
The proof is via the probabilistic method. We pick $S_1,\ldots,S_N$ independently, each one uniformly at random from $\binom{[u]}m$. Fix $i\neq j\in[N]$. Imagine $S_i$ being fixed and picking the $m$ elements of $S_j$ one by one. Let $X_k$ denote the indicator random variable for the event that the $k$th element picked is also in $S_i$. Then $|S_i\cap S_j| = \sum_{k=1}^m X_k$, and we set $\mu:= \E |S_i\cap S_j|$, which is $m^2/u$ by linearity of expectation. We have $\Pr(|S_i \cap S_j| \ge m/2) = \Pr(|S_i\cap S_j| \ge (1+\delta)\mu)$ for $\delta = u/(2m) - 1$. The $X_k$ are not independent, but they are negatively dependent. Thus the Chernoff bound yields
$$
\Pr(|S_i\cap S_j| \ge (1+\delta)\mu) \le \left(\frac{e^{\delta}}{(1+\delta)^{1+\delta}}\right)^\mu \le \left(\frac{e^{\frac u{2m} - 1}}{(\frac u{2m})^{\frac u{2m}}}\right)^{m^2/u} \le \left(\frac u{2em}\right)^{-\frac m2} .
$$
Setting $N = \sqrt{(u/(2em))^{m/2} - 1}$ so that ${N \choose 2}\leq N^2=(u/(2em))^{m/2} - 1$, by a union bound with positive probability $|S_i\cap S_j| < m/2$ for all $i\neq j$, simultaneously, as desired. Note for this choice of $N$, we have $\log|\mathcal S_{u,m}| = \log N = \Theta(m\log(u/m))$.
\end{proof}

\end{document}